\providecommand{\dd}{\mathrm{d}} \providecommand{\tcz}{\diamond_0}
\providecommand{\tcB}{\diamond_{\hbar}^B}
\providecommand{\order}{\mathcal{O}}
\def\A{{\mathcal A}}
\def\k{\kappa}
\def\o{\omega}
\def\O{\Omega}
\def\th{\theta}
\def\Or{\mathcal O}
\def\N{\mathbb{N}}
\def\R{\mathbb{R}}
\def\T{\mathbb{T}}
\def\X{\mathscr X}
\def\QQ{\mathbf Q}
\def\C{\mathbb{C}}
\def\H{\mathcal H}
\def\Q{\mathcal Q}
\def\S{\mathcal S}
\def\U{\mathcal U}
\def\CCC{{\mathfrak C}}
\def\Rep{\mathfrak{Rep}}
\def\de{\mathrm{d}}
\def\I{{\rm 1\kern-.26em I}}
\def\BB{\mathfrak B}
\def\1{\mathfrak{1}}
\def\0{\mathfrak{0}}
 \def\hb{\hbar}
\def\Op{\mathfrak{Op}}
\providecommand{\CCC}{\mathfrak{C}}
 \providecommand{\abs}[1]{\left
| #1 \right |} \providecommand{\sabs}[1]{| #1 \vert}
\providecommand{\babs}[1]{\bigl | #1 \bigr |}
\providecommand{\Babs}[1]{\Bigl | #1 \Bigr |}
\providecommand{\norm}[1]{\left \Vert #1 \right \Vert}
\providecommand{\snorm}[1]{\Vert #1 \Vert}
\providecommand{\bnorm}[1]{\bigl \Vert #1 \bigr \Vert}
\providecommand{\Bnorm}[1]{\Bigl \lVert #1 \Bigr \rVert}
\providecommand{\expval}[1]{\left \langle #1 \right \rangle}
\providecommand{\ie}{i.~e.}
\begin{document}

\title{Magnetic twisted actions on general abelian $C^*$-algebras}

\author{Fabian Belmonte         \and
         Max Lein             \and
     Marius M\u antoiu
}


\institute{Fabian Belmonte \at
              Departamento de Matem\'aticas, Universidad de Chile, Las Palmeras 3425, Casilla 653,
Santiago, Chile \\
              \email{fabianbelmonte@gmail.com}           
           \and
           Max Lein \at
             Zentrum Mathematik, Technische Universität München, Boltzmannstrasse 3, D-85748 Garching,
             Germany\\
             \email{lein@ma.tum.de}           
           \and
            Marius M\u antoiu \at
              Departamento de Matem\'aticas, Universidad de Chile, Las Palmeras 3425, Casilla 653,
Santiago, Chile \\
              Tel.: +56-02-9787300\\
              \email{mantoiu@uchile.cl}           
}

\date{Received: date / Accepted: date}

\maketitle

\begin{abstract}
We introduce magnetic twisted actions of $\X=\R^n$ on general
abelian $C^*$-algebras and study the associated twisted crossed
product and pseudodifferential algebras in the framework of strict
deformation quantization.\\

\keywords{Pseudodifferential calculus \and Poisson algebra \and
Crossed product\and Magnetic field \and Dynamical system}

\subclass{MSC 17B63 \and 47L65 \and 35S05 \and 47L90}
\end{abstract}

\section{Introduction}
\label{duci} The usual pseudodifferential calculus in phase space
$\Xi:=T^*\R^n$ is connected to crossed product $C^*$-algebras $\A
\rtimes_\th \X$ associated to the action by translations $\th$ of
the group $\X:=\R^n$ on an abelian $C^*$-algebra $\A$ composed of
functions defined on $\X$. Such a formalism has been used in the
quantization of a physical system composed of a spin-less particle
moving in $\X$, where the operators acting on $L^2(\X)$ can be
decomposed into the building block observables position and
momentum which are associated to $\X$ and its dual $\X^*$. When
dealing with Hamiltonian operators, the algebra $\A$ encapsulates
properties of electric potentials, for instance.

During the last decade, it was shown how to incorporate correctly
a variable magnetic field in the picture, cf.
\cite{Mu,KO1,KO2,MP1,MP3,MPR1,MPR2,IMP1,IMP2} (see also
\cite{BB1,BB2,BB3} for extensions involving nilpotent groups).
This relies on twisting both the pseudodifferential calculus and
the crossed product algebras by a $2$-cocycle defined on the group
$\X$ and taking values in the (Polish, non-locally compact group)
$\mathcal U(\A)$ of unitary elements of the algebra $\A$. This
$2$-cocycle is given by imaginary exponentials of the magnetic
flux through triangles. The resulting gauge-covariant formalism
has position and \emph{kinetic} momentum as its basic observables.
The latter no longer commute amongst each other due to the
presence of the magnetic field. It was shown in \cite{MP2} that
the family of twisted crossed products indexed by $\hbar \in
(0,1]$ can be understood as a strict deformation quantization (in
the sense of Marc Rieffel) of a natural Poisson algebra defined by
a symplectic form which is the sum of the canonical symplectic
form in $\Xi$ and a magnetic contribution.

A natural question is what happens when the algebra $\A$ (composed
of functions defined on $\X$) is replaced by a general abelian
$C^*$-algebra. By Gelfand theory this one is isomorphic to
$C_0(\O)$, the $C^*$-algebra of all the complex continuous
functions vanishing at infinity defined on the locally compact
space $\O$. To define crossed products and pseudodifferential
operators we also need a continuous action $\th$ of $\X$ on $\O$
by homeomorphisms. $C_0(\O)$ can be seen as a $C^*$-algebra of
functions on $\X$ exactly when $\O$ happens to have a
distinguished dense orbit. In the general case, the twisting
ingredient will be ``a general magnetic field'', i.e. a continuous
family $B$ of magnetic fields indexed by the points of $\O$ and
satisfying an equivariance condition with respect to the action
$\th$.

The purpose of this article is to investigate the emerging
formalism, both classical and quantal.

To the quadruplet $(\O,\th,B,\X)$ described above we first assign
in Section \ref{secintro} a Poisson algebra that is the setting
for classical mechanics. The Poisson bracket is written with
derivatives defined by the abstract action $\th$ and it also
contains the magnetic field $B$. Since $\O$ does not have the
structure of a manifold, this Poisson algebra does not live on a
Poisson manifold, let alone a symplectic manifold (as it is the
case when a dense orbit exists). But it admits symplectic
representations and, at least in the free action case,
$\O\times\X^*$ is a Poisson space \cite{La} in which symplectic
manifolds (the orbits of the action raised to the phase-space
$\Xi$) are only glued together continuously.

Twisted crossed product $C^*$-algebras are available in a great
generality \cite{PR1,PR2}. We use them in Section \ref{clll} to
define algebras of quantum observables with magnetic fields. By a
partial Fourier transformation they can be rewritten as algebras
of generalized magnetic pseudodifferential symbols. The outcome
has some common points with Rieffel's pseudodifferential calculus
\cite{Rie1}, which starts from an action of $\R^N$ on a
$C^*$-algebra. In our case this algebra is abelian and the action
has a somehow restricted form; on the other hand the magnetic
twisting cannot be covered by Rieffel's formalism. We also study
Hilbert-space representations of the algebras of symbols. Their
interpretation as equivariant families of usual magnetic
pseudodifferential operators with anisotropic coefficients
\cite{LMR} is available. This will be developed in a forthcoming
article and applied to spectral analysis of deterministic and
random magnetic quantum Hamiltonians.

Section 4 is dedicated to a development of the magnetic
composition law involving Planck's constant. The first and second
terms are written using the classical Poisson algebra conterpart.
We insist on reminder estimates valid in the relevant $C^*$-norms.

All these are used in Section 5 to show that the quantum formalism
converges to the classical one when Planck's constant $\hb$
converges to zero, in the sense of strict deformation quantization
\cite{Rie1,Rie2,La,La1}. The semiclassical limit of dynamics
\cite{La,Rie3} generated by generalized magnetic Hamiltonians
will be studied elsewhere.

An appendix is devoted to some technical results about the
behavior of the magnetic flux through triangles. These results are
used in the main body of the text.

\section{Classical}\label{secintro}

\subsection{Actions}\label{dinas}

Let $\A$ denote an abelian $C^*$-algebra. By Gelfand theory, this algebra is isomorphic to the algebra
$C_0(\O)$ of continuous functions vanishing at infinity on some locally compact (Hausdorff) topological space $\O$,
and we shall treat this isomorphism as an identification. Furthermore, we shall always assume that $\A$ is endowed with a
continuous action $\th$ of the group $\X:=\R^n$ by automorphisms: For any $x,y \in \X$ and $\varphi\in \A$,
\begin{align*}
\theta_0 [\varphi] = \varphi ,
\qquad
\theta_x \bigl [ \theta_y [\varphi] \bigr ] = \theta_{x+y}[\varphi]
\end{align*}
and the map $\X \ni x \mapsto \theta_x[\varphi] \in \A$ is continuous for any $\varphi\in\A$. The triple $(\A,\th,\X)$
is usually called an (abelian) $\X$-{\it algebra}.

Equivalently, we can assume that the spectrum $\O$ of $\A$ is endowed with a continuous action of $\X$ by
homeomorphisms, which with abuse of notation will also be denoted by $\th$. In other words, $(\O,\th,\X)$ is a locally compact
dynamical system. We shall use all of the notations $\th(\o,x)=\th_x[\o]=\th_{\o}(x)$
for $(\o,x)\in\O\times\X$ and choose the convention
$\bigl ( \theta_x[\varphi] \bigr )(\omega) = \varphi \bigl ( \theta_x[\omega] \bigr )$ to connect the two actions.

An important, but very particular family of examples of $\X$-algebras is constructed using functions on $\X$. We denote by
$BC(\X)$ the $C^*$-algebra of all bounded, continuous functions $\phi:\X \longrightarrow \C$. Let $\tau$ denote the action
of the locally compact group $\X = \R^n$ on itself, {\it i.e.}~for any $x,y \in \X$ we set $\tau(x,y)=\tau_x[y]:=y+x$.
This notation is also used for the action of $\X$ on $BC(\X)$ given by $\tau_x[\varphi](y):= \varphi(y + x)$. The action is
continuous only on $BC_{{\rm u}}(\X)$, the $C^*$-subalgebra composed of bounded and uniformly continuous functions.
Any $C^*$-subalgebra of $BC_{{\rm u}}(\X)$ which is invariant under translations is an $\X$-algebra.
Motivated by the above examples, we define
$
BC(\O) :=\{\varphi:\O\rightarrow\mathbb C\mid f\ {\rm is\ bounded\ and\ continuous}\}
$
and
\begin{align*}
    \mathcal{B} \equiv BC_{{\rm u}}(\O) := \bigl \{ \varphi\in BC(\O) \mid \X \ni x \mapsto \theta_x[\varphi] \in BC(\O)
    \mbox{ is continuous} \bigr \} .
\end{align*}
%

\medskip
By a \emph{$\X$-morphism} we denote either a continuous map between  the underlying spaces of two dynamical systems
which intertwines the respective actions, or a morphism between two $\X$-algebras which also intertwines their
respective actions.

Let us recall some definitions related to the dynamical system $(\O,\th,\X)$. For any $\o\in \O$ we set
$\Or_\o := \bigl \{ \theta_x[\o] \mid x\in\X \bigr \}$ for \emph{the orbit of} $\o$ and $\Q_\o:=\overline{\Or_\o}$ for
\emph{the
quasi-orbit of} $\o$, which is the closure of $\Or_\o$ in $\O$. We shall denote by $\mathbf O(\O)\equiv \mathbf
O(\O,\th,\X)$ the set of orbits of $(\O,\th,\X)$ and by $\QQ(\O)\equiv \QQ(\O,\th,\X)$ the set of quasi-orbits of
$(\O,\th,\X)$. For fixed $\o \in \O$, $\varphi \in C_0(\O)$ and $x \in \X$, we set
$\varphi_\o(x):= \varphi(\theta_x[\o])\equiv \varphi\big(\th_\o(x)\big)$.
It is easily seen that $\varphi_\o: \X\to \C$ belongs to $BC_u(\X)$. Furthermore, the $C^*$-algebra
$$
\A_{\o} := \bigl \{\varphi_\o\mid \varphi \in C_0(\O) \bigr \} = \th_{\o}[C_0(\O)]
$$
is isomorphic to the $C^*$-algebra $C_0(\Q_\o)$ obtained by restricting the elements of $C_0(\O)$ to the closed invariant subset
$\Q_\o$. Then, one clearly obtains that
\begin{equation}\label{titi}
\th_\o: C_0(\O)\ni \varphi \mapsto \varphi_\o=\varphi\circ\th_\o \in BC_u(\X)
\end{equation}
is a $\X$-morphism between $\big(C_0(\O),\th,\X\big)$ and $\big(BC_u(\X),\tau,\X\big)$ which induces a
$\X$-isomorphism between $\big(C_0(\Q_\o),\th,\X\big)$ and $\big(\A_\o,\tau,\X\big)$.

We recall that the dynamical system is \emph{topologically transitive} if an orbit is dense, or equivalently if
$\O\in\QQ(\O)$. This happens exactly when the morphism (\ref{titi}) is injective for some $\o$. The dynamical system
$(\O,\th,\X)$ is \emph{minimal} if all the orbits are dense, {\it i.e.}~$\QQ(\O)=\{\O\}$. This property is also
equivalent to the fact that the only closed invariant subsets are $\emptyset$ and $\O$.

\begin{definition}\label{smooth}
Let ($\A,\th,\X)$ be an $\X$-algebra. We define the spaces of smooth vectors
\begin{align*}
\A^\infty:=\{\varphi\in\A\mid \X\ni x\mapsto\th_x(\varphi)\in\A\ {\rm is}\ C^\infty\}.
\end{align*}
\end{definition}

For the $\X$-algebras $C_0(\O)$ and $BC_{{\rm u}}(\O)$ we will often use the notations $ C_0^\infty(\O)$, respectively.
Despite these notations, we stress that in general $\Omega$ is not a manifold; the notion of differentiability is defined only
along orbits. By setting for any $\alpha\in\N^n$
\begin{equation*}
\delta^\alpha : C_0^\infty(\O)\rightarrow C_0^\infty(\O),\ \ \ \ \
\delta^\alpha\varphi:=\partial^\alpha_x\left(\varphi\circ\th_x\right)|_{x=0},
\end{equation*}
one defines a Fr\'echet structure on $C_0^\infty(\O)$ by the semi-norms
\begin{equation*}
s^\alpha(\varphi):= \bnorm{\delta^\alpha\varphi}_{C_0(\O)} = \sup_{\o \in \O} \babs{(\delta^\alpha \varphi)(\o)}.
\end{equation*}
Each of the two spaces, $C_0^\infty(\O)$ and $\A_\o^\infty$, is a dense Fr\'echet $^*$-subalgebra of the corresponding
$C^*$-algebra.

\begin{lemma}\label{rifi}
\begin{enumerate}[(i)]
\item For each $\o\in\O$ one has
$$
\A_\o^\infty= \bigl \{\phi\in C^\infty(\X)\mid\partial^\beta\phi\in\A_{\o},\ \forall \beta\in\mathbb N^n \bigr \}.
$$
In particular $\,\A_\o^\infty\subset BC^\infty(\X):= \bigl \{ \phi\in C^\infty(\X)\mid \partial^\beta\phi\ {\rm is\ bounded}\,\
\forall \beta\in \N^n \bigr \}$.
\item Let $\varphi\in C_0(\O)$. Then
$$
\varphi\in C_0^\infty(\O)\ \Longleftrightarrow\ \varphi\circ\th_\o\in\A^\infty_{\o},\ \forall\o\in\O.
$$
\end{enumerate}
\end{lemma}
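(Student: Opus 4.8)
The plan is to derive both parts from the standard description of the smooth vectors of the translation system $\bigl(BC_u(\X),\tau,\X\bigr)$, combined with the good behaviour of smooth vectors under restriction to closed invariant subalgebras.

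\smallskip

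\emph{Part (i).} First I would record that $\phi\in BC_u(\X)$ is a smooth vector for $\tau$ precisely when $\phi\in C^\infty(\X)$ with all $\partial^\beta\phi$ bounded: one implication follows by identifying, inductively, the sup-norm derivatives of $x\mapsto\tau_x[\phi]$ with the pointwise ones (uniform convergence of the difference quotients forces pointwise convergence); the other by Taylor's formula with integral remainder, whose size is controlled uniformly by the numbers $\snorm{\partial^\beta\phi}_\infty$ --- and boundedness of the first-order derivatives already forces uniform continuity, so $BC^\infty(\X)\subseteq BC_u(\X)$ with no extra hypothesis. Since $\A_\o$ is a $\tau$-invariant \emph{closed} $^*$-subalgebra of $BC_u(\X)$, a vector of $\A_\o$ is smooth for the restricted action iff it is smooth for $\tau$: the relevant difference quotients lie in $\A_\o$, and so do their sup-norm limits. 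Hence $\A_\o^\infty=\A_\o\cap BC^\infty(\X)$, and I would then check this coincides with $\{\phi\in C^\infty(\X)\mid\partial^\beta\phi\in\A_\o,\ \forall\beta\in\N^n\}$: ``$\subseteq$'' because each $\partial^\beta\phi$ is an iterated sup-norm limit of elements of the closed set $\A_\o$; ``$\supseteq$'' because $\partial^\beta\phi\in\A_\o\subseteq BC_u(\X)$ forces all derivatives bounded. The asserted inclusion in $BC^\infty(\X)$ is then immediate.

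\smallskip

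\emph{Part (ii), the implication ``$\Rightarrow$''.} This is a formal consequence of (\ref{titi}): $\th_\o$ is a $^*$-morphism, hence linear and contractive, and it intertwines $\th$ with $\tau$; therefore $x\mapsto\tau_x[\varphi_\o]=\th_\o\bigl[\th_x[\varphi]\bigr]$ is $C^\infty$ into $BC_u(\X)$ with range in the closed subspace $\A_\o$, i.e.\ $C^\infty$ into $\A_\o$, so $\varphi_\o\in\A_\o^\infty$ for every $\o$.

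\smallskip

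\emph{Part (ii), the implication ``$\Leftarrow$''.} The idea is to construct the iterated derivatives of $x\mapsto\th_x[\varphi]$ one order at a time. Since $\th_{x+y}=\th_x\circ\th_y$ and each $\th_x$ is an isometry of $C_0(\O)$, it is enough to show that, under the hypothesis, for each $j$ the limit $\delta_j\varphi:=\lim_{h\to0}h^{-1}\bigl(\th_{he_j}[\varphi]-\varphi\bigr)$ exists in the $C_0(\O)$-norm; then $\delta_j\varphi\in C_0(\O)$ by completeness, $(\delta_j\varphi)_\o=\partial_j\varphi_\o$, which by part (i) again satisfies the hypothesis, and iterating gives $\varphi\in C_0^\infty(\O)$. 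Pointwise this is routine: since $\varphi_\o\in C^\infty(\X)$, the fundamental theorem of calculus yields
\begin{equation*}
\Bigl(h^{-1}\bigl(\th_{he_j}[\varphi]-\varphi\bigr)\Bigr)(\o)=\frac1h\int_0^h(\partial_j\varphi_\o)(se_j)\,\de s ,
\end{equation*}
which tends to $(\partial_j\varphi_\o)(0)$ as $h\to0$ with error at most $\sup_{|s|\le|h|}\babs{(\partial_j\varphi_{\th_{se_j}[\o]})(0)-(\partial_j\varphi_\o)(0)}$ (using $(\partial_j\varphi_\o)(se_j)=(\partial_j\varphi_{\th_{se_j}[\o]})(0)$). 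The whole difficulty is then concentrated in a single point, which I expect to be the main obstacle: the hypothesis controls the function $\o\mapsto(\partial_j\varphi_\o)(0)$ only \emph{along each orbit} (on $\Or_\o$, hence on $\Q_\o$, it agrees with some element of $C_0(\O)$ that may depend on $\o$), and one must upgrade this to the assertions that $\o\mapsto(\partial_j\varphi_\o)(0)$ is a genuine element of $C_0(\O)$ and that the displayed error is small \emph{uniformly in $\o$}. Carrying out this uniform, global passage is where the structure of $(\O,\th,\X)$ has to be brought in; as a revealing special case, when the system is topologically transitive a single point $\o_0$ of a dense orbit already suffices, since (\ref{titi}) then identifies $\bigl(C_0(\O),\th,\X\bigr)$ isometrically with $\bigl(\A_{\o_0},\tau,\X\bigr)$ and part (i) applies directly. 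Once $\o\mapsto(\partial_j\varphi_\o)(0)$ is known to lie in $C_0(\O)$ and the convergence is uniform, the induction and the closedness of $C_0(\O)$ finish the proof.
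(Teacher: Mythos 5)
Parts (i) and the forward implication of (ii) in your proposal are correct, and they follow essentially the route the paper intends: the only non-routine point is the identification of pointwise with norm derivatives, which you handle exactly as the paper does (Taylor/Fundamental Theorem of Calculus with the integral remainder controlled by the sup-norms of the bounded higher derivatives), together with the observation that difference quotients of elements of the closed, translation-invariant subalgebra $\A_\o$ stay in $\A_\o$ and hence so do their norm limits.

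The gap is in the reverse implication of (ii), and you have located it precisely rather than resolved it: the hypothesis $\varphi_\o\in\A_\o^\infty$ for every $\o$ controls $\o\mapsto(\partial_j\varphi_\o)(0)$ and the error in the difference quotient only \emph{orbit by orbit}, and nothing in your argument upgrades this to convergence in the norm of $C_0(\O)$. This is not a removable technicality: the step genuinely fails without an extra uniformity assumption. Take $\X=\R$, $\O=\T\times K$ with $K=\{0\}\cup\{1/n\mid n\ge 1\}$, $\th_x(\omega,t)=(\omega+tx,t)$, and $\varphi(\omega,1/n)=\tfrac1n\sin(2\pi n^2\omega)$, $\varphi(\cdot,0)=0$. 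Then $\varphi\in C(\O)$ and every $\varphi_\o$ is smooth with all derivatives in $\A_\o$ (each level $\T\times\{1/n\}$ is clopen, so each derivative, which lives on one level, extends by zero to an element of $C(\O)$), yet $(\delta\varphi)(\omega,1/n)=2\pi\cos(2\pi n^2\omega)$ does not tend to $(\delta\varphi)(\omega,0)=0$, so $\delta\varphi\notin C(\O)$ and $\varphi\notin C_0^\infty(\O)$; the culprit is that $\sup_\o\snorm{\partial^2\varphi_\o}_\infty=\infty$. For comparison, the paper's own one-line proof invokes the FTC ``using the higher order derivatives, which are assumed to be bounded'', which silently requires exactly the bound uniform in $\o$ that you could not supply. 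So your proposal is incomplete as a proof, but the obstacle you flagged is real: closing the equivalence requires the derivative bounds (equivalently, membership of $\o\mapsto(\partial^\beta\varphi_\o)(0)$ in $C_0(\O)$) as an additional hypothesis, or a restriction such as topological transitivity where, as you note, the single morphism (\ref{titi}) already does the job.
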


\begin{proof}
The proof consists in some routine manipulations of the definitions. The only slightly non-trivial fact is to show that
point-wise derivations are equivalent to the uniform ones, required by the uniform norms. This follows from the
Fundamental Theorem of Calculus, using the higher order derivatives, which are assumed to be bounded.
A model for such a standard argument is the proof of Lemma 2.7 in \cite{LMR}.
\end{proof}
\begin{remark}
In the following, we will use repeatedly and without further comment the identification of point-wise and uniform derivatives
under the assumption that higher-order point-wise derivatives exist and are bounded.
\end{remark}
Although in our setting the classical observables are functions defined on $\O\times\X^*$, we are going to relate them to
functions on \emph{phase space} $\Xi:=\X\times\X^*$ whose points are denoted by capital letters $X=(x,\xi),\,Y=(y,\eta),
\,Z=(z,\zeta)$. The dual space $\X^*$ also acts on itself by translations: $\tau^*_\eta(\xi):=\xi+\eta$, and this action
is raised to various function spaces as above. Similarly, phase space $\Xi$ can also be regarded as a group acting on
itself by translations,  $(\tau\otimes\tau^*)_{(y,\eta)}(x,\xi):=(x+y,\xi+\eta)$. Phase space $\Xi$ acts on
$\Omega \times \X^*$ as well, via the action $\theta \otimes \tau^*$, and this defines naturally function spaces on
$\Omega\times\X^*$ as above; they will be used without further comment.

\subsection{Cocycles and magnetic fields}\label{coci}

We first recall the definition of a $2$-cocycle $\k$ on the abelian algebra $\A=C_0(\O)$ endowed with an action $\theta$
of $\X$. We mention that the group $\U \bigl ( \A \bigr )$ of unitary elements of the unital $C^*$-algebra $BC(\O)$
coincides with $C(\O;\T):= \bigl \{\varphi\in C(\O)\mid |\varphi(\o)|=1,\ \forall \o\in\O \bigr \}$,
on which we consider the topology of uniform convergence on compact sets.

\begin{definition}\label{canon}
{\rm A normalized $2$-cocycle
on} $\A$ is a continuous map $\k:\X \times \X \to \U(\A)$ satisfying for all $x,y,z \in \X$:
\begin{equation}\label{2coc}
\kappa(x+y,z)\;\!\kappa(x,y) = \th_x[\kappa(y,z)]\;\!\kappa(x,y+z)
\end{equation}
and $\kappa(x,0) = \kappa(0,x)=1$.
\end{definition}

\begin{proposition}\label{zisan}
If $\k:\X\times\X\rightarrow C(\O;\T)$ is a $2$-cocycle of $C_0(\O)$ then for any $\o\in\O$,
$\k_\o(\cdot,\cdot):=\k(\cdot,\cdot)\circ\th_\o$ is a $2$-cocycle of $\A_\o$ with respect to the action $\tau$.
\end{proposition}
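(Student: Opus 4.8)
The plan is to exhibit $\kappa_\o$ as the push-forward of $\kappa$ under the $\X$-morphism $\th_\o$ of \eqref{titi}, and then to invoke the elementary fact that a $\X$-morphism transports normalized $2$-cocycles to normalized $2$-cocycles.

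First I would observe that the formula $\varphi\mapsto\varphi\circ\th_\o$ defining $\th_\o$ in \eqref{titi} makes sense verbatim on all of $BC(\O)$ and yields a unital $^*$-homomorphism $\Theta_\o:BC(\O)\to BC(\X)$ that restricts on $C_0(\O)$ to the $\X$-morphism of \eqref{titi}; consequently it maps the multiplier algebra of $C_0(\O)$ into that of $\A_\o$, and in particular it sends $\U(\A)=C(\O;\T)$ into the unitary group of the multiplier algebra of $\A_\o$, so that $\kappa_\o(x,y)=\Theta_\o\bigl(\kappa(x,y)\bigr)=\kappa(x,y)\circ\th_\o$ is a bona fide unitary for every $x,y\in\X$. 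The one algebraic identity I would record is the equivariance $\Theta_\o\bigl(\th_x[\varphi]\bigr)=\tau_x\bigl[\Theta_\o(\varphi)\bigr]$ for all $x\in\X$ and $\varphi\in BC(\O)$; this is immediate from $\th_x\bigl[\th_w[\o]\bigr]=\th_{x+w}[\o]$ together with the conventions $(\th_x[\varphi])(\cdot)=\varphi(\th_x[\cdot])$ and $\tau_x[\psi](\cdot)=\psi(\cdot+x)$, and is nothing but a restatement of the fact that $\th_\o$ is a $\X$-morphism.

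With these preliminaries the three required properties follow directly. Normalization: since $\Theta_\o$ is unital, $\kappa_\o(x,0)=\Theta_\o\bigl(\kappa(x,0)\bigr)=\Theta_\o(1)=1$, and likewise $\kappa_\o(0,x)=1$. Cocycle relation: apply $\Theta_\o$ to \eqref{2coc}; multiplicativity of $\Theta_\o$ handles both products, and the equivariance turns $\Theta_\o\bigl(\th_x[\kappa(y,z)]\bigr)$ into $\tau_x\bigl[\kappa_\o(y,z)\bigr]$, so that one lands exactly on relation \eqref{2coc} for $\A_\o$ with respect to $\tau$. Continuity: $\th_\o:\X\to\O$ is continuous, hence sends compacta to compacta, so $\Theta_\o$ is continuous for the topologies of uniform convergence on compact sets; composing with the continuous map $\kappa:\X\times\X\to C(\O;\T)$ gives continuity of $(x,y)\mapsto\kappa_\o(x,y)$.

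I do not expect a real obstacle in this proof; the only delicate points are bookkeeping ones: getting the equivariance identity the right way around (no sign or inverse mismatch between the actions $\th$ and $\tau$), and reading ``unitary element'' in the multiplier algebra of the possibly non-unital $\A_\o$, consistently with the convention $\U(\A)=C(\O;\T)$ adopted in Definition \ref{canon}. If one prefers to avoid multiplier algebras, the same conclusion can be reached by checking \eqref{2coc} and the normalization pointwise on $\X$: evaluate each side of the cocycle relation for $\kappa$ at $\th_w[\o]$ for an arbitrary $w\in\X$, which reduces everything to elementary manipulations with the definitions of $\kappa_\o$, $\th$ and $\tau$.
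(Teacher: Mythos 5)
Your proposal is correct and follows essentially the same route as the paper: the paper's proof also rests on the intertwining identity $\theta_x\circ\theta_\o=\th_\o\circ\tau_x$ (your equivariance of $\Theta_\o$ is just its reformulation on functions) and on reducing continuity of $\kappa_\o$ to that of $(\o,x,y)\mapsto\kappa(\o;x,y)$. You merely spell out more carefully where the unitaries live; no discrepancy with the paper's argument.
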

\begin{proof}
Everything is straightforward. To check the $2$-cocycle property,
one uses
$$
\theta_x\circ\theta_\o=\th_\o\circ\tau_x,\ \ \ \ \ x\in\X,\ \o\in\O.
$$
It is easy to show that $\k:\X\times\X\rightarrow C(\O,\mathbb T)$ is continuous iff the function
$$
\O\times\X\times\X\ni(\o,x,y)\mapsto\k(\o;x,y):= \bigl (\k(x,y) \bigr )(\o)\in\T
$$
is continuous. Recalling the isomorphism $\A_\o\cong C(\mathcal Q_\o)$ one easily finishes the proof.
\end{proof}

We shall be interested in {\it magnetic $2$-cocycles}.
\begin{definition}\label{magneti}
We call {\rm magnetic field on} $\O$ a continuous function
$B:\O\rightarrow \bigwedge^2\X$ such that $B_\o:=B\circ\th_\o$ is a magnetic field (continuous closed $2$-form on $\X$)
for any $\o$.
\end{definition}

Using coordinates, $B$ can be seen as an anti-symmetric matrix
$\left(B^{jk}\right)_{j,k}$ where the entries are continuous
functions $B^{jk}:\Omega\rightarrow\R$ satisfying (in the
distributional sense)
$$
\partial_j B_\o^{kl}+\partial_k B_\o^{lj}+\partial_l B_\o^{jk}=0,\ \ \ \ \ \forall\,\o\in\O,\ \,\forall\,j,k,l=1,\dots,n.
$$
\begin{proposition}\label{si}
Let $B$ a magnetic field on $\O$. Set
$$
\bigl (\k^B(x,y) \bigr )(\o)\equiv\k^B(\o;x,y):=\exp \bigl (-i\Gamma^{B_\o}\langle 0,x,x+y\rangle \bigr ) ,
$$
where $\Gamma^{B_\o}\langle a,b,c\rangle:=\int_{\langle a,b,c\rangle}B_\o$ is the integral (flux) of the $2$-form
$B_\o$ through the triangle $\langle a,b,c\rangle$ with corners $a,b,c\in\X$.
Then $\k^B$ is a $2$-cocycle on  $C_0(\O)$.
\end{proposition}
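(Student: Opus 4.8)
The plan is to reduce the cocycle identity for $\k^B$ on $C_0(\O)$ to a pointwise statement: by Proposition~\ref{zisan} (and the continuity criterion established there), it suffices to show that for each fixed $\o\in\O$ the function $\k^B_\o(x,y)=\exp\bigl(-i\Gamma^{B_\o}\langle 0,x,x+y\rangle\bigr)$ is a $2$-cocycle of $\A_\o$ with respect to the translation action $\tau$. Here $B_\o=B\circ\th_\o$ is, by Definition~\ref{magneti}, an honest continuous closed $2$-form on $\X=\R^n$, so this is precisely the classical statement that the magnetic flux through triangles defines a $2$-cocycle on $\R^n$, which is already established in the literature (e.g.\ \cite{MP1,MP2,MPR1}). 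Thus the real content is bookkeeping: checking that the abstract magnetic field data translates, orbit by orbit, into the classical situation, plus verifying continuity and the $\U(\A)$-valuedness.

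Concretely, I would proceed as follows. First, continuity: since $\O\times\X\times\X\ni(\o,x,y)\mapsto\Gamma^{B_\o}\langle 0,x,x+y\rangle\in\R$ is continuous — this is exactly the type of statement the appendix on fluxes through triangles is designed to supply — the composition with $\exp(-i\,\cdot\,)$ is continuous, and by the criterion in the proof of Proposition~\ref{zisan} this gives continuity of $\k^B:\X\times\X\to C(\O;\T)$. That $\k^B(x,y)$ takes values in $\U(\A)=C(\O;\T)$ is immediate since it is a unimodular exponential depending continuously on $\o$. Normalization $\k^B(x,0)=\k^B(0,x)=1$ follows because $\langle 0,x,x\rangle$ and $\langle 0,0,x\rangle$ are degenerate triangles, hence have zero flux.

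For the cocycle identity \eqref{2coc}, I would take the pointwise form: evaluated at $\o$, the left side reads $\exp\bigl(-i\Gamma^{B_\o}\langle 0,x+y,x+y+z\rangle\bigr)\exp\bigl(-i\Gamma^{B_\o}\langle 0,x,x+y\rangle\bigr)$, while the right side, using $\bigl(\th_x[\k^B(y,z)]\bigr)(\o)=\k^B(y,z)(\th_x[\o])=\exp\bigl(-i\Gamma^{B_{\th_x[\o]}}\langle 0,y,y+z\rangle\bigr)$, together with the equivariance $B_{\th_x[\o]}=\tau_x^*B_\o$ (which holds because $B_{\th_x[\o]}=B\circ\th_{\th_x[\o]}=B\circ\th_\o\circ\tau_x$ by the relation $\th_x\circ\th_\o=\th_\o\circ\tau_x$), equals $\exp\bigl(-i\Gamma^{B_\o}\langle x,x+y,x+y+z\rangle\bigr)\exp\bigl(-i\Gamma^{B_\o}\langle 0,x,x+y+z\rangle\bigr)$. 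The identity then reduces to the additivity relation for fluxes of a closed $2$-form over the four triangles forming the boundary of the tetrahedron with vertices $0,x,x+y,x+y+z$, namely
\begin{equation*}
\Gamma^{B_\o}\langle 0,x,x+y\rangle+\Gamma^{B_\o}\langle 0,x+y,x+y+z\rangle=\Gamma^{B_\o}\langle 0,x,x+y+z\rangle+\Gamma^{B_\o}\langle x,x+y,x+y+z\rangle,
\end{equation*}
which is Stokes' theorem applied to $dB_\o=0$ on the solid tetrahedron (here one uses translation covariance of the flux, $\Gamma^{B_\o}\langle a+v,b+v,c+v\rangle=\Gamma^{(\tau_v^*B_\o)}\langle a,b,c\rangle$, to match the translated triangle $\langle y,y+z,\cdot\rangle$ shifted by $x$).

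The main obstacle is not conceptual but technical: $B_\o$ is only assumed continuous, not smooth, so Stokes' theorem and the flux-additivity relation must be justified for closed $2$-forms with merely continuous coefficients (the closedness holding in the distributional sense). This is handled by a standard mollification argument — approximate $B_\o$ by smooth closed forms, apply the classical tetrahedron identity, and pass to the limit using the continuity of the flux functionals — and the relevant estimates are precisely what the appendix on the behavior of the magnetic flux through triangles provides, so I would simply invoke those results rather than redo them here. Everything else is the routine manipulation of definitions already signposted in the proof of Proposition~\ref{zisan}.
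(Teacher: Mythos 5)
Your proposal is correct and follows essentially the same route as the paper: evaluate the cocycle identity pointwise in $\o$, use the translation covariance $\Gamma^{B_{\th_x[\o]}}\langle 0,y,y+z\rangle=\Gamma^{B_{\o}}\langle x,x+y,x+y+z\rangle$ to reduce everything to the tetrahedron flux identity via Stokes' theorem, and obtain continuity from the explicit parametrization of the flux. One minor point: the continuity and the mollification step are not actually supplied by the appendix (whose Lemma concerns the scaled flux $\Lambda^B_\hbar$ under the stronger hypothesis $B^{jk}\in BC^\infty(\O)$); the paper instead writes the parametrization $\Gamma^{B_\o}\langle 0,x,x+y\rangle = \sum_{j,k} x_j y_k \int_0^1 \de t \int_0^1 \de s\, s\, \theta_{sx+sty}[B^{jk}](\o)$ inline, which needs only continuity of $B$.
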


\begin{proof}
The algebraic properties follow from the properties of the integration of $2$-forms. For example, (\ref{2coc}) is a
consequence of the identity
$$
\Gamma^{B_\o}\langle 0,x,x+y\rangle+\Gamma^{B_\o}\langle 0,x+y,x+y+z\rangle=\Gamma^{B_{\th_x[\o]}}\langle 0,y,y+z\rangle+
\Gamma^{B_\o}\langle 0,x,x+y+z\rangle.
$$
This one follows from Stokes' Theorem, after noticing that
\begin{equation}\label{chec}
\Gamma^{B_{\th_x[\o]}}\langle 0,y,y+z\rangle=\Gamma^{B_{\o}}\langle x,x+y,x+y+z\rangle.
\end{equation}
One still has to check that $\k^B\in C(\O\times\X\times\X)$. This reduces to the obvious continuity of
$$
(\o,x,y)\mapsto\Gamma^{B_\o}\langle 0,x,x+y\rangle = \sum_{j,k=1}^n x_j y_k \, \int_0^1 \de t \int_0^1 \de s \, s \,
\theta_{s x + s t y}[B^{jk}](\omega) ,
$$
where we have used a parametrization of the flux involving the components of the magnetic field in the canonical basis of
$\X=\R^n$.
\end{proof}
By (\ref{chec}) one easily sees that $\left(\k^B\right)_\o=\k^{B_\o}$, where the l.h.s. was defined in
Proposition \ref{zisan}, while
$$
\k^{B_\o}(z;x,y):=\exp \bigl (-i\Gamma^{B_\o}\langle z,z+x,z+x+y\rangle \bigr ) .
$$

\subsection{Poisson algebras}\label{defaq}

We intend now to define a Poisson structure (cf. \cite{La,MR}) on spaces of functions that are smooth under the action
$\th\times\tau^*$ of $\Xi$ on $\O\times\X^*$. This Poisson algebras can be represented by families of subalgebras of
$BC^{\infty}(\Xi)$, indexed essentially by the orbits of $\O$, each one endowed with the Poisson structure induced by a magnetic
symplectic form \cite{MP2}. For simplicity, we shall concentrate on a  Poisson
subalgebra consisting of functions which have Schwartz-type
behavior in the variable $\xi\in\X^*$. For this smaller
algebra of functions, we will prove strict deformation quantization in section~\ref{defoq}.
One can also define $C^\infty(\O\times\X^*)$ in terms of the action $\th\otimes\tau^*$; this one is also a Poisson
algebra, but we will not need it here.

When necessary, we shall use $f(\xi)$ as short-hand notation for $f(\cdot,\xi)$,
\ie~$f(\o,\xi) = \bigl ( f(\xi) \bigr )(\o)$ for $(\o,\xi)\in\O\times\X^*$, and we will think of $f(\cdot,\xi)$ as an element
of some algebra of functions on $\O$. Note that
\begin{align*}
    BC^\infty(\O\times\X^*) = \bigl \{ f \in BC(\O\times\X^*) \mid f(\cdot,\xi) \in BC^\infty(\O)\\ {\rm and}\ f(\o,\cdot)\in
    BC^\infty(\X^*),\ \forall\o\in \O,\xi\in\X^* \bigr \} .
\end{align*}
\begin{definition}\label{spaciu}
We say that $f\in BC^{\infty}(\O \times \X^*)$ belongs to $\S \bigl (\X^*;C_0^{\infty}(\O) \bigr )$ if
\begin{enumerate}[(i)]
  \item $\partial^{\beta}_\xi f(\xi)\in C_0^\infty(\O),\ \forall\xi\in\X^*$ and
  \item $\snorm{f}_{a \alpha \beta} := \sup_{\xi \in \X^*} \bnorm{\xi^a \delta^{\alpha} \partial_{\xi}^{\beta}
  f(\xi)}_{C_0(\O)} < \infty$ for all $a , \alpha , \beta \in \N^n$.
\end{enumerate}
\end{definition}

\begin{proposition}\label{pesti}
We assume from now on that $B^{jk} \in BC^\infty(\O)$ for any $j,k=1,\dots,n$.
\begin{enumerate}[(i)]
\item
$BC^\infty(\O\times\X^*)$ is a Poisson algebra under
point-wise multiplication and the Poisson bracket
\begin{equation}\label{oisson}
\{f,g\}_B := \sum_{j=1}^n \bigl ( \partial_{\xi_j} f \, \delta_j g - \delta_j f \, \partial_{\xi_j} g \bigr ) -
\sum_{j,k}B^{jk} \, \partial_{\xi_j} f \, \partial_{\xi_k} g .
\end{equation}
\item
$\mathcal{S} \bigl ( \X^* ; C_0^{\infty}(\Omega) \bigr )$ is a Poisson subalgebra of $BC^{\infty}(\O\times\X^*)$.
\end{enumerate}
\end{proposition}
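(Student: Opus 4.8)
The plan is to verify the two assertions in turn, treating part (i) as the substantive one and part (ii) as a closure check. For part (i), the commutative product is clear, so the work is to establish that $\{\cdot,\cdot\}_B$ is a well-defined, $\R$-bilinear, antisymmetric bracket on $BC^\infty(\O\times\X^*)$ that is a biderivation and satisfies the Jacobi identity. First I would check that $\{f,g\}_B$ lands back in $BC^\infty(\O\times\X^*)$: the derivatives $\partial_{\xi_j}$ commute with the action $\th\otimes\tau^*$ and map smooth bounded functions to smooth bounded functions, the abstract derivations $\delta_j$ preserve $BC^\infty(\O)$ (and hence, applied in the $\o$-slot, preserve $BC^\infty(\O\times\X^*)$), and by the standing assumption of Proposition~\ref{pesti} the coefficients $B^{jk}$ lie in $BC^\infty(\O)$; since $BC^\infty(\O\times\X^*)$ is an algebra under pointwise multiplication, all the products in (\ref{oisson}) stay in the space. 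Antisymmetry is visible from the formula (the first sum is manifestly antisymmetric, and $B^{jk}$ is antisymmetric in $j,k$ so the last term is antisymmetric under $f\leftrightarrow g$). The Leibniz rule in each argument follows because $\partial_{\xi_j}$ and $\delta_j$ are derivations.

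The only genuinely non-trivial point is the Jacobi identity, and this is where I expect the main obstacle to lie. The strategy is to reduce it to two facts: first, that $\partial_{\xi_j}$ and $\delta_k$ are commuting derivations on the relevant algebra (so $[\delta_j,\delta_k]=0$, $[\partial_{\xi_j},\partial_{\xi_k}]=0$, $[\partial_{\xi_j},\delta_k]=0$ — the last because differentiation in $\xi$ and the action-derivative in the $\O$-direction act on independent variables), and second, that the magnetic field is closed, i.e. $\delta_j B^{kl}+\delta_k B^{lj}+\delta_l B^{jk}=0$, which is exactly the closedness hypothesis in Definition~\ref{magneti} re-expressed via the identification of pointwise and action-derivatives (Lemma~\ref{rifi} and the subsequent remark). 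With these in hand, expanding the cyclic sum $\{f,\{g,h\}_B\}_B + \{g,\{h,f\}_B\}_B + \{h,\{f,g\}_B\}_B$ is a finite computation in which the "canonical" part cancels by the standard symplectic Jacobi argument and the terms involving $B^{jk}$ organize, after using the Leibniz rule on $\partial_{\xi_k}(B^{jl}\cdots)$ and relabelling indices, into a sum whose coefficient is the cyclic combination $\delta_j B^{kl}+\delta_k B^{lj}+\delta_l B^{jk}$, which vanishes. Rather than grinding through this, I would invoke the known magnetic-symplectic case: for each $\o$ the bracket $\{\cdot,\cdot\}_{B_\o}$ on $BC^\infty(\Xi)$ is the Poisson bracket of the magnetic symplectic form $\sigma_{B_\o}$, hence satisfies Jacobi (this is \cite{MP2}); evaluating the abstract bracket at $\o$ gives $\bigl(\{f,g\}_B\bigr)(\o,\xi)=\{f_\o,g_\o\}_{B_\o}(0,\xi)$ in the sense that it agrees with the representation through the $\X$-morphism $\th_\o\otimes\mathrm{id}$, and since the Jacobi identity is a pointwise identity in $\o$ and these morphisms are jointly faithful as $\o$ ranges over $\O$, Jacobi for $\{\cdot,\cdot\}_B$ follows from Jacobi for each $\{\cdot,\cdot\}_{B_\o}$.

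For part (ii), I would show that $\S\bigl(\X^*;C_0^\infty(\O)\bigr)$ is a subalgebra under pointwise multiplication and is closed under $\{\cdot,\cdot\}_B$. Stability under multiplication is the usual Schwartz/Leibniz argument: if $f,g$ satisfy the estimates in Definition~\ref{spaciu}(ii), then so does $fg$, because $\xi^a\delta^\alpha\partial_\xi^\beta(fg)$ expands by Leibniz into a finite sum of products $\bigl(\xi^{a'}\delta^{\alpha'}\partial_\xi^{\beta'}f\bigr)\bigl(\xi^{a''}\delta^{\alpha''}\partial_\xi^{\beta''}g\bigr)$ with $a'+a''=a$ etc., each bounded in $C_0(\O)$-norm, and since $C_0(\O)$ is an ideal in $BC(\O)$ the product lies in $C_0^\infty(\O)$ for each fixed $\xi$. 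For the bracket: in the first sum of (\ref{oisson}) each summand is a product of a $\xi$-derivative of one factor (which kills no decay and preserves the Schwartz estimates) with a $\delta_j$-derivative of the other (and $\delta_j$ preserves $C_0^\infty(\O)$ and the seminorms), so these terms lie in $\S\bigl(\X^*;C_0^\infty(\O)\bigr)$; in the magnetic term $B^{jk}\,\partial_{\xi_j}f\,\partial_{\xi_k}g$, the factors $\partial_{\xi_j}f$ and $\partial_{\xi_k}g$ each lie in $\S\bigl(\X^*;C_0^\infty(\O)\bigr)$, their product does too by the subalgebra property just proved, and multiplication by $B^{jk}\in BC^\infty(\O)$ preserves all the estimates (a bounded-smooth multiplier acts continuously on the Schwartz-type seminorms and keeps the values in $C_0^\infty(\O)$). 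Hence $\{f,g\}_B\in\S\bigl(\X^*;C_0^\infty(\O)\bigr)$, completing the proof.
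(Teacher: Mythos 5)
Your proposal is correct, and it fills in considerably more detail than the paper, whose entire proof is the remark that both spaces are stable under pointwise multiplication, under $\partial_\xi$ and $\delta$, and under multiplication by elements of $BC^\infty(\O)$, and that ``the axioms of a Poisson algebra are verified by direct computation.'' Your treatment of stability, antisymmetry, the Leibniz rule, and part (ii) is exactly that direct verification. Where you genuinely diverge is the Jacobi identity: instead of grinding out the cyclic sum and watching the magnetic terms collapse onto $\delta_jB^{kl}+\delta_kB^{lj}+\delta_lB^{jk}=0$ (which is what the paper's ``direct computation'' implicitly covers, and which is the correct translation of the closedness hypothesis of Definition~\ref{magneti} via Lemma~\ref{rifi}), you reduce to the per-orbit magnetic symplectic brackets $\{\cdot,\cdot\}_{B_\o}$ through the maps $\pi_\o=\th_\o\otimes 1$ and use that evaluation at $x=0$ recovers the value at $\o$, so the family $\{\pi_\o\}_{\o\in\O}$ separates points. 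This is a legitimate and arguably more conceptual route --- it makes transparent that Jacobi is exactly the statement $dB_\o=0$ and delegates the symplectic computation to \cite{MP2} --- but note that it leans on the intertwining identity $\pi_\o(\{f,g\}_B)=\{\pi_\o(f),\pi_\o(g)\}_{B_\o}$, which in the paper is Proposition~\ref{furnica}(i) and appears \emph{after} Proposition~\ref{pesti}; since its proof is an independent direct computation, there is no circularity, but in a self-contained write-up you should either prove that identity first or fall back on the direct cyclic-sum computation you sketched.
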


\begin{proof}
The two vector spaces are stable under point-wise multiplication and derivations with respect to $\xi$ and along orbits in
$\O$ via $\partial_\xi$ and $\delta$, respectively. They are also stable under multiplication with elements of
$BC^\infty(\O)$. The axioms of a Poisson algebra are verified by direct computation.
\end{proof}

To analyze the quantum calculus which is to be defined below, a change of realization is useful.
Defining $\mathcal{S} \bigl ( \X ; C_0^{\infty}(\Omega) \bigr )$ as in Definition \ref{spaciu}, but with $\X^*$ replaced with
$\X$, we transport by the partial Fourier transformation the Poisson structure
from $\mathcal{S} \bigl ( \X^* ; C_0^{\infty}(\Omega) \bigr )$ to $\mathcal{S} \bigl ( \X ; C_0^{\infty}(\Omega) \bigr )$
setting
\begin{equation}\label{bunica}
(\Phi \diamond_0 \Psi)(\omega ; x) := ( 1 \otimes \mathcal F )^{-1} \bigl ( ( 1 \otimes \mathcal F ) \Phi \cdot (
1 \otimes \mathcal F )\Psi \bigr )(\omega ; x)
= \int_{\X} \de y \, \Phi(\o ; y) \, \Psi(\o ; x-y)
\end{equation}
and
\begin{align}\label{bunika}
\{\Phi , \Psi\}^B :=& \; (1\otimes \mathcal F)^{-1} \bigl \{ (1\otimes \mathcal F)\Phi ,(1\otimes \mathcal F)\Psi \bigr
\}_B
\notag \\
=& \; - i \sum_{j=1}^n \bigl ( Q_j \Phi \diamond_0 \delta_j \Psi - \delta_j \Phi \diamond_0 Q_j \Psi \bigr ) +
\sum_{j,k=1}^n B^{jk} \, \bigl ( Q_j \Phi \diamond_0 Q_k\Psi \bigr ) ,
\end{align}
where $(Q_j \Phi)(x) = x_j \Phi(x)$ defines the multiplication operator by $x_j$. Obviously this also makes sense on larger
spaces.

\medskip
To get a better idea of the Poisson structure of $BC^{\infty}(\O \times \X^*)$, we will exploit the orbit structure
of the dynamical system $ \bigl ( \O \times \X^* , \theta \otimes \tau^* , \X \times \X^* \bigr )$ and relate this big
Poisson algebra to a family of smaller, symplectic-type ones.
For each $\o\in\O$, we can endow $\Xi = \X \times \X^*$ with a symplectic form
\begin{align*}
\bigl [ \sigma^B_{\o} \bigr ]_Z(X,Y) := y \cdot \xi - x \cdot \eta + B_\o(z)(x,y) = \sum_{j=1}^n \bigl ( y_j \, \xi_j - x_j
\, \eta_j ) + \sum_{j,k = 1}^n B^{jk} \bigl ( \theta_z[\omega] \bigr ) \, x_j \, y_k ,
\end{align*}
which makes the pair $\bigl ( \Xi , \sigma^B_{\omega})$ into a symplectic space. This canonically defines a Poisson bracket
\begin{equation}\label{scarz}
\{ f , g \}_{B_{\omega}} := \sum_{j = 1}^n \bigl ( \partial_{\xi_j} f \, \partial_{x_j} g - \partial_{x_j} f \,
\partial_{\xi_j} g \bigr ) - \sum_{j , k = 1}^n B_{\omega}^{jk} \, \partial_{\xi_j} f \, \partial_{\xi_k} g .
\end{equation}

\begin{proposition}\label{furnica}
\begin{enumerate}[(i)]
\item For each $\o \in \O$, the map
\begin{align*}
\pi_{\omega} := \theta_{\omega} \otimes 1:\bigl ( BC^{\infty}(\O\times\X^*) , \cdot, \{\cdot,\cdot\}_B
\bigr )\rightarrow\bigl ( BC^{\infty}(\Xi) , \cdot, \{\cdot,\cdot\}_{B_\o} \bigr )
 \end{align*}
 is a Poisson map, \ie \ for all $f , g \in BC^{\infty}(\Omega \times \X^*)$
\begin{align*}
\pi_{\omega} \bigl ( f \cdot g \bigr ) &= \pi_{\omega}(f) \cdot \pi_{\omega}(g),
\ \ \ \ \ \pi_{\omega} \bigl ( \{ f , g \}_B \bigr ) = \bigl \{ \pi_{\omega}(f) , \pi_{\omega}(g) \bigr \}_{B_{\omega}}\,.
\end{align*}

\item If $\o,\o'\in\O$ belong to the same orbit, the corresponding Poisson maps are connected by a symplectomorphism
(they may be called {\rm equivalent representations of the Poisson algebra}).
\end{enumerate}
\end{proposition}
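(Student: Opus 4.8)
The plan is to realise $\pi_\o$ as the pull-back of functions along the map $\Theta_\o:\Xi\to\O\times\X^*$, $\Theta_\o(x,\xi):=\bigl(\th_x[\o],\xi\bigr)$, so that $(\pi_\o f)(x,\xi)=f\bigl(\th_x[\o],\xi\bigr)$, and then to trace how the two derivations entering the brackets \eqref{oisson} and \eqref{scarz} behave under this pull-back. That $\pi_\o$ indeed maps $BC^\infty(\O\times\X^*)$ into $BC^\infty(\Xi)$ is routine: for each fixed $\xi$ the section $f(\cdot,\xi)$ is a smooth vector, so by Lemma \ref{rifi} and the remark following it the function $x\mapsto f(\th_x[\o],\xi)$ lies in $BC^\infty(\X)$ with derivatives dominated by the seminorms of $f$, while smoothness and boundedness in $\xi$ are untouched. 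Multiplicativity $\pi_\o(f\cdot g)=\pi_\o(f)\cdot\pi_\o(g)$ is immediate, $\pi_\o$ being a pull-back of functions.

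For part (i) it remains to establish the bracket identity, and here I would record three intertwining relations. (a) Since $\Theta_\o$ is the identity in the $\xi$-variable, $\partial_{\xi_j}\circ\pi_\o=\pi_\o\circ\partial_{\xi_j}$. (b) For $\varphi\in BC^\infty(\O)$, writing $\varphi_\o(x):=\varphi(\th_x[\o])$ and using the group law $\th_{x+te_j}[\o]=\th_{te_j}\bigl[\th_x[\o]\bigr]$,
\[
(\partial_{x_j}\varphi_\o)(x)=\frac{\dd}{\dd t}\Big|_{t=0}\varphi\bigl(\th_{te_j}[\th_x[\o]]\bigr)=(\delta_j\varphi)(\th_x[\o])=(\delta_j\varphi)_\o(x),
\]
the middle equality being exactly the point-wise computation of the orbit derivative allowed by the identification of point-wise and uniform derivatives; applied to the sections $f(\cdot,\xi)$ this gives $\partial_{x_j}\circ\pi_\o=\pi_\o\circ\delta_j$, and similarly for higher orders. (c) By the very definition $B_\o=B\circ\th_\o$ one has $\pi_\o(B^{jk})=B^{jk}\circ\th_\o=B_\o^{jk}$. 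Feeding (a), (b), (c) together with multiplicativity into \eqref{oisson} turns it term by term into \eqref{scarz} for $\pi_\o f,\pi_\o g$, that is $\pi_\o\bigl(\{f,g\}_B\bigr)=\{\pi_\o f,\pi_\o g\}_{B_\o}$. This proves (i).

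For part (ii), suppose $\o'=\th_a[\o]$ with $a\in\X$ and set $T_a:\Xi\to\Xi$, $T_a(x,\xi):=(x+a,\xi)$. The group law gives $\Theta_{\o'}=\Theta_\o\circ T_a$, hence $\pi_{\o'}=T_a^*\circ\pi_\o$. It then suffices to check that $T_a$ is a symplectomorphism from $\bigl(\Xi,\sigma^B_{\o'}\bigr)$ onto $\bigl(\Xi,\sigma^B_\o\bigr)$, since the pull-back along a symplectomorphism is an isomorphism of Poisson algebras and $\pi_{\o'}$ then inherits the Poisson property from (i). Being a translation, $T_a$ has identity differential, so $T_a^*$ merely shifts the base point: $\bigl(T_a^*\sigma^B_\o\bigr)_{(z,\zeta)}=\bigl[\sigma^B_\o\bigr]_{(z+a,\zeta)}$, and since $B_\o(z+a)=B\bigl(\th_{z+a}[\o]\bigr)=B\bigl(\th_z[\o']\bigr)=B_{\o'}(z)$ this equals $\bigl[\sigma^B_{\o'}\bigr]_{(z,\zeta)}$. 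Hence $T_a^*\sigma^B_\o=\sigma^B_{\o'}$, and $\pi_{\o'}=T_a^*\circ\pi_\o$ with $T_a$ the required symplectomorphism.

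The computations are entirely elementary; the only step that needs care is (b), where one must know that the orbit derivative $\delta_j$ of a smooth vector may be computed point-wise — the single non-trivial input, borrowed from Lemma \ref{rifi} and its accompanying remark. The rest is bookkeeping with the two cocycle-free Poisson brackets.
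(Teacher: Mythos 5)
Your proposal is correct and follows essentially the same route as the paper: part (i) rests on the intertwining relation $\partial_{x_j}\circ\pi_\o=\pi_\o\circ\delta_j$ (together with the trivial commutation with $\partial_{\xi_j}$ and $\pi_\o(B^{jk})=B_\o^{jk}$), and part (ii) on the factorization $\pi_{\o'}=\pi_\o\circ(\tau_a\otimes 1)$ with the translation a symplectomorphism between the two magnetic symplectic structures. You merely spell out a few details the paper leaves implicit (that $\pi_\o$ lands in $BC^\infty(\Xi)$, and the explicit verification that $T_a$ pulls $\sigma^B_\o$ back to $\sigma^B_{\o'}$), so no further comment is needed.
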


\begin{proof}
 We use the notation
$f_{\omega} := \pi_{\omega}(f)$ for $f \in BC^{\infty}(\Omega \times \X^*)$ and $\omega \in \Omega$.

\begin{enumerate}[(i)]
\item For any $\omega \in \Omega$, $f , g \in BC^{\infty}(\Omega \times \X^*)$, we have
\begin{align*}
(f \, g)_{\omega}(x,\xi)= &\bigl ( (f \, g) \circ (\theta_{\omega} \otimes 1) \bigr )(x,\xi)
= f \bigl ( \theta_{\omega}(x) , \xi \bigr ) \, g \bigl ( \theta_{\omega}(x) , \xi \bigr )
= (f_{\omega} \, g_{\omega})(x,\xi).
\end{align*}
Similarly, $\bigl ( \bigl \{ f , g \bigr \}_B \bigr )_{\omega} = \bigl \{ f_{\omega} , g_{\omega} \bigr \}_{B_{\omega}}$
follows from direct computation, using
\begin{align*}
\partial_{x_j} f_{\omega} &= \partial_{x_j} \bigl ( f \circ (\theta_{\omega} \otimes 1) \bigr )
= (\delta_j f) \circ (\theta_{\omega} \otimes 1)
= (\delta_j f)_{\omega}
.
\end{align*}
\item If there exists $z\in\X$ such that
$\th_z[\o] = \o'$, then
$$
\th_{\o'}\otimes 1=\left(\th_\o\otimes 1\right)\circ\left(\tau_z\otimes 1\right),
$$
where $\tau_z\otimes 1:\left(\Xi,\sigma^B_\o\right)\rightarrow\left(\Xi,\sigma^B_{\o'}\right)$
is a symplectomorphism.
\end{enumerate}
\end{proof}

\begin{remark}
It is easy to see that the mapping
$$
\pi_{\omega} := \theta_{\omega} \otimes 1 : \S \bigl ( \X^*,C_0^\infty(\O) \bigr ) \longrightarrow \S \bigl (
\X^*,\A_\o^\infty \bigr )
$$
is a surjective morphism of Poisson algebras, for any $\o\in\O$. On the second space we consider the Poisson structure
defined by the magnetic field $B_\o$, as in \cite{MP2}.
\end{remark}

For any $\o\in\O$ we define the stabilizer $\X_\o:= \bigl \{x\in\X\mid\th_x[\o]=\o \bigr \}$. This is a closed subgroup
of $\X$, the same for all $\o$ belonging to a given orbit. We define the subspace of
$\O$ on which the action $\th$ is free:
$$
\O_0:= \bigl \{\o\in\O\mid \X_\o=\{0\} \bigr \}.
$$
Obviously $\O_0$ is invariant under $\th$ and $\O_0\times\X^*\,$ is invariant under the free action $\th\otimes\tau^*$,
so we can consider the Poisson algebra $BC^\infty(\O_0\times\X^*)$ with point-wise multiplication and Poisson
bracket~(\ref{oisson}).

For any $\mathcal O\in\mathbf O(\O_0)$ (the family of all the orbits of
the space $\O_0$) we choose a point $\o(\mathcal O)\in\mathcal O$. Then
$$
\th_{\o(\mathcal O)}\otimes 1:\X\times\X^*\longrightarrow\O_0\times\X^*
$$
is a continuous injection with range $\mathcal O\times\X^*$ (which is one of the orbits of $\O_0\times\X^*$ under the action
$\th\times\tau^*$). Of course, one has (disjoint union)
$$
\O_0\times\X^*=\bigsqcup_{\mathcal O\in\mathbf O(\O_0)} \mathcal{O} \times \X^*.
$$
In addition, $\th_{\o(\mathcal O)}\otimes 1$ is a Poisson mapping on $\Xi=\X\times\X^*$ if one considers the Poisson
structure induced by the symplectic form $\sigma^B_{\o(\mathcal O)}$.

Referring to Definition I.2.6.2 in \cite{La}, we notice that
$\O_0\times\X^*$ {\it is a Poisson space}.

\section{Quantum}\label{clll}

\subsection{Magnetic twisted crossed products}\label{calll}

\begin{definition}\label{turment}
We call {\rm twisted $C^*$-dynamical system} a quadruplet $(\A,\th,\k,\X)$, where $\th$ is an action of $\X=\R^n$ on the
(abelian) $C^*$-algebra $\A$ and $\k$ is a normalized $2$-cocycle on $\A$ with respect to $\th$.
\end{definition}
Starting from a twisted $C^*$-dynamical system, one can construct twisted crossed product
$C^*$-algebras \cite{PR1,PR2,MPR1} (see also references therein).
Let $L^1(\X; \A)$ be the complex vector space of $\A$-valued Bochner integrable functions on $\X$ and $L^1$-norm
$$
\norm{\Phi}_{L^1} :=\int_{\X} \de x \, \norm{\Phi(x)}_{\A} .
$$
For any $\Phi,\Psi \in L^1(\X;\A)$ and $x \in \X$, we define the product
\begin{equation*}
(\Phi \diamond^{\kappa} \Psi)(x):=\int_{\X}\de y\;\theta_{\frac{y-x}{2}}\!\left[\Phi(y)
\right]\;\!\theta_{\frac{y}{2}}\!\left[\Psi(x-y)\right]\;\!\theta_{-\frac{x}{2}}\!\left[\k(y,x-y)\right]
\end{equation*}
and the involution $\Phi^{\diamond^{\kappa}}(x):=\overline{\Phi(-x)}$. With these two operations, $\bigl ( L^1(\X;\A) ,
\diamond^{\kappa} , {}^{\diamond^{\kappa}} \bigr )$ forms a Banach-$\ast$-algebra.
\begin{definition}\label{primel}
The enveloping $C^*$-algebra of $L^1(\X;\A)$ is called {\rm the twisted crossed product} $\A\rtimes^\kappa_\th\X$.
\end{definition}
We are going to indicate now the relevant twisted crossed products, also introducing Planck's constant $\hb$ in the formalism.
We define
$$
\theta^{\hb}_x:=\theta_{\hb x}\ \ {\rm and}\ \ \k^{B,\hb}(x,y)=\k^{\frac{B}{\hb}}(\hb x,\hb y),
$$
which means
$$
\k^{B,\hb}(\o;x,y)=e^{-\frac{i}{\hb}\Gamma^{B_\o}\expval{0,\hb x,\hb x+\hb y}},\ \ \ \ \ \forall\,x,y\in\X,\ \o\in\O,
$$
and check easily that $\bigl ( C_0(\O),\theta^\hb,\kappa^{B,\hb},\X \bigr )$ is a twisted $C^*$-dynamical system for any
$\hb\in (0,1]$. It will be useful to introduce $\Lambda^B_{\hbar}(x,y)$ via
\begin{align*}
\theta_{- \frac{\hbar}{2} x}[\kappa^{B,\hbar}(\omega ; x,y)] = e^{- \frac{i}{\hbar} \Gamma^{B_{\omega}}
\expval{-\frac{\hbar}{2} x , \hbar y - \frac{\hbar}{2} x , \frac{\hbar}{2} x}} =:
e^{- i \hbar \Lambda^{B_{\omega}}_{\hbar}(x,y)},
\end{align*}
as short-hand notation for the phase factor.
This scaled magnetic flux can be parametrized explicitly as
\begin{align}\label{lam}
\Lambda^{B}_{\hbar}(x,y) = \sum_{j,k=1}^n y_j \, (x_k-y_k) \int_0^1 \de t \int_0 ^t \de s \, \theta_{\hbar
(s-\nicefrac{1}{2})x+\hbar(t-s)y}[B^{jk}].
\end{align}
Plugging this particular choice of $2$-cocycle and $\X$ action into the general form of the product, one gets
\begin{align*}
(\Phi \diamond^B_{\hbar} \Psi)(x) &= \int_{\X} \de y \, \theta_{\frac{\hbar}{2}(y-x)}[\Phi(y)] \,
\theta_{\frac{\hbar}{2} y}[\Psi(x-y)] \, e^{- i \hbar \Lambda^B_{\hbar}(x,y)}
.
\end{align*}

The twisted crossed product $C^*$-algebra $\A\rtimes_{\th^\hb}^{\k^{B,\hb}}\X$ will be denoted simply by
$\mathfrak{C}^{B}_\hb$ with self-adjoint part $\mathfrak{C}^{B}_{\hb,\R}$ and norm $\norm{\cdot}^B_\hb$. We also call
$\mathfrak{C}_0$ the enveloping $C^*$-algebra of $L^1(\X;\A)$ with the commutative product $\diamond_0$; it is
isomorphic with $C_0(\X^*;\A)\cong C(\X^*)\otimes\A$.

A quick computation shows that $\pi^\hb_{\omega}:=\th^\hb_\o\otimes 1$ intertwines the involutions associated to
the $C^*$-algebras $\mathfrak{C}^B_{\hbar}$
and $\mathcal{A}_{\omega} \rtimes^{\kappa^{B_{\omega} , \hbar}}_{\tau^{\hbar}} \X$, \ie~
$\pi^\hb_{\omega}(\Phi^{\diamond^B_{\hbar}}) = \pi^\hb_{\omega}(\Phi)^{\diamond^{B_{\omega}}_{\hbar}}$ is satisfied for every
$\Phi \in \mathfrak{C}^B_{\hbar}$. A slightly more cumbersome task is the verification of
$\pi^\hb_{\omega}(\Phi \diamond^B_{\hbar} \Psi) = \pi^\hb_{\omega}(\Phi) \diamond^{B_{\omega}}_{\hbar} \pi^\hb_{\omega}(\Psi)$.
For any $\Phi,\Psi\in L^1(\X;\A)$ and $z , x \in \X$, we have
$$\bigl [ \pi^\hb_{\omega}(\Phi \diamond^B_{\hbar} \Psi) \bigr ](z ; x) $$
$$=\int_{\X} \dd y \, \Bigl
( \theta_{\frac{\hbar}{2}(y-\cdot)}[\Phi(y)] \, \theta_{\frac{\hbar}{2} y}[\Psi(\cdot-y)] \, e^{- \frac{i}{\hbar}
\Gamma^{B} \expval{\cdot - \frac{\hbar}{2} \cdot , - \frac{\hbar}{2} \cdot + \hbar y , \frac{\hbar}{2} \cdot}} \Bigr )
\circ (\theta^\hbar_{\omega} \otimes 1)(z ; x) $$
$$= \int_{\X} \dd y \, \Phi \bigl ( \theta_{\hbar z + \frac{\hbar}{2}(y-x)}[\omega] , y \bigr ) \, \Psi \bigl
( \theta_{\hbar z + \frac{\hbar}{2} y}[\omega] , x-y \bigr ) \, e^{- \frac{i}{\hbar} \Gamma^{B_{\theta_{\hbar z}[\omega]}}
\expval{- \frac{\hbar}{2} x , - \frac{\hbar}{2} x + \hbar y , \frac{\hbar}{2} x}}$$
$$= \int_{\X} \dd y \,  \tau_{\frac{\hbar}{2}(y-x)}[\pi^\hbar_{\omega}(\Phi)(y)] (z) \,  \tau_{\frac{\hbar}{2}
y}[\pi^\hbar_{\omega}(\Psi)(x-y)] (z) \, e^{- \frac{i}{\hbar} \Gamma^{B_{\omega}} \expval{\hbar z - \frac{\hbar}{2} x ,
\hbar z - \frac{\hbar}{2} x + \hbar y , \hbar z + \frac{\hbar}{2} x}}$$
$$= \bigl [ \pi^\hb_{\omega}(\Phi) \diamond^{B_{\omega}}_{\hbar} \pi^\hb_{\omega}(\Psi) \bigr ](z ; x).$$

It follows easily that $\{ \pi^\hb_{\omega} \}_{\omega \in \Omega}$ defines by extension a family of epimorphisms
$$
\pi^\hbar_{\omega} : \mathfrak{C}^B_{\hbar} \longrightarrow \A_{\o} \rtimes_{\tau^\hb}^{\k^{B_\o,\hb}} \X
$$
that map a twisted crossed product defined in terms of $C_0(\Omega)$ onto more concrete $C^*$-algebras defined in terms of
subalgebras $\A_{\omega}$ of $BC_u(\X)$.

As we have seen, $\S \bigl (\X^*;C_0^{\infty}(\O) \bigr )$ is a Poisson subalgebra of $BC^{\infty}(\O\times\X^*)$.
For strict deformation quantization we also need that it is a *-subalgebra of each of the $C^*$-algebras $\mathfrak C^B_\hb$.
Since $\S \bigl (\X^*;C_0^{\infty}(\O) \bigr )$ is stable under involution, this will follow from

\begin{proposition}\label{intrucul}
If $B^{jk}\in BC^{\infty}(\O)$, then $\S\left(\X;C_0^\infty(\O)\right)$ is a subalgebra of\\
$\bigl ( L^1 \bigl ( \X ; C_0(\O) \bigr ), \diamond^B_{\hbar} \bigr )$.
\end{proposition}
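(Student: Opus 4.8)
The plan is to establish two things: that $\S(\X;C_0^{\infty}(\O))$ is contained in $L^1(\X;C_0(\O))$, and that it is stable under $\diamond^B_{\hbar}$. The first is immediate: for $\Phi\in\S(\X;C_0^{\infty}(\O))$ the seminorm bound with $|a|$ large gives $\snorm{\Phi(x)}_{C_0(\O)}\le C\,(1+|x|)^{-n-1}$, which is integrable, and $x\mapsto\Phi(x)$ is norm-continuous into $C_0(\O)$, hence Bochner integrable. So fix $\Phi,\Psi\in\S(\X;C_0^{\infty}(\O))$ and analyse
\[
(\Phi\diamond^B_{\hbar}\Psi)(x)=\int_{\X}\de y\;\theta_{\frac{\hbar}{2}(y-x)}[\Phi(y)]\;\theta_{\frac{\hbar}{2}y}[\Psi(x-y)]\;e^{-i\hbar\Lambda^B_{\hbar}(x,y)}.
\]
For fixed $x$ the integrand takes values in $C_0^{\infty}(\O)$: the automorphisms $\theta_z$ preserve $C_0^{\infty}(\O)$; the phase $e^{-i\hbar\Lambda^B_{\hbar}(x,y)}$ lies in $BC^{\infty}(\O)$ by the explicit parametrisation (\ref{lam}) together with the hypothesis $B^{jk}\in BC^{\infty}(\O)$; and $C_0^{\infty}(\O)\cdot BC^{\infty}(\O)\subset C_0^{\infty}(\O)$. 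Its $C_0(\O)$-norm is dominated by $\snorm{\Phi(y)}_{C_0(\O)}\,\snorm{\Psi(x-y)}_{C_0(\O)}$, so the Bochner integral converges in $C_0(\O)$.

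Next I would take all derivatives $\partial_x^{\beta}\delta^{\alpha}$ under the integral sign. The rules needed are $\delta_j\theta_z=\theta_z\delta_j$, $\partial_{x_l}\bigl(\theta_{\frac{\hbar}{2}(y-x)}[\Phi(y)]\bigr)=-\frac{\hbar}{2}\,\theta_{\frac{\hbar}{2}(y-x)}[\delta_l\Phi(y)]$, $\partial_{x_l}\bigl(\theta_{\frac{\hbar}{2}y}[\Psi(x-y)]\bigr)=\theta_{\frac{\hbar}{2}y}[(\partial_l\Psi)(x-y)]$, $\partial_{x_l}e^{-i\hbar\Lambda^B_{\hbar}(x,y)}=-i\hbar\,\bigl(\partial_{x_l}\Lambda^B_{\hbar}(x,y)\bigr)\,e^{-i\hbar\Lambda^B_{\hbar}(x,y)}$, and the analogous formulas for $\delta_j$ (which is a derivation on $C_0^{\infty}(\O)$). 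Applying $\partial_x^{\beta}\delta^{\alpha}$ together with the Leibniz rule, the integrand becomes a finite sum of terms
\[
p(x,y)\;\theta_{\frac{\hbar}{2}(y-x)}\bigl[\delta^{\alpha_1}\Phi(y)\bigr]\;\theta_{\frac{\hbar}{2}y}\bigl[(\partial^{\gamma}\delta^{\alpha_2}\Psi)(x-y)\bigr]\;q(x,y)\,e^{-i\hbar\Lambda^B_{\hbar}(x,y)},
\]
where $p$ is a polynomial of degree $\le|\alpha|+|\beta|$, the multi-indices satisfy $|\alpha_1|+|\alpha_2|+|\gamma|\le|\alpha|+|\beta|$, and $q\in BC^{\infty}(\O)$ is, together with all of its $\delta$-derivatives, bounded in $\O$ uniformly in $(x,y)$. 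The step I expect to cost the most work is exactly the justification of this last assertion: one must know that $\Lambda^B_{\hbar}$ behaves like a symbol, i.e.\ that $\Lambda^B_{\hbar}(x,y)$ and all of its $\partial_x,\partial_y,\delta$ derivatives are bounded in $\O$ by polynomials in $(x,y)$ with $(x,y)$-independent coefficients; this is where $B^{jk}\in BC^{\infty}(\O)$ and the flux estimates of the appendix enter, the polynomial growth of these derivatives being absorbed into $p$. Granting this, each displayed term is dominated in $C_0(\O)$-norm, using $\snorm{\theta_z\varphi}_{C_0(\O)}=\snorm{\varphi}_{C_0(\O)}$ and the sup-norm bound $\bnorm{q\,e^{-i\hbar\Lambda^B_{\hbar}(x,y)}}\le C$, by
\[
C\,|p(x,y)|\;\bnorm{\delta^{\alpha_1}\Phi(y)}_{C_0(\O)}\;\bnorm{(\partial^{\gamma}\delta^{\alpha_2}\Psi)(x-y)}_{C_0(\O)},
\]
which, by finiteness of all seminorms of $\Phi$ and $\Psi$, is for every $K\in\N$ at most $C_K\,(1+|y|)^{-K}(1+|x-y|)^{-K}$ times a fixed polynomial; this is an integrable function of $y$, uniformly for $x$ in bounded sets, which legitimises differentiation under the integral and shows $\partial_x^{\beta}\delta^{\alpha}(\Phi\diamond^B_{\hbar}\Psi)(x)\in C_0(\O)$ for all $\alpha,\beta$.

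It then remains to bound $\sup_{x}\bnorm{x^{a}\,\delta^{\alpha}\partial_x^{\beta}(\Phi\diamond^B_{\hbar}\Psi)(x)}_{C_0(\O)}$ for all $a,\alpha,\beta\in\N^n$, and by the representation above this is controlled by a finite sum of
\[
\sup_{x}\,(1+|x|)^{|a|}\int_{\X}\de y\;|p(x,y)|\;\bnorm{\delta^{\alpha_1}\Phi(y)}_{C_0(\O)}\;\bnorm{(\partial^{\gamma}\delta^{\alpha_2}\Psi)(x-y)}_{C_0(\O)}.
\]
Here I would finish with the standard Peetre-type splitting: $|p(x,y)|\le C\,(1+|y|)^{N}(1+|x-y|)^{N}$ and $(1+|x|)^{|a|}\le C\,(1+|y|)^{|a|}(1+|x-y|)^{|a|}$, after which the rapid decay $\bnorm{\delta^{\alpha_1}\Phi(y)}_{C_0(\O)}\le C_L(1+|y|)^{-L}$ (and likewise for $\Psi$) reduces the integral to the convolution of two integrable functions — that is, to the elementary fact that the convolution of two rapidly decreasing functions is rapidly decreasing, which simultaneously absorbs the weight $x^{a}$ and the magnetic polynomial $p$. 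The boundedness of $(\o,x)\mapsto(\Phi\diamond^B_{\hbar}\Psi)(\o;x)$ and its joint continuity — needed for membership in $BC^{\infty}(\O\times\X)$, hence in $\S(\X;C_0^{\infty}(\O))$ by Definition \ref{spaciu} — follow from the same domination together with dominated convergence. This shows $\Phi\diamond^B_{\hbar}\Psi\in\S(\X;C_0^{\infty}(\O))$, completing the proof that $\S(\X;C_0^{\infty}(\O))$ is a subalgebra of $\bigl(L^1(\X;C_0(\O)),\diamond^B_{\hbar}\bigr)$; being moreover stable under the involution $\Phi\mapsto\Phi^{\diamond^B_{\hbar}}$, it is a $*$-subalgebra, as used in the text.
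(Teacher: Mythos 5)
Your proposal is correct and follows essentially the same route as the paper's proof: Leibniz expansion of $\partial_x^{\beta}\delta^{\alpha}$ into the three factors (translated derivatives of $\Phi$, of $\Psi$, and derivatives of the magnetic phase), control of the latter via the appendix estimates on $\Lambda^B_{\hbar}$, justification of differentiation under the integral by domination, and the splitting $(1+\sabs{x})^{\sabs{a}}\lesssim(1+\sabs{y})^{\sabs{a}}(1+\sabs{x-y})^{\sabs{a}}$ to absorb the weight into seminorms of $\Phi$ and $\Psi$. The only cosmetic difference is that the paper closes with a $\langle y\rangle^{2N}$ weight trick reducing to seminorms on $\S(\X\times\X;C_0^{\infty}(\O))$, whereas you conclude directly with the convolution of rapidly decreasing functions; these are equivalent.
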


\begin{proof}
Let $\Phi , \Psi \in \S \bigl ( \X ; C_0^{\infty}(\Omega) \bigr )$. As $\S \bigl ( \X ; C_0^{\infty}(\Omega) \bigr )$ is a
subspace of $L^1 \bigl ( \X ; C_0(\Omega) \bigr )$, $\Phi \diamond^B_{\hbar} \Psi$ exists in $L^1 \bigl ( \X ; C_0(\Omega)
\bigr )$. To prove the product $\Phi \diamond^B_{\hbar} \Psi$ is also in $\S \bigl ( \X ; C_0^{\infty}(\Omega) \bigr )$,
we need to estimate all semi-norms: let $a , \alpha , \beta \in \N^n$. First, we show that we can exchange differentiation
with respect to $x$ and along orbits with integration with respect to $y$ via Dominated Convergence, \ie~that for all $x$
and $\omega$
$$\bigl ( x^a \partial_x^{\alpha} \delta^{\beta} (\Phi \diamond^B_{\hbar} \Psi) \bigr )(\omega ; x)$$
$$=\int_{\X} \de y \, x^a
\partial_x^{\alpha} \delta^{\beta} \Bigl ( \Phi \bigl ( \theta_{\frac{\hbar}{2}(y-x)}[\omega] , y \bigr ) \, \Psi \bigl
(\theta_{\frac{\hbar}{2} y}[\omega] , x-y \bigr ) \, e^{- i \hbar \Lambda^{B_{\omega}}_{\hbar}(x,y)} \Bigr )
=: \int_{\X} \de y \, I^a_{\alpha \beta}(\omega ; x , y)$$
holds. Hence, we need to estimate the absolute value of $I^a_{\alpha \beta}$ uniformly in $x$ and $\omega$ by an integrable
function. To do that, we write out the derivatives involved in $I^a_{\alpha \beta}$,
$$I^a_{\alpha \beta}(x,y)= x^a \partial_x^{\alpha} \delta^{\beta} \Bigl ( \theta_{\frac{\hbar}{2}(y-x)}[\Phi(y)] \,
\theta_{\frac{\hbar}{2} y}[\Psi(x-y)] \, e^{- i \hbar \Lambda^B_{\hbar}(x,y)} \Bigr )$$
$$= x^a \!\sum_{\substack{\alpha' \!+\! \alpha''\! +\! \alpha'''\! = \!\alpha \\ \beta'\! + \!\beta''\! +\! \beta'''\! =
\! \beta}}\!\left(\frac{-\hbar}{2}\right)^{\sabs{\alpha'}}\theta_{\frac{\hbar}{2}(y-x)} \bigl [ \delta^{\alpha' + \beta'}
\Phi(y) \bigr ]\theta_{\frac{\hbar}{2} y} \bigl [ \partial_x^{\alpha''} \delta^{\beta''} \Psi(x - y) \bigr ]
\partial_x^{\alpha'''} \delta^{\beta'''} e^{- i \hbar \Lambda^B_{\hbar}(x,y)}.$$
Taking the $C_0(\Omega)$ norm of the above expression and using the triangle inequality, $\hbar \leq 1$, the fact that
$\th_z$ is an isometry as well as the estimates on the exponential of the magnetic flux from
Lemma~\ref{mag_twistedXprod:lem:flux_estimates}~(ii), we get
$$\bnorm{I^a_{\alpha \beta}(x,y)}_{\A}
\leq \sabs{x^a} \sum_{\substack{\alpha'\! +\! \alpha''\! +\! \alpha'''\! =\! \alpha \\
\beta' \!+ \!\beta''\! +\! \beta'''\! =\! \beta}}\!\! \left(\frac{\hbar}{2}\right)^{\sabs{\alpha'}}\bnorm{\theta_{\frac{\hbar}{2}(y-x)}
\bigl [ \delta^{\alpha' + \beta'} \Phi(y) \bigr ]}_{\A} \bnorm{\theta_{\frac{\hbar}{2} y} \bigl [ \partial_x^{\alpha''}
\delta^{\beta''} \Psi(x - y) \bigr ]}_{\A}\cdot$$
$$\cdot\bnorm{\partial_x^{\alpha'''} \delta^{\beta'''}
e^{- i \hbar \Lambda^B_{\hbar}(x,y)}}_{\A}
\leq \left ( \prod_{j = 1}^n \bigl ( \sabs{y_j} + \sabs{x_j - y_j} \bigr )^{a_j} \right ) \cdot$$
$$\cdot\sum_{\substack{\alpha' +
\alpha'' + \alpha''' = \alpha \\ \beta' + \beta'' + \beta''' = \beta}} \bnorm{\delta^{\alpha' + \beta'} \Phi(y)}_{\A} \,
\bnorm{\partial_x^{\alpha''} \delta^{\beta''} \Psi(x - y)}_{\A}
 \cdot
\sum_{\abs{b} + \abs{c} = 2 (\sabs{\alpha'''} + \sabs{\beta'''})} K_{bc} \, \sabs{y^b} \, \sabs{(x - y)^c}
$$
$$= \sum_{\substack{\alpha' + \alpha'' + \alpha''' = \alpha \\ \beta' + \beta'' + \beta''' = \beta}} \sum_{\abs{b} + \abs{c}
\leq \abs{a} + 2 \sabs{\alpha'''} + 2 \sabs{\beta'''}} \tilde{K}_{bc} \bnorm{(Q^b \delta^{\alpha' + \beta'} \Phi)(y)}_{\A}
\bnorm{(Q^c \partial_x^{\alpha''} \delta^{\beta''} \Psi)(x - y)}_{\A}.$$
The polynomial with coefficients $\tilde{K}_{bc}$ comes from multiplying the other two polynomials in the $\sabs{y_j}$ and
$\sabs{x_j - y_j}$. Taking the supremum in $x$ only yields a function in $y$ (independent of $x$ and $\omega$) which is
integrable and dominates $\babs{I^a_{\alpha \beta}(\omega ; x,y)}$ since the right-hand side is a finite sum of Schwartz
functions in $y$,
\begin{align*}
\sup_{x \in \X} \bnorm{I^a_{\alpha \beta}(x,y)}_{\A} &\leq \sum_{\substack{\alpha' + \alpha'' + \alpha''' = \alpha \\
\beta' + \beta'' + \beta''' = \beta \\ \abs{b} + \abs{c} \leq \abs{a} + 2 \sabs{\alpha'''} + 2 \sabs{\beta'''}}}
\tilde{K}_{bc} \,  \bnorm{(Q^b \delta^{\alpha' + \beta'} \Phi)(y)}_{\A} \, \bnorm{Q^c \partial_x^{\alpha''}
\delta^{\beta''} \Psi}_{000}
\\
&= \sum_{\substack{\alpha' + \alpha'' + \alpha''' = \alpha \\ \beta' + \beta'' + \beta''' = \beta \\ \abs{b} + \abs{c}
\leq \abs{a} + 2 \sabs{\alpha'''} + 2 \sabs{\beta'''}}} \tilde{K}_{bc} \,  \bnorm{(Q^b \delta^{\alpha' + \beta'}
\Phi)(y)}_{\A} \, \bnorm{\Psi}_{c \alpha'' \beta''}
.
\end{align*}
Hence, by Dominated Convergence, it is permissible to interchange differentiation and integration. To estimate the semi-norm
of the product, we write for an integer $N$ such that $2N\ge n+1$
\begin{align}
\bnorm{\Phi \diamond^B_{\hbar} \Psi}_{a \alpha \beta} = \sup_{\substack{x \in \X \\ \omega \in \Omega}} \abs{\int_{\X}
\de y \, I^a_{\alpha \beta}(\omega ; x , y)}
\leq \int_{\X}\frac{\de y}{\left<y\right>^{2N}} \, \left<y\right>^{2N} \sup_{x \in \X} \bnorm{I^a_{\alpha
\beta}(x , y)}_{\A}
\notag \\
\leq C_1(N) \, \sup_{x,y \in \X}\left(\left<y\right>^{2N} \bnorm{I^a_{\alpha \beta}(x , y)}_{\A}\right) \le C_2(N)
\, \max_{\abs{b} \le 2N} \sup_{x,y \in \X} \bnorm{y^b I^a_{\alpha \beta}(x , y)}_{\A}.
\label{mag_twistedXprod:eqn:proof_norm_estimate}
\end{align}
The right-hand side involves semi-norms associated to $\S \bigl ( \X \times \X ; C^{\infty}_{0}(\Omega) \bigr )$ which we will
estimate in terms of the semi-norms of $\Phi$ and $\Psi$, by arguments similar to those leading to the domination of
$\bnorm{I^a_{\alpha \beta}(x,y)}_{\A}$.

Thus, we have estimated $\bnorm{\Phi \diamond^B_{\hbar} \Psi}_{a \alpha \beta}$ from above by a finite number of semi-norms
of $\Phi$ and $\Psi$ and $\Phi \diamond^B_{\hbar} \Psi \in \S \bigl ( \X ; C^{\infty}(\Omega) \bigr )$.
\end{proof}

\subsection{Twisted symbolic calculus}\label{tvity}

It is useful to transport the composition law $\diamond^B_\hb$ by partial Fourier transform
$
1\otimes\mathcal F:\S(\X;C_0^\infty(\O)) \longrightarrow \S(\X^*;C_0^\infty(\O)),
$
setting
\begin{equation}\label{novela}
f\sharp^B_\hb g:=(1\otimes\mathcal F)\left[(1\otimes\mathcal F)^{-1}f\,\diamond^B_\hb\,(1\otimes\mathcal F)^{-1}g\right].
\end{equation}
In this way one gets a multiplication on $\S(\X^*;C_0^\infty(\O))$ which generalizes the magnetic Weyl composition of symbols
of \cite{MP1,MP2,IMP1} (and to which it reduces, actually, if $\Omega$ is just a compactification of the configuration space
$\X$). Together with complex conjugation, they endow $\S(\X^*;C_0^\infty(\O))$ with the structure of a $^*$-algebra.
After a short computation one gets
$$(f \sharp^B_{\hbar} g)(\omega,\xi)$$
$$= (\pi \hbar)^{-2n} \int_{\X} \de y \int_{\X^*} \de \eta \int_{\X} \de z \int_{\X^*} \de
\zeta \, e^{i \frac{2}{\hbar}(z \cdot \eta - y \cdot \zeta)} \, e^{- \frac{i}{\hbar} \Gamma^{B_{\omega}} \expval{\hbar y -
\hbar z , \hbar y + \hbar z , \hbar z - \hbar y}}\cdot $$
$$ \cdot f \bigl ( \theta_y[\omega] , \xi + \eta \bigr ) \, g \bigl ( \theta_z[\omega] , \xi + \zeta \bigr )$$
$$= (\pi \hbar)^{-2n} \int_{\X} \de y \int_{\X^*} \de \eta \int_{\X} \de z \int_{\X^*} \de \zeta \,
e^{i \frac{2}{\hbar}\sigma[(y,\eta),(z,\zeta)]} \, e^{- \frac{i}{\hbar} \Gamma^{B_{\omega}}
\expval{\hbar y - \hbar z , \hbar y + \hbar z ,\hbar z - \hbar y}}
\cdot $$
\begin{equation}\label{chiueisan}
\cdot\bigl ( \Theta_{(y,\eta)}[f] \bigr )(\omega,x) \, \bigl ( \Theta_{(z,\zeta)}[g] \bigr )(\omega,x),
\end{equation}
where $\sigma[(y,\eta),(z,\zeta)]:=z\cdot\eta-y\cdot\zeta$ is the canonical symplectic form on $\Xi:=\X\times\X^*$ and
\begin{align*}
\bigl ( \Theta_{(y,\eta)}[f] \bigr )(\omega , \xi) \equiv \bigl ( (\theta_y \otimes \tau^*_{\eta}) [f] \bigr ) (\omega , \xi)
= f \bigl ( \theta_y[\omega] , \xi + \eta \bigr ).
\end{align*}
This formula should be compared with the product giving Rieffel's quantization \cite{Rie1}.

We note that $1\otimes\mathcal F$ can be extended to $L^1(\X;C_0(\O))$ and then to $\mathfrak C^B_\hb$. So we get a
$C^*$-algebra $\mathfrak B^B_\hb$, isomorphic to $\mathfrak C^B_\hb$, on which the product is an extension of the
twisted composition law~(\ref{chiueisan}). From the bijectivity of the partial Fourier transform and Proposition~\ref{intrucul}
we get the following

\begin{corollary}
If the components of the magnetic field $B$ are of class $BC^{\infty}(\Omega)$, then $\mathcal{S} \bigl
( \X^* ; C_0^{\infty}(\Omega) \bigr )$ is a Fréchet $^*$-subalgebra of $\mathfrak{B}^B_{\hbar}$.
\end{corollary}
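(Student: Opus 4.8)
The plan is to deduce the corollary from Proposition \ref{intrucul} by transporting everything through the partial Fourier transform $1\otimes\mathcal F$, which is a topological isomorphism of the Fréchet spaces involved. First recall that $1\otimes\mathcal F$ restricts to a linear bijection between $\S(\X;C_0^\infty(\O))$ and $\S(\X^*;C_0^\infty(\O))$, and that this bijection is in fact a homeomorphism for the respective Fréchet topologies: under Fourier transform in the Euclidean variable the factor $x^a$ is turned into $\partial_\xi^a$ and $\partial_x^\beta$ into multiplication by $\xi^\beta$ (up to constants), while $\delta^\alpha$ acts only on $\O$ and hence commutes with $1\otimes\mathcal F$; thus each seminorm $\snorm{\cdot}_{a\alpha\beta}$ on one side is dominated by finitely many seminorms on the other. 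Since by definition~(\ref{novela}) the product $\sharp^B_{\hbar}$ is obtained from $\diamond^B_{\hbar}$ by conjugation with $1\otimes\mathcal F$, Proposition~\ref{intrucul} immediately yields that $\S(\X^*;C_0^\infty(\O))$ is stable under $\sharp^B_{\hbar}$; together with the already noted stability under complex conjugation, this shows it is a $^*$-subalgebra.

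Next I would check that $\S(\X^*;C_0^\infty(\O))$ actually sits inside $\mathfrak B^B_{\hbar}$ and that the inclusion is continuous. Algebraically, $\S(\X;C_0^\infty(\O))\subset L^1(\X;C_0(\O))$ and the latter embeds (densely) into $\mathfrak C^B_{\hbar}$; applying the extension of $1\otimes\mathcal F$ that realizes the isomorphism $\mathfrak C^B_{\hbar}\cong\mathfrak B^B_{\hbar}$ gives $\S(\X^*;C_0^\infty(\O))\subset\mathfrak B^B_{\hbar}$. For continuity it suffices to dominate the $C^*$-norm by a Fréchet seminorm: on $L^1(\X;C_0(\O))$ the enveloping $C^*$-norm is bounded by the $L^1$-norm, so for $f\in\S(\X^*;C_0^\infty(\O))$ one has $\snorm{f}^B_{\hbar}\le\bnorm{(1\otimes\mathcal F)^{-1}f}_{L^1}$, and inserting a factor $\left<x\right>^{-2N}\left<x\right>^{2N}$ in the integral over $\X$ with $2N\ge n+1$ (the same device used at the end of the proof of Proposition~\ref{intrucul}, cf.~(\ref{mag_twistedXprod:eqn:proof_norm_estimate})) bounds the right-hand side by a finite combination of the seminorms $\snorm{f}_{a0\beta}$.

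Finally, joint continuity of $\sharp^B_{\hbar}$ for the Fréchet topology is a direct transcription of Proposition~\ref{intrucul}: the computation in its proof bounds each $\bnorm{\Phi\diamond^B_{\hbar}\Psi}_{a\alpha\beta}$ by a finite sum of products of seminorms of $\Phi$ and of $\Psi$, and pushing this inequality through the homeomorphism $1\otimes\mathcal F$ gives the corresponding estimate for $\bnorm{f\sharp^B_{\hbar}g}_{a\alpha\beta}$. Hence $\bigl(\S(\X^*;C_0^\infty(\O)),\sharp^B_{\hbar},{}^-\bigr)$ is a Fréchet $^*$-algebra continuously embedded as a $^*$-subalgebra of $\mathfrak B^B_{\hbar}$.

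I do not expect a genuine obstacle here, since all the analytic estimates were already carried out in Proposition~\ref{intrucul}; the only points requiring a little care are the bookkeeping of the seminorms under the partial Fourier transform and the (routine) verification that the embedding of $\S(\X^*;C_0^\infty(\O))$ into the $C^*$-algebra is continuous and not merely algebraic.
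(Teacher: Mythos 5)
Your proposal is correct and follows the same route as the paper, which deduces the corollary in one line from the bijectivity of the partial Fourier transform $1\otimes\mathcal F$ together with Proposition~\ref{intrucul}, using that $\sharp^B_{\hbar}$ is by definition~(\ref{novela}) the conjugate of $\diamond^B_{\hbar}$ under this transform. The extra details you supply (the Fr\'echet homeomorphism property of $1\otimes\mathcal F$, the domination of the $C^*$-norm by the $L^1$-norm, and the continuity of the embedding) are all sound and merely make explicit what the paper leaves implicit.
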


\subsection{Representations}\label{preze}

We first recall the definition of covariant representations of a magnetic $C^*$-dynamical system and the way they are used to
construct representations of the corresponding $C^*$-algebras. We denote by $\U(\H)$ the group of unitary operators in
the Hilbert space $\H$ and by $\mathbf B(\H)$ the $C^*$-algebra of all the linear bounded operators on $\H$.

\begin{definition}\label{RCT}
Given a magnetic $C^*$-dynamical system $(\A,\theta^\hb,\k^{B,\hb},\X)$, we call {\rm covariant representation} $(\H,r,T)$ a
Hilbert space $\H$ together with two maps $r:\A\rightarrow \mathbf B(\H)$ and $T:\X\rightarrow \U(\H)$ satisfying
\begin{enumerate}[(i)]
\item $r$ is a non-degenerate representation,
\item $T$ is strongly continuous and $\
T(x)\;\!T(y)=r[\k^{B,\hb}(x,y)]\;\!T(x+y),\quad \forall x,y\in \X$,
\item $T(x)\;\!r(\varphi)\;\!T(x)^*=r[\theta_x^\hb(\varphi)], \quad
\forall x\in\X,\;\varphi\in\A$.
\end{enumerate}
\end{definition}

\begin{lemma} If $(\H,r,T)$ is a covariant representation of $(\A,\theta^\hb,\k^{B,\hb},\X)$, then $\Rep^T_r$ defined on
$L^1(\X;\A)$ by
\begin{equation*}
\Rep_r^T(\Phi):=\int_\X \de y\,r\left[\theta^\hb_{y/2}\big(\Phi(y)\big)\right]T(y)
\end{equation*}
extends to a representation of $\mathfrak C^B_\hb$.
\end{lemma}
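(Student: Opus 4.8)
The plan is to verify that $\Rep_r^T$ is a contractive $^*$-homomorphism from the Banach $^*$-algebra $\bigl(L^1(\X;\A),\diamond^B_{\hb},{}^{\diamond^B_{\hb}}\bigr)$ into $\mathbf B(\H)$, and then to invoke the universal property of the enveloping $C^*$-algebra. Linearity in $\Phi$ is immediate from the formula. For the bound, I would use that $r$ is contractive (being a $^*$-representation of a $C^*$-algebra), that each $\theta^\hb_{y/2}$ is an isometric automorphism of $\A$, and that each $T(y)$ is unitary, which gives
\[
\bnorm{\Rep_r^T(\Phi)}_{\mathbf B(\H)}\le\int_\X\de y\,\bnorm{r\bigl[\theta^\hb_{y/2}(\Phi(y))\bigr]}_{\mathbf B(\H)}\le\int_\X\de y\,\bnorm{\Phi(y)}_\A=\snorm{\Phi}_{L^1},
\]
so in particular the integral defining $\Rep_r^T(\Phi)$ converges in norm. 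Compatibility with the involution, $\Rep_r^T(\Phi^{\diamond^B_{\hb}})=\Rep_r^T(\Phi)^*$, would follow from $r[\varphi]^*=r[\overline{\varphi}]$, $T(y)^*=T(-y)$, the covariance relation (iii) of Definition~\ref{RCT} (used to move $T(-y)$ through $r$), and the substitution $y\mapsto-y$; the symmetrized convention is arranged precisely so that no cocycle factor survives here.

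The core of the argument is multiplicativity, $\Rep_r^T(\Phi\diamond^B_{\hb}\Psi)=\Rep_r^T(\Phi)\,\Rep_r^T(\Psi)$ for $\Phi,\Psi\in L^1(\X;\A)$. I would insert the explicit form of the twisted product, apply the homomorphism $r$ to the integrand, and then use the covariance relation (iii), $T(x)\,r[\varphi]\,T(x)^*=r[\theta^\hb_x(\varphi)]$, repeatedly to pull all the $\theta^\hb$-shifts sitting inside $r$ out onto products of $T$'s; the remaining explicit cocycle factor $\theta^\hb_{-x/2}[\kappa^{B,\hb}(y,x-y)]$ would then be reabsorbed using the covariance relation (ii), $T(x)T(y)=r[\kappa^{B,\hb}(x,y)]T(x+y)$, together with the $2$-cocycle identity (\ref{2coc}) satisfied by $\kappa^{B,\hb}$. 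A change of variables in the resulting integral over $\X\times\X$ — Fubini being legitimate by the $L^1$-bound above — then factorizes it into $\bigl(\int_\X\de y\,r[\theta^\hb_{y/2}(\Phi(y))]\,T(y)\bigr)\bigl(\int_\X\de z\,r[\theta^\hb_{z/2}(\Psi(z))]\,T(z)\bigr)$, as desired. This is the bookkeeping for which the symmetrized form of $\diamond^{\kappa}$ was designed, and it runs in parallel to the corresponding verification in \cite{MPR1}; I would not reproduce the computation in detail.

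Finally, since $\mathfrak C^B_{\hb}$ is by Definition~\ref{primel} the enveloping $C^*$-algebra of $\bigl(L^1(\X;\A),\diamond^B_{\hb},{}^{\diamond^B_{\hb}}\bigr)$, the contractive $^*$-homomorphism $\Rep_r^T$ factors through the canonical map $L^1(\X;\A)\to\mathfrak C^B_{\hb}$ and hence extends, uniquely, to a $^*$-representation of $\mathfrak C^B_{\hb}$. If non-degeneracy is part of the notion of representation here, it is inherited from that of $r$: an approximate unit of $\A$ together with positive, $L^1$-normalized bump functions concentrating at $0\in\X$ yields an approximate unit $(e_\lambda)$ of $L^1(\X;\A)$ with $\Rep_r^T(e_\lambda)\to\I$ strongly on the essential subspace of $r$, which is all of $\H$. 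I expect the only genuine obstacle to be the multiplicativity step — keeping correct track of the half-shifts $\theta^\hb_{\cdot/2}$ and of the scaled flux $\Lambda^B_{\hb}$ against the two covariance relations; everything else is routine.
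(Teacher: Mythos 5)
The paper does not actually prove this lemma: it is stated as the standard ``integrated form'' construction from the general theory of twisted crossed products and is implicitly deferred to the cited references \cite{PR1,PR2,MPR1}. Your proposal is a correct reconstruction of that standard argument, and the overall architecture (contractive $^*$-homomorphism on $L^1(\X;\A)$, then the universal property of the enveloping $C^*$-algebra) is exactly the right one. Two small refinements. First, in the multiplicativity step the $2$-cocycle identity \eqref{2coc} is not actually needed: writing out $\Rep_r^T(\Phi)\Rep_r^T(\Psi)$, covariance (iii) pulls $T(y)$ past $r[\theta^\hb_{z/2}(\Psi(z))]$ to produce $r[\theta^\hb_{y+z/2}(\Psi(z))]$, covariance (ii) gives $T(y)T(z)=r[\kappa^{B,\hb}(y,z)]T(y+z)$, commutativity of $\A$ lets you collect everything inside a single $r[\,\cdot\,]$, and the substitution $x=y+z$ together with factoring $\theta^\hb_{x/2}$ out of each entry reproduces precisely the half-shifted integrand $\theta^\hb_{(y-x)/2}[\Phi(y)]\,\theta^\hb_{y/2}[\Psi(x-y)]\,\theta^\hb_{-x/2}[\kappa^{B,\hb}(y,x-y)]$; the cocycle identity only enters when checking associativity of $\diamond^{B}_{\hb}$ itself, which is already granted. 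Second, your involution step uses $T(y)^*=T(-y)$, which is not automatic for a general normalized $2$-cocycle but does hold here because $\kappa^{B,\hb}(y,-y)$ is the exponential of the flux through a degenerate triangle, hence equal to $1$; it is worth saying this explicitly. With these points made precise, the proof is complete.
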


By composing with the partial Fourier transformation, one gets representations of the pseudodifferential $C^*$-algebra
$\mathfrak B^B_\hb$, denoted by
\begin{equation}\label{gigi}
\Op^T_r:\mathfrak B^B_\hb\rightarrow\mathbf B(\H),\ \ \ \ \ \Op^T_r(f):=\Rep^T_r\left[(1\otimes\mathcal F)^{-1}(f)\right].
\end{equation}

Given any $\o\in \Omega$, we shall now construct a  representation of $\mathfrak C^B_\hb$ in $\H=L^2(\X)$.
Let $r_\o$ be the representation of $\A$ in $\mathbf B(\H)$ given for  $\varphi\in\A$, $u \in \H$ and $x \in \X$ by
\begin{equation*}
[r_\o(\varphi)u](x) = [\theta_x(\varphi)](\o)\;u(x) \equiv \varphi\big(\theta_x[\o]\big)\;u(x).
\end{equation*}
Let also $T_\o^\hb$ be the map from $\X$ into the set of unitary operators on $\H$ given by
\begin{equation*}
[T^\hb_\o(y)u](x):=\k^{B,\hb}(\o;x/\hb,y)\;u(x+\hb y)=e^{-\frac{i}{\hb}\Gamma^{B_\o}\left<0,x,x+\hb y\right>}u(x+\hb y).
\end{equation*}

\begin{proposition}\label{banala}
$(\H,r_\o,T^\hb_\o)$ is a covariant representation of the magnetic twisted $C^*$-dynamical system.
\end{proposition}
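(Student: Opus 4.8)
The plan is to verify the three conditions of Definition~\ref{RCT} directly for the triple $(\H, r_\o, T^\hb_\o)$ with $\H = L^2(\X)$. First I would check that $r_\o$ is a non-degenerate representation of $\A = C_0(\O)$: it is clearly a $^*$-homomorphism into $\mathbf B(\H)$ since for fixed $x$ the map $\varphi \mapsto \varphi(\th_x[\o])$ is a character of $\A$, so $[r_\o(\varphi)u](x) = \varphi(\th_x[\o])\,u(x)$ defines a multiplication operator of norm $\le \|\varphi\|_\infty$, multiplicative and $^*$-preserving in $\varphi$; non-degeneracy follows because for any $u \in L^2(\X)$ one can find $\varphi \in C_0(\O)$ with $\varphi(\th_x[\o])$ close to $1$ on a large compact set, so $r_\o(\varphi)u \to u$. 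Second, that $T^\hb_\o(y)$ is unitary is immediate since it is composition of the translation $u \mapsto u(\cdot + \hb y)$ with multiplication by the unimodular function $x \mapsto \k^{B,\hb}(\o;x/\hb,y)$; strong continuity in $y$ follows from strong continuity of translations on $L^2$ together with continuity of the flux $\Gamma^{B_\o}\langle 0,x,x+\hb y\rangle$ in $y$ (locally uniformly in $x$, using the parametrization from Proposition~\ref{si}) and dominated convergence.

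The substantive computations are conditions (ii) and (iii). For (iii), the covariance relation, I would compute $[T^\hb_\o(y)\,r_\o(\varphi)\,T^\hb_\o(y)^*u](x)$: the two unimodular phase factors from $T^\hb_\o(y)$ and its adjoint cancel (they are complex conjugates evaluated at the same shifted arguments), leaving the multiplication operator by $\varphi(\th_{x+\hb y}[\o]) = \varphi(\th_x[\th_{\hb y}[\o]]) = (\theta^\hb_y[\varphi])(\th_x[\o])$, which is exactly $r_\o(\theta^\hb_y(\varphi))$. This uses only the group law $\th_{x+\hb y} = \th_x \circ \th_{\hb y}$ and the definition $\theta^\hb_y = \theta_{\hb y}$.

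For (ii), the twisted multiplicativity $T^\hb_\o(x)T^\hb_\o(y) = r_\o[\k^{B,\hb}(x,y)]\,T^\hb_\o(x+y)$, I would apply the left side to $u$ and track both the translations and the accumulated phases. The translations compose to $u \mapsto u(\cdot + \hb x + \hb y)$, matching $T^\hb_\o(x+y)$. The phase collected is $\k^{B,\hb}(\o;z/\hb,x)$ from the first factor (evaluated at the point $z$ reached after the first translation, i.e.\ at $z = x_{\text{var}} + \hb x$... careful bookkeeping needed) times $\k^{B,\hb}(\o; x_{\text{var}}/\hb, y)$; I want to show the quotient of this product by the phase of $T^\hb_\o(x+y)$ equals $\big(r_\o[\k^{B,\hb}(x,y)]u\big)$'s multiplier, namely $\k^{B,\hb}(\th_{x_{\text{var}}}[\o]; x, y) = e^{-\frac{i}{\hb}\Gamma^{B_{\th_{x_{\text{var}}}[\o]}}\langle 0, \hb x, \hb x + \hb y\rangle}$. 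The identity needed is the fluxes' additivity: $\Gamma^{B_\o}\langle 0, s, s+\hb x\rangle + \Gamma^{B_\o}\langle 0, s+\hb x, s+\hb x+\hb y\rangle - \Gamma^{B_\o}\langle 0, s, s+\hb x+\hb y\rangle = \Gamma^{B_\o}\langle s, s+\hb x, s+\hb x+\hb y\rangle$ (Stokes on the boundary of the triangle with the three subtriangles), combined with the translation covariance of flux~(\ref{chec}), $\Gamma^{B_{\th_{\hb x}[\o]}}\langle\cdots\rangle = \Gamma^{B_\o}\langle \hb x + \cdots\rangle$, to shift the base point. This is essentially the same computation that proves $\k^B$ is a $2$-cocycle in Proposition~\ref{si}, evaluated along the orbit; indeed it amounts to the cocycle identity~(\ref{2coc}) for $\k^{B,\hb}$ translated by the group action, which is how covariant representations of twisted systems are standardly built.

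The main obstacle is purely bookkeeping: getting the arguments of the nested phase factors right in condition (ii), since each $T^\hb_\o(y)$ carries a phase depending on the current position variable, and that variable is itself shifted by the preceding operator. Once the substitution is done carefully, the required flux identity is exactly the triangle-additivity plus equivariance~(\ref{chec}) already established, so no new analytic input is needed.
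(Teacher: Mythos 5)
Your proposal is correct and follows essentially the same route as the paper's (very terse) proof: direct verification of the three conditions of Definition~\ref{RCT}, with condition (ii) reducing, via the translation covariance~(\ref{chec}) and Stokes' Theorem applied to the decomposition of the quadrilateral $\langle 0,z,z+\hb x,z+\hb x+\hb y\rangle$ into triangles, to exactly the flux-additivity identity you state. The paper additionally records the antisymmetry $\Gamma^{B_\o}\langle 0,x+\hb y,x\rangle=-\Gamma^{B_\o}\langle 0,x,x+\hb y\rangle$, which you handle implicitly when computing the adjoint $T^\hb_\o(y)^*$.
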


\begin{proof}
 Use the definitions, Stokes Theorem for the magnetic field $B_\o$ and the identities
$$
\Gamma^{B_\o}\left<x,x+\hb y,x+\hb y+\hb z\right>=\Gamma^{B_{\th_x[\o]}}\left<0,\hb y,\hb y+\hb z\right>
$$
and
$$
\Gamma^{B_\o}\left<0,x+\hb y,x\right>=-\Gamma^{B_\o}\left<0,x,x+\hb y\right>,
$$
valid for all $x,y,z\in\X$ and $\o\in\O$.
\end{proof}

The integrated form $\Rep^\hb_\o:=\Rep_{r_\o}^{T^\hb_\o}$ has the following action on $L^1(\X;\A)$:
\begin{eqnarray}\label{eqsurrep}
\nonumber \big[\Rep^\hb_\o(\Phi)u\big](x)&=&\int_\X \de z \; \Phi\left(\theta_{x+\frac{\hb z}{2}}[\o];z\right)\;
\k^{B,\hb}(\o;x/\hb,z)\;u(x+\hb z) \\
&=& \hb^{-n}\int_\X \de y \; \Phi\left(\theta_{\frac{x+y}{2}}[\o];\frac{1}{\hb}(y-x)\right)
\;e^{-\frac{i}{\hb}\Gamma^{B_\o}\left<0,x,y\right>}\;u(y),
\end{eqnarray}
and the corresponding representation $\Op_\o^\hb$ of the $C^*$-algebra $\mathfrak B^B_\hb$ has the following form on suitable
$f \in \mathfrak B^B_\hb$:
\begin{equation}\label{espertu}
\left[\Op^\hb_\o(f)u\right](x)=(2\pi\hb)^{-n}\int_\X \de y\int_{\X^*}\de \xi \;e^{\frac{i}{\hb}(x-y)\cdot\xi}
f\left(\theta_{\frac{x+y}{2}}[\o];\xi\right)\;e^{-\frac{i}{\hb}\Gamma^{B_\o}\left<0,x,y\right>}\;u(y).
\end{equation}
It is clear that $\Op^\hb_\o$ is {\it not} a faithful representation, since (\ref{espertu}) only involves the values taken by
$f$ on $\mathcal O_\o\times\X^*$, where $\mathcal O_\o$ is the orbit passing through $\o$.
It is rather easy to show that the kernel of $\Op^\hb_\o$ can be identified with the twisted crossed product
$C_0(\mathcal Q_\o)\rtimes_{\th^\hb}^{\kappa^B,\hb}\X$, constructed as explained above, with $\O$ replaced by
$\mathcal Q_\o:=\overline{\mathcal O_\o}$, the quasi-orbit generated by the point $\o$.
\begin{remark}
The expert in the theory of quantum magnetic fields might recognize in (\ref{espertu}) the expression of a magnetic
pseudodifferential operator with symbol $f\circ(\th_\o\otimes 1)$, written in the transverse gauge
for the magnetic field $B_\o$. Then it will be a simple exercise to write down analogous representations associated to
continuous (fields of) vector potentials $A:\O\rightarrow\wedge^1\X$ generating the magnetic field (i.e. $B_\o=dA_\o$,
 $\forall\,\o\in\O$) and to check an obvious principle of gauge-covariance.
\end{remark}
We show now that the family of representations $\{\Op^\hb_\o\mid\o\in\O\}$ actually has as a natural index set the orbit
space of the dynamical system, up to unitary equivalence.

\begin{proposition}\label{surRep}
Let $\o,\o'$ be two elements of $\O$, belonging to the same orbit under the action $\th$.
Then, for any $\hb\in(0,1]$, one has $\Rep_\o^\hb \cong \Rep_{\o'}^\hb$ and $\Op_\o^\hb \cong \Op_{\o'}^\hb$
(unitary equivalence of representations).
\end{proposition}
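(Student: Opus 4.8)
The plan is to exhibit an explicit intertwining unitary on $L^2(\X)$ implementing the equivalence. Suppose $\o' = \th_v[\o]$ for some $v \in \X$. First I would define, for $u \in L^2(\X)$, the operator $(U_v u)(x) := u(x + \hb v)$ — a unitary translation — possibly multiplied by a phase factor coming from the magnetic flux. The natural candidate is
\begin{equation*}
(W_v u)(x) := e^{-\frac{i}{\hb}\Gamma^{B_\o}\langle 0, x, x+\hb v\rangle}\,u(x+\hb v),
\end{equation*}
which is exactly $[T^\hb_\o(v)u](x)$ in the notation of the excerpt. One then checks the three intertwining relations: $W_v\, r_\o(\varphi)\,W_v^* = r_{\o'}(\varphi)$, $W_v\, T^\hb_\o(y)\, W_v^* = T^\hb_{\o'}(y)$ for all $y \in \X$, and as a consequence $W_v\,\Rep^\hb_\o(\Phi)\,W_v^* = \Rep^\hb_{\o'}(\Phi)$ for all $\Phi \in L^1(\X;\A)$, hence (by density and continuity) for all of $\mathfrak{C}^B_\hb$. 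The statement $\Op^\hb_\o \cong \Op^\hb_{\o'}$ then follows by composing with the partial Fourier transform, since $\Op^\hb_\o = \Rep^\hb_\o \circ (1 \otimes \mathcal F)^{-1}$ and $W_v$ does not involve the $\xi$-variable.

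The first relation is immediate: $[r_{\o'}(\varphi)u](x) = \varphi(\th_x[\th_v[\o]])\,u(x) = \varphi(\th_{x+v}[\o])\,u(x)$, and conjugating $r_\o(\varphi)$ by the translation part of $W_v$ produces exactly the shift $x \mapsto x+\hb v$ inside the argument — wait, one must be careful about the scaling: the translation in $W_v$ is by $\hb v$ while the action in $r_\o$ evaluates $\th_x$, so $W_v r_\o(\varphi) W_v^*$ acts as multiplication by $\varphi(\th_{x+\hb v}[\o])$, which is $\varphi(\th_x[\th_{\hb v}[\o]])$. So in fact the correct choice is $\o' = \th_{\hb v}[\o]$, or equivalently one takes $W_v$ with translation by $v$ (not $\hb v$) and adjusts the phase accordingly; either way, since $\o$ and $\o'$ lie on the same orbit there is \emph{some} $w \in \X$ with $\th_w[\o] = \o'$, and one simply takes the translation by $w$ together with the phase $e^{-\frac{i}{\hb}\Gamma^{B_\o}\langle 0,x,x+w\rangle}$. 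The phase is designed precisely so that the cocycle identity and Stokes' theorem for $B_\o$ — the same computations already used in the proof of Proposition~\ref{banala} — make the $T$-intertwining work out. The key algebraic input is the flux additivity $\Gamma^{B_\o}\langle 0,x,x+w\rangle + \Gamma^{B_{\th_w[\o]}}\langle 0, \cdot, \cdot\rangle - \Gamma^{B_\o}\langle 0, \cdot, \cdot\rangle$ combining correctly, which is a triangle-decomposition consequence of closedness of $B_\o$, i.e. identity~(\ref{chec}).

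The main obstacle — really the only point requiring care — is bookkeeping the interplay between the $\hb$-scaling in $\th^\hb_x = \th_{\hb x}$, the $\hb$-scaling in the translation argument of $T^\hb_\o$, and the $\frac{1}{\hb}$ in the flux exponent, so that the phase factor in $W_v$ is chosen to make \emph{both} the $T$-intertwining and the $\Rep$-intertwining come out with no leftover terms. Concretely, after writing $[W_v \Rep^\hb_\o(\Phi) W_v^* u](x)$ as an integral and changing variables, one needs the accumulated magnetic phases $\Gamma^{B_\o}\langle 0,x,x+w\rangle$, $\Gamma^{B_\o}\langle 0, x+w, \cdot\rangle$ from the $T^\hb_\o$ factor, and $-\Gamma^{B_\o}\langle 0, \cdot, \cdot + w\rangle$ from $W_v^*$ to reassemble, via Stokes/closedness, into exactly $\Gamma^{B_{\o'}}\langle 0, x, \cdot\rangle$ — the flux appearing in $\Rep^\hb_{\o'}$. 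This is the same mechanism as in Propositions~\ref{si} and~\ref{banala} and should be quotable as routine once the substitution is set up. I would present the unitary, state the three relations, verify the multiplication-operator one directly, verify the $T$-relation using Stokes' theorem as in Proposition~\ref{banala}, and then note that the integrated relation and the Fourier-transformed version follow formally.
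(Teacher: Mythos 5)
Your overall strategy---an explicit unitary of the form (magnetic phase)$\,\times\,$(translation), checked against the two covariance relations for $r$ and $T$ and then integrated---is exactly the paper's, and your handling of the $r$-intertwining and of the final passage to $\Rep^\hb$ and $\Op^\hb$ is fine. The problem is the specific unitary you commit to, and the error sits precisely at the step you describe as ``designed so that \dots the $T$-intertwining works out''. With $\th_w[\o]=\o'$ you take
$$(Wu)(x)=e^{-\frac{i}{\hb}\Gamma^{B_\o}\langle 0,x,x+w\rangle}\,u(x+w),$$
i.e.\ $W=T^\hb_\o(w/\hb)$. Conjugation by an element of the projective family $T^\hb_\o$ cannot send $T^\hb_\o(y)$ to $T^\hb_{\o'}(y)$ when $B\neq 0$: the cocycle relation gives
$T^\hb_\o(v)\,T^\hb_\o(y)\,T^\hb_\o(v)^*=r_\o\bigl[\kappa^{B,\hb}(v,y)\,\overline{\kappa^{B,\hb}(y,v)}\bigr]\,T^\hb_\o(y)$,
and the leftover commutator multiplier is a nontrivial function of $x$ --- this is exactly the noncommutativity of magnetic translations. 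Concretely, the flux identity your $W$ requires,
$$\Gamma^{B_\o}\langle 0,x,x+w\rangle+\Gamma^{B_\o}\langle 0,x+w,x+w+\hb y\rangle=\Gamma^{B_\o}\langle w,w+x,w+x+\hb y\rangle+\Gamma^{B_\o}\langle 0,x+\hb y,x+\hb y+w\rangle,$$
is not a decomposition of a single quadrilateral (the four triangles do not have their vertices in a common four-point set), so it is not a consequence of Stokes' theorem; already for a constant two-dimensional field $B=b$ the two sides differ by $b\,\hb\,(w_1y_2-w_2y_1)$.

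The fix is to put the translation vector, not $x$, in the middle slot of the triangle: the correct unitary is $(Vu)(x)=e^{-\frac{i}{\hb}\Gamma^{B_\o}\langle 0,w,w+x\rangle}\,u(x+w)$ --- the paper writes this with the roles of $\o,\o'$ reversed, as $e^{-\frac{i}{\hb}\Gamma^{B_{\o'}}\langle 0,x_0,x_0+x\rangle}u(x+x_0)$ with $\th_{x_0}[\o']=\o$. With this choice the four triangles occurring in the $T$-intertwining all have vertices in $\{0,w,w+x,w+x+\hb y\}$, the two sides of the required identity are the two diagonal decompositions of that quadrilateral, and Stokes' theorem closes the argument. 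Note that even for a constant field your phase and the correct one differ (by a sign), so this is not a matter of bookkeeping conventions but a genuine miss of the key formula; the remainder of your outline goes through once the unitary is corrected.
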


\begin{proof}
By assumption, there exists an element $x_0$ of $\X$ such that $\theta_{x_0}[\o'] = \o$. For $u\in \H$
and $x\in \X$ we define the unitary operator
\begin{equation*}
\left(U_{\o,\o'}^\hb\;\!u\right)(x):= e^{-\frac{i}{\hb}\Gamma^{B_{\o'}}\left<0,x_0,x_0+x\right>}\;\!u(x+x_0)\ .
\end{equation*}
To show unitary equivalence of the two representations, it is enough to show that for all $\varphi\in\A$ and $y\in\X$
\begin{equation*}
U_{\o,\o'}^\hb\,r_{\o'}(\varphi)=r_\o(\varphi)\,U_{\o,\o'}^\hb\quad \hbox{ and }
\quad U_{\o,\o'}^\hb\,T_{\o'}^\hb(y) =T_\o^\hb(y)\,U_{\o,\o'}^\hb.
\end{equation*}
The first one is obvious. The second one reduces to
\begin{align*}
\Gamma^{B_{\o'}}\left<0,x_0,x_0+x\right> &+ \Gamma^{B_{\o'}}\left<0,x_0+x,x_0+x+\hb y\right>
= \\
&
= \Gamma^{B_{\o'}}\left<x_0,x_0+x,x_0+x+\hb y\right>+\Gamma^{B_{\o'}}\left<0,x_0,x_0+x+\hb y\right>,
\end{align*}
which is true by Stokes Theorem.
\end{proof}

\begin{remark}
The Proposition reveals what we consider to be the main practical interest of the formalism we develop
in the present article. To a fixed real symbol $f$ and to a fixed value $\hb$ of Planck's constant one associates a family
$\{H^\hb_\o:=\Op^\hb_\o(f)\mid\o\in\O\}$
of self-adjoint magnetic pseudodifferential operators on the Hilbert space $\H:=L^2(\X)$, indexed by the points of a
dynamical system $(\O,\th,\X)$ and satisfying the equivariance condition
\begin{equation}\label{echivar}
H_{\theta_x[\omega]}^\hb=\left(U_{\o,\theta_x[\omega]}^\hb\right)^{-1}H_{\o}^\hb\,U_{\o,\theta_x[\omega]}^\hb,\ \ \ \ \
\forall\,(\o,x)\in\O\times\X.
\end{equation}
In concrete situations, such equivariance conditions usually carry some physical meaning.
In a future publication we are going to extend the formalism to unbounded symbols $f$, getting
realistic magnetic Quantum Hamiltonians organized in equivariant families, which will be studied in
the framework of spectral theory.
\end{remark}
To define other types of representations, we consider now $\O$ endowed with a $\th$-invariant measure $\mu$.
Such measures always exist, since $\X$ is
abelian hence amenable. We set $\H'$ for the Hilbert space $L^2(\O,\mu)$ and consider first the faithful
representation: $\tilde{r}:\A \to \mathbf B(\H')$ with $[\tilde{r}(\varphi)v](\o):=\varphi(\o)\;\!v(\o)$ for all $v\in \H'$
and $\o \in \O$. Then, (by a standard construction in the theory of twisted crossed products)
the \emph{regular representation of the magnetic $C^*$-dynamical system $(\A,\theta^\hb,\k^{B,\hb},\X)$
induced by} $\tilde{r}$ is the covariant representation $\big(L^2(\X;\H'),r,T^\hb\big)$:
\begin{equation*}
r:\A\to \mathbf B\big[L^2 (\X;\H')\big], \quad
[r(\varphi)w](\omega;x):=\left(\tilde r(\th_x(\varphi))[w(x)]\right)(\o)
=\varphi\big(\theta_x(\o)\big)\;\!w(\o;x)\ ,
\end{equation*}
\begin{equation*}
T^\hb:\X\to\U [L^2(\X;\H')], \quad [T^\hb(y) w](\o;x):=\k^{B,\hb}(\o;x/\hb,y)\;\!w(\o;x+\hb y).
\end{equation*}
We identify freely $L^2(\X;\H')$ with $L^2(\Omega\times\X)$ with the obvious product measure, so $r(\varphi)$ is the operator
of multiplication by $\varphi\circ\th$ in $L^2(\Omega\times\X)$. Due to Stokes' Theorem, this is again a covariant
representation. The integrated form $\mathfrak{REP}^\hb:=\Rep^{T^\hb}_r$ associated to $(r,T^\hb)$ is given on $L^1(\X;\A)$ by
\begin{equation*}
\big[\mathfrak{REP}^\hb(\Phi)w\big](\o;x)=\hb^{-n}\int_\X \de y \; \Phi\left(\theta_\frac{x+y}{2}[\o];\frac{y-x}{\hb}\right)
\; e^{-\frac{i}{\hb}\Gamma^{B_\o}\left<0,x,y\right>}\;w(\o;y)
\end{equation*}
and it admits the direct integral decomposition
\begin{equation}\label{directint}
\mathfrak{REP}^\hb(\Phi)=\int^{\oplus}_{\O} \de \mu (\o)\;\!\Rep_\o^\hb(\Phi).
\end{equation}
The group $\X$, being abelian, is amenable, and thus the regular representation $\mathfrak{REP}^\hb$ is faithful.
The corresponding representation $\mathfrak{OP}^\hb: \mathfrak B^B_\hb \to \mathbf B\left[L^2(\X;\H')\right]$
is given for $f$ with partial Fourier transform in $L^1(\X;\A)$ by
\begin{equation*}
\big[\mathfrak{OP}^\hb(f)w\big](\o;x)=(2\pi\hb)^{-n}\int_\X \int_{\X^*} \de y  \de \eta  e^{\frac{i}{\hb}(x-y)\cdot \eta}
f\left(\theta_\frac{x+y}{2}[\o],\eta\right) e^{-\frac{i}{\hb}\Gamma^{B_\o}\left<0,x,y\right>} w(\o;y)\ .
\end{equation*}

\section{Asymptotic expansion of the product}\label{asymptotic_expansion}

The proof of strict deformation quantization hinges on the following Theorem:
\begin{theorem}[Asymptotic expansion of the product]\label{theorem:asymptotic_expansion_product}
Assume the components of $B$ are in $BC^{\infty}(\O)$. Let $\Phi , \Psi \in \S\left(\X;C_0^\infty(\O)\right)$
and $\hbar \in (0,1]$. Then the product $\Phi \tcB \Psi$ can be expanded in powers of $\hbar$,
\begin{equation}
\Phi \tcB \Psi = \Phi \tcz \Psi - \hbar \tfrac{i}{2} \{ \Phi , \Psi \}^B + \hbar^2 R^{\diamond,2}_\hb(\Phi,\Psi),
\end{equation}
where $\{ \Phi , \Psi \}^B$ is defined as in equation (\ref{bunika}). All terms are in $\S\left(\X;C_0^\infty(\O)\right)$
and $R^{\diamond,2}_\hb(\Phi,\Psi)$ is bounded uniformly in $\hbar$,
 $\parallel R^{\diamond,2}_\hb(\Phi,\Psi)\parallel_\hb^B \leq C$.
\end{theorem}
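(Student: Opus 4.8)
\emph{Proof strategy.} The plan is to Taylor-expand, to second order in $\hbar$ with integral remainder, the three $\hbar$-dependent factors in the integrand of
$$(\Phi\tcB\Psi)(\omega;x)=\int_\X\de y\,\theta_{\frac\hbar2(y-x)}[\Phi(y)]\,\theta_{\frac\hbar2 y}[\Psi(x-y)]\,e^{-i\hbar\Lambda^B_\hbar(x,y)},$$
to multiply out, and to read off the three terms of the statement. Since $\delta^{\gamma}\Phi(y),\delta^{\gamma}\Psi(x-y)\in C_0^\infty(\O)$ with $C_0(\O)$-norms controlled by semi-norms of $\Phi,\Psi$, and the automorphisms $\theta_z$ are $C_0(\O)$-isometries, the maps $z\mapsto\theta_z[\Phi(y)]$ and $z\mapsto\theta_z[\Psi(x-y)]$ are smooth $C_0(\O)$-valued, with $\theta_{\frac\hbar2(y-x)}[\Phi(y)]=\Phi(y)+\tfrac\hbar2\sum_j(y_j-x_j)(\delta_j\Phi)(y)+\hbar^2(\cdots)$ and similarly for the second factor with $\tfrac\hbar2\sum_j y_j(\delta_j\Psi)(x-y)$; by the explicit parametrisation (\ref{lam}) together with $B^{jk}\in BC^\infty(\O)$, $\hbar\mapsto\Lambda^B_\hbar(x,y)$ is smooth with $\Lambda^B_0(x,y)(\omega)=\tfrac12\sum_{j,k}y_j(x_k-y_k)B^{jk}(\omega)$, so $e^{-i\hbar\Lambda^B_\hbar(x,y)}=1-i\hbar\Lambda^B_0(x,y)+\hbar^2(\cdots)$. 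Multiplying and integrating in $y$: the $\hbar^0$-term is $\int\de y\,\Phi(y)\Psi(x-y)=(\Phi\tcz\Psi)(x)$; the $\hbar^1$-term collects the three ways of promoting exactly one factor to first order, and --- rewriting the weights $y_j$ and $x_j-y_j$ as the operators $Q_j$ acting on the appropriate factor of the convolution and inserting $\Lambda^B_0$ --- is identified by inspection with $-\tfrac i2\{\Phi,\Psi\}^B$ in the form (\ref{bunika}); all remaining $\hbar^2$-contributions (the cross terms in which at least two of the three factors are taken beyond zeroth order, each carrying its integral Taylor remainder) make up $\hbar^2R^{\diamond,2}_\hbar(\Phi,\Psi)$, an absolutely convergent integral over $y$ and the Taylor parameters.

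That $R^{\diamond,2}_\hbar(\Phi,\Psi)\in\S(\X;C_0^\infty(\O))$, hence that all four terms of the identity lie in that space, is then immediate: $\Phi\tcB\Psi\in\S(\X;C_0^\infty(\O))$ by Proposition~\ref{intrucul}, $\Phi\tcz\Psi\in\S(\X;C_0^\infty(\O))$ (the $B=0$ case, or simply stability of the class under convolution), and $\{\Phi,\Psi\}^B\in\S(\X;C_0^\infty(\O))$ because $\S(\X^*;C_0^\infty(\O))$ is a Poisson subalgebra (Proposition~\ref{pesti}) and $1\otimes\mathcal F$ intertwines the Poisson structures; therefore $R^{\diamond,2}_\hbar(\Phi,\Psi)=\hbar^{-2}\bigl(\Phi\tcB\Psi-\Phi\tcz\Psi+\tfrac{i\hbar}2\{\Phi,\Psi\}^B\bigr)$ is a linear combination of elements of $\S(\X;C_0^\infty(\O))$.

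It remains to bound $R^{\diamond,2}_\hbar(\Phi,\Psi)$ in $\norm{\cdot}^B_\hbar$ uniformly in $\hbar\in(0,1]$. As $\mathfrak C^B_\hbar$ is the enveloping $C^*$-algebra of $\bigl(L^1(\X;C_0(\O)),\tcB\bigr)$, one has $\norm{\cdot}^B_\hbar\le\norm{\cdot}_{L^1(\X;C_0(\O))}$, so it suffices to bound $\int_\X\de x\,\bnorm{R^{\diamond,2}_\hbar(\Phi,\Psi)(x)}_{C_0(\O)}$ uniformly in $\hbar$. Writing out the remainder, $R^{\diamond,2}_\hbar(\Phi,\Psi)(x)$ is a finite sum of integrals over $y$ (and $s,t\in[0,1]$) of products of: images under various $\theta_z$ of $\delta$-derivatives of $\Phi$ at $y$ and of $\Psi$ at $x-y$; $\Lambda^B_\hbar(x,y)$ and its first two $\hbar$-derivatives, evaluated at interior Taylor points; and the unimodular phase. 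Taking $C_0(\O)$-norms --- using that the $\theta_z$ are isometric, that the phase has modulus one, and the magnetic-flux estimates of Lemma~\ref{mag_twistedXprod:lem:flux_estimates}, by which $\Lambda^B_\hbar(x,y)$ and its $\hbar$-derivatives are dominated, uniformly in $\hbar\in(0,1]$ and in $s,t\in[0,1]$, by polynomials in the $|y_j|$ and $|x_j-y_j|$ with coefficients built from finitely many $BC^\infty(\O)$-norms of the $B^{jk}$ --- one bounds $\bnorm{R^{\diamond,2}_\hbar(\Phi,\Psi)(x)}_{C_0(\O)}$ by a finite sum of terms $\int_\X\de y\,P(y,x-y)\,\bnorm{(\delta^{\beta}\Phi)(y)}_{C_0(\O)}\,\bnorm{(\delta^{\gamma}\Psi)(x-y)}_{C_0(\O)}$ with $P$ polynomial and the bound $\hbar$-independent. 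Integrating in $x$ and using Fubini, each such term factorises into a product of two weighted $L^1(\X;C_0(\O))$-norms of $\delta$-derivatives of $\Phi$ and of $\Psi$, finite and controlled by finitely many of the semi-norms $\norm{\Phi}_{a\alpha\beta}$ and $\norm{\Psi}_{a\alpha\beta}$ since $\Phi,\Psi\in\S(\X;C_0^\infty(\O))$. Hence $\norm{R^{\diamond,2}_\hbar(\Phi,\Psi)}^B_\hbar\le\norm{R^{\diamond,2}_\hbar(\Phi,\Psi)}_{L^1(\X;C_0(\O))}\le C$ with $C$ independent of $\hbar$.

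The purely routine points are the interchange of the $\hbar$-Taylor expansion with the $y$-integral, and (if one prefers to verify $R^{\diamond,2}_\hbar\in\S(\X;C_0^\infty(\O))$ directly rather than through the displayed identity) of differentiation in $x$ and along orbits with the $y$-integral; both are handled by Dominated Convergence exactly as in the proof of Proposition~\ref{intrucul}, the dominating functions being Schwartz in $y$ and independent of $\hbar$. The one genuinely delicate ingredient --- and the only place where $B^{jk}\in BC^\infty(\O)$, rather than mere smoothness along orbits, is used --- is the uniform-in-$\hbar$ control of the $\hbar$-derivatives of $\Lambda^B_\hbar(x,y)$: each differentiation of $\theta_{\hbar((s-\frac12)x+(t-s)y)}[B^{jk}]$ in $\hbar$ produces a factor $\bigl((s-\tfrac12)x_l+(t-s)y_l\bigr)\,\theta_{\hbar(\cdots)}[\delta_l B^{jk}]$, and one must check that iterating this keeps everything dominated by $\hbar$-independent polynomials in $x$ and $y$ with coefficients uniform over $s,t\in[0,1]$ --- which is precisely what the appendix on the magnetic flux delivers through Lemma~\ref{mag_twistedXprod:lem:flux_estimates}.
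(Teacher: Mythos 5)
Your proposal is correct and follows essentially the same route as the paper's proof: second-order Taylor expansion with integral remainders of the two group-action factors and of $e^{-i\hbar\Lambda^B_\hbar}$, identification of the zeroth- and first-order terms with $\Phi\tcz\Psi$ and $-\tfrac{i\hbar}{2}\{\Phi,\Psi\}^B$, membership of the remainder in $\S\left(\X;C_0^\infty(\O)\right)$ deduced from the displayed identity, and the uniform bound obtained via $\snorm{\cdot}^B_\hbar\le\snorm{\cdot}_{L^1}$ together with the flux estimates of Lemma~\ref{mag_twistedXprod:lem:flux_estimates}. No gaps.
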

\begin{proof}
We are going to use  Einstein's summation convention, \ie~repeated indices in a product are summed over.
Two types of terms in the product formula need to be expanded in $\hbar$, the group action of $\mathcal{X}$ on $\Omega$,
$$\bigl ( \theta_{\frac{\hbar}{2} y}[\Phi(x)] \bigr )(\omega) = \Phi \bigl ( \theta_{\frac{\hbar}{2} y}[\omega] ; x)
= \Phi(\omega;x) + \hbar \int_0^1 \dd \tau \, \tfrac{1}{2} y_j \, \theta_{\tau \frac{\hbar}{2} y} \bigl[(\delta_j \Phi)(\omega ; x) \bigr ]$$
$$=: \Phi(\omega;x) + \hbar \, \bigl ( R^{\theta,1}_{\hb,y}(\Phi) \bigr )(\omega;x)$$
$$= \Phi(\omega;x) + \tfrac{\hbar}{2} y_j \, (\delta_j \Phi)(\omega ; x) + \hbar^2 \int_0^1 \dd \tau \,
\tfrac{1}{4} (1 - \tau) \, y_j y_k \, \theta_{\tau \frac{\hbar}{2} y} \bigl [ (\delta_j \delta_k \Phi)(\omega ; x )\bigr ]$$
$$=: \Phi(\omega;x) + \tfrac{\hbar}{2} y_j \, (\delta_j \Phi)(\omega ; x) + \hbar^2 \bigl ( R^{\theta,2}_{\hb,y}(\Phi)\bigr )(\omega ; x),$$
and the exponential of the magnetic flux,
$$e^{- i \hbar \Lambda^B_{\hbar}(x,y)} = 1 + \hbar \int_0^1 \dd \tau \, \frac{\dd}{\dd \epsilon} \bigl
( e^{- i \epsilon \Lambda^B_{\epsilon}(x,y)} \bigr ) \bigg \vert_{\epsilon = \tau \hbar}
=: 1 + \hbar \, R^{\kappa,1}_\hb(x,y)$$
$$= 1 - \hbar i \Lambda^B_0(x,y) + \hbar^2 \int_0^1 \dd \tau \, (1 - \tau) \, \frac{\dd^2}{\dd \epsilon^2}
\bigl ( e^{- i \epsilon \Lambda^B_{\epsilon}(x,y)} \bigr ) \bigg \vert_{\epsilon = \tau \hbar}$$
$$=: 1 - \hbar \tfrac{i}{2} B^{jk} \, y_j \, (x_k - y_k) + \hbar^2 \, R^{\kappa,2}_\hb(x,y).$$
We will successively plug these expansions into the product formula, keeping only terms of $\order(\hbar^2)$:
$$\bigl ( \Phi \tcB \Psi \bigr )(x)= \int_{\mathcal{X}} \dd y \, \Bigl (
\Phi(y) + \tfrac{\hbar}{2} (y_j - x_j) \, (\delta_j \Phi)(y) + \hbar^2 \bigl ( R^{\theta,2}_{\hb,y - x}(\Phi)
\bigr )(y) \Bigr ) \, \theta_{\frac{\hbar}{2} y} [\Psi(x-y)] \, e^{- i \hbar \Lambda^B_{\hbar}(x,y)}$$
$$= \int_{\mathcal{X}} \dd y \, \Phi(y) \, \Bigl (
\Psi(x-y) + \tfrac{\hbar}{2} y_j \, (\delta_j \Psi)(x-y) + \hbar^2 \bigl ( R^{\theta,2}_{\hb,y}(\Psi) \bigr )(x-y)
\Bigr ) \, e^{- i \hbar \Lambda^B_{\hbar}(x,y)}+ $$
$$+ \frac{\hbar}{2} \int_{\mathcal{X}} \dd y \, (y_j - x_j) \, (\delta_j \Phi)(y) \, \Bigl ( \Psi(x-y) + \hbar \bigl
( R^{\theta,1}_{\hb,y} (\Psi) \bigr )(x-y) \Bigr ) \, e^{- i \hbar \Lambda^B_{\hbar}(x,y)}+ $$
$$+ \hbar^2 \int_{\mathcal{X}} \dd y \, \bigl ( R^{\theta,2}_{\hb,y - x}(\Phi) \bigr )(y) \, \theta_{\frac{\hbar}{2} y}
[\Psi(x-y)] \, e^{- i \hbar \Lambda^B_{\hbar}(x,y)}$$
$$= \int_{\mathcal{X}} \dd y \, \Phi(y) \, \Psi(x-y) \, \Bigl ( 1 - \hbar i \Lambda^B_0(x,y) +
\hbar^2 R^{\kappa,2}_\hb(x,y) \Bigr )+$$
$$+ \frac{\hbar}{2} \int_{\mathcal{X}} \dd y \, \Bigl ( -  (\delta_j \Phi)(y) \, (Q_j \Psi)(x-y) +
(Q_j \Phi)(y) \, (\delta_j \Psi)(x-y) \Bigr ) \, \Bigl ( 1 + \hbar R^{\kappa,1}_\hb(x,y) \Bigr )+ $$
$$+ \hbar^2 \int_{\mathcal{X}} \dd y \, \Bigl [ \bigl ( R^{\theta,2}_{\hb,y - x}(\Phi) \bigr )(y) \,
\theta_{\frac{\hbar}{2} y} [\Psi(x-y)] \, e^{- i \hbar \Lambda^B_{\hbar}(x,y)}
+ \Phi(y) \, \bigl ( R^{\theta,2}_{\hb,y}(\Psi) \bigr )(x-y) \, e^{- i \hbar \Lambda^B_{\hbar}(x,y)}
- \Bigr . $$
$$\Bigl .
- \tfrac{1}{2} (\delta_j \Phi)(y) \, \bigl ( Q_j R^{\theta,1}_{\hb,y} (\Psi) \bigr )(x-y) \, e^{- i \hbar
\Lambda^B_{\hbar}(x,y)}\Bigr ]$$
$$= \int_{\mathcal{X}} \dd y \, \Phi(y) \, \Psi(x-y)+$$
$$+ \frac{\hbar}{2} \int_{\mathcal{X}} \dd y \,  \Bigl ( (Q_j \Phi)(y) \, (\delta_j \Psi)(x-y)
- (\delta_j \Phi)(y) \, (Q_j \Psi)(x-y)
- i B^{jk} \, (Q_j \Phi)(y) \, (Q_k \Psi)(x-y)
    \Bigr )+$$
$$+ \hbar^2 \int_{\mathcal{X}} \dd y \, \Bigl [ \Bigl ( \bigl ( R^{\theta,2}_{\hb,y - x}(\Phi) \bigr )(y) \,
\theta_{\frac{\hbar}{2} y} [\Psi(x-y)]
+ \Phi(y) \, \bigl ( R^{\theta,2}_{\hb,y}(\Psi) \bigr )(x-y) \Bigr ) \, e^{- i \hbar \Lambda^B_{\hbar}(x,y)}
- \Bigr . $$
$$ \Bigl .
- \tfrac{1}{2} (\delta_j \Phi)(y) \, \bigl ( Q_j R^{\theta,1}_{\hb,y} (\Psi) \bigr )(x-y) \, e^{- i \hbar
\Lambda^B_{\hbar}(x,y)}
+ \Bigr . $$
$$ \Bigl .
+ \tfrac{1}{2} \bigl ( (Q_j \Phi)(y) \, (\delta_j \Psi)(x-y)
- (\delta_j \Phi)(y) \, (Q_j \Psi)(x-y) \bigr ) \, R^{\kappa,1}_\hb(x,y)
+ \Bigr . $$
$$ \Bigl .
+ \Phi(y) \, \Psi(x-y) \, R^{\kappa,2}_\hb(x,y)
\Bigr ]$$
$$=: \bigl ( \Phi \tcz \Psi \bigr )(x) - \hbar \tfrac{i}{2} \{ \Phi , \Psi \}^B(x) + \hbar^2 \bigl
( R^{\diamond,2}_{\hbar}(\Phi,\Psi) \bigr )(x).$$

In the above, we have used $(y_j - x_j) \, \Psi(x-y) = - (Q_j \Psi)(x - y)$, $y_j \, \Phi(y) = (Q_j \Phi)(y)$ and
the explicit expression for $\Lambda^B_0(x,y)$.
Clearly, the leading-order and sub-leading-order terms are again in $\S\left(\X;C_0^\infty(\O)\right)$. Thus also
$R^{\diamond,2}_{\hbar}(\Phi,\Psi) = \hbar^{-2} \bigl ( \Phi \tcB \Psi - \Phi \tcz \Psi + \hbar \tfrac{i}{2} \{
\Phi , \Psi \}^B \bigr )$ is an element of $\S\left(\X;C_0^\infty(\O)\right)$ for all $\hbar \in (0,1]$.

The most difficult part of the proof is to show that the $\hbar$-dependent $C^*$-norm of the remainder
$R^{\diamond,2}_\hb(\Phi,\Psi)$ can be uniformly bounded in $\hbar$. The first ingredient
is the fact that the $\hbar$-dependent $C^*$-norm of the twisted crossed product is dominated by the
$L^1 \bigl ( \mathcal{X} ; \A \bigr )$-norm for all values of $\hbar \in (0,1]$,
$$
\snorm{\Phi}^B_{\hbar} \leq \snorm{\Phi}_{L^1},\ \ \ \ \
\forall \Phi \in \S\left(\X;C_0^\infty(\O)\right) \subset L^1 \bigl ( \mathscr{X};C_0(\Omega) \bigr )\subset \mathfrak C^B_\hb.
$$
Hence, if we can find $\hbar$-independent $L^1$ bounds on each term of the remainder, we have also estimated the
$\hbar$-dependent $C^*$-norm uniformly in $\hbar$.

There are four distinct types of terms in the remainder. Let us start with the first: we define
$$\bigl ( R^{\diamond,2}_{\hb,1}(\Phi,\Psi) \bigr )(x) := \int_{\mathcal{X}} \dd y \, \bigl
( R^{\theta,2}_{\hb,y - x}(\Phi) \bigr )(y) \, \theta_{\frac{\hbar}{2} y} [\Psi(x-y)] \,
e^{- i \hbar \Lambda^B_{\hbar}(x,y)}.$$
Then we have
$$\bnorm{R^{\diamond,2}_{\hb,1}(\Phi,\Psi)}_{\hbar}^B \leq
\bnorm{R^{\diamond,2}_{\hb,1}(\Phi,\Psi)}_{L^1(\mathcal{X};\A)}
$$
$$\leq \int_{\mathcal{X}} \dd x \int_{\mathcal{X}} \dd y \, \bnorm{\bigl ( R^{\theta,2}_{\hb,y - x}(\Phi)
\bigr )(y)}_{\A} \, \bnorm{\theta_{\frac{\hbar}{2} y} [\Psi(x-y)]}_{\A} \, \bnorm{e^{- i \hbar
\Lambda^B_{\hbar}(x,y)}}_{\A}
$$
$$= \int_{\mathcal{X}} \dd x \int_{\mathcal{X}} \dd y \, \bnorm{\bigl ( R^{\theta,2}_{\hb,y - x}(\Phi)
\bigr )(y)}_{\A} \, \bnorm{\Psi(x-y)}_{\A}
$$
$$= \int_{\mathcal{X}} \dd x \int_{\mathcal{X}} \dd y \, \bnorm{\bigl ( R^{\theta,2}_{\hb,- x}(\Phi)
\bigr )(y)}_{\A} \, \bnorm{\Psi(x)}_{\A}.$$

We inspect $\bnorm{\bigl ( R^{\theta,2}_{\hb,- x}(\Phi) \bigr )(y)}_{\A}$ more closely:
$$
\bnorm{\bigl ( R^{\theta,2}_{\hb,- x}(\Phi) \bigr )(y)}_{\A} \leq \frac{1}{4} \int_0^1 \dd \tau
\babs{(-x_j)(-x_k)} \bnorm{\theta_{- \tau \frac{\hbar}{2} x}[(\delta_j \delta_k \Phi)(y)]}_{\A}
= \frac{1}{8} \sabs{x_j  x_k} \bnorm{(\delta_j \delta_k \Phi)(y)}_{\A}.
$$
If we plug that back into the estimate of the $L^1$ norm, we get
$$
\bnorm{R^{\diamond,2}_{\hb,1}(\Phi,\Psi)}^B_{\hbar} \leq \frac{1}{8} \int_{\mathcal{X}} \dd x \int_{\mathcal{X}}
\dd y  \bnorm{(\delta_j \delta_k \Phi)(y)}_{\A} \bnorm{(Q_j Q_k \Psi)(x)}_{\A}
= \frac{1}{8} \bnorm{\delta_j \delta_k \Phi}_{L^1} \bnorm{Q_j Q_k \Psi}_{L^1}.
$$
The right-hand side is finite by the definition of $\S\left(\X;C_0^\infty(\O)\right)$. Similarly, the second term can be
estimated, just the roles of $\Phi$ and $\Psi$ are reversed.

Now to the second type of term: we define
\begin{equation*}
\bigl ( R^{\diamond,2}_{\hb,3}(\Phi,\Psi) \bigr )(x) := - \frac{1}{2} \int_{\mathcal{X}} \dd y (\delta_j \Phi)(y)
 \bigl ( Q_j R^{\theta,1}_{\hb,y} (\Psi) \bigr )(x-y) \, e^{- i \hbar \Lambda^B_{\hbar}(x,y)}
\end{equation*}
and estimate
\begin{equation*}
2 \bnorm{R^{\diamond,2}_{\hb,3}(\Phi,\Psi)}_{\hbar}^B \leq 2 \bnorm{R^{\diamond,2}_{\hb,3}(\Phi,\Psi)}_{L^1}
\leq \int_{\mathcal{X}} \dd x \int_{\mathcal{X}} \dd y  \bnorm{(\delta_j \Phi)(y)}_{\A} \, \bnorm{\bigl
( Q_j R^{\theta,1}_{\hb,y}(\Psi) \bigr )(x )}_{\A} .
\end{equation*}
The last factor needs to be estimated by hand:
$$
\bnorm{\bigl ( Q_j R^{\theta,1}_{\hb,y}(\Psi) \bigr )(x)}_{\A}\leq \frac{1}{2} \int_0^1 \dd \tau \, \sabs{x_j \,
y_k} \, \bnorm{\theta_{\tau \frac{\hbar}{2} y}[(\delta_k \Psi)(x)]}_{\A}
= \frac{1}{2} \sabs{x_j \, y_k} \, \bnorm{(\delta_k \Psi)(x)}_{\A}.
$$
This leads to the bound
\begin{equation*}
\bnorm{R^{\diamond,2}_{\hb,3}(\Phi,\Psi)}_{\hbar}^B \leq \frac{1}{4} \int_{\mathcal{X}} \dd x \int_{\mathcal{X}}
\dd y \, \bnorm{(\delta_j \Phi)(y)}_{\A} \, \sabs{x_j \, y_k} \, \bnorm{(\delta_k \Psi)(x)}_{\A}
= \frac{1}{4} \bnorm{Q_k \delta_j \Phi}_{L^1} \, \bnorm{Q_j \delta_k \Psi}_{L^1}.
\end{equation*}
The right-hand side is again finite since $\Phi , \Psi \in \S\left(\X;C_0^\infty(\O)\right)$ does not depend on
$\hbar$.

Estimating the two magnetic terms is indeed a bit more involved: we define
$$\bigl ( R^{\diamond,2}_{\hb,4}(\Phi,\Psi) \bigr )(x) := \frac{1}{2} \int_{\mathcal{X}} \dd y \, (Q_j \Phi)(y) \,
(\delta_j \Psi)(x-y) \, R^{\kappa,1}_\hb(x,y).$$
The usual arguments show the $C^*$-norm can be estimated by
$$
\bnorm{R^{\diamond,2}_{\hb,4}(\Phi,\Psi)}_{\hbar}^B \leq \int_{\mathcal{X}} \dd x \int_{\mathcal{X}} \dd y \,
\bnorm{(Q_j \Phi)(y)}_{\A} \, \bnorm{(\delta_j \Psi)(x-y)}_{\A} \, \bnorm{R^{\kappa,1}_\hb(x,y)}_{\A}
$$
which warrants a closer inspection of the last term: first of all, we note that
$$
R^{\kappa,1}_\hb(x,y) = \int_0^1 \dd \tau \, \frac{\dd}{\dd \epsilon} \bigl ( e^{- i \epsilon
\Lambda^B_{\epsilon}(x,y)} \bigr ) \bigg \vert_{\epsilon = \tau \hbar}
= \int_0^1 \dd \tau \, \Bigl ( - i \Lambda^B_{\epsilon}(x,y) - i \epsilon \, \tfrac{\dd}{\dd \epsilon}
\Lambda^B_{\epsilon}(x,y) \Bigr ) \, e^{- i \epsilon \Lambda^B_{\epsilon}(x,y)} \Big \vert_{\epsilon = \tau \hbar}.
$$
If we use Lemma~\ref{mag_twistedXprod:lem:flux_estimates} and $\hbar \leq 1$,  this leads to the following
norm estimate of $R^{\kappa,1}_\hb(x,y)$:
$$
\bnorm{R^{\kappa,1}_\hb(x,y)}_{\A} \leq \int_0^1 \dd \tau \, \Bigl (
\bnorm{\Lambda^B_{\tau \hbar}(x,y)}_{\A} + \hbar \tau \, \norm{\tfrac{\dd}{\dd \epsilon} \Lambda^B_{\epsilon}(x,y)
\Big \vert_{\epsilon = \tau \hbar}}_{\A} \Bigr ) \, \bnorm{e^{- i \epsilon \Lambda^B_{\tau \hbar}(x,y)}}_{\A}
$$
$$\leq \bnorm{B^{jk}}_{\A} \, \sabs{y_j} \, \sabs{x_k - y_k} + \tfrac{1}{2} \bnorm{\delta_l B^{jk}}_{\A} \,
\sabs{y_j} \, \sabs{x_k - y_k} \, \bigl ( \sabs{x_l - y_l} + \sabs{y_l} \bigr ).$$
Put together, this allows us to estimate the norm of $R^{\diamond,2}_{\hb,4}$ by
$$
\bnorm{R^{\diamond,2}_{\hb,4}(\Phi,\Psi)}_{\hbar}^B \leq \bnorm{B^{mk}}_{\A} \, \snorm{Q_j Q_m \Phi}_{L^1} \,
\snorm{Q_k \delta_j \Psi}_{L^1}+ $$
$$+ \tfrac{1}{2} \bnorm{\delta_l B^{mk}}_{\A} \, \Bigl ( \snorm{Q_j Q_m Q_l \Phi}_{L^1} \, \snorm{Q_k \delta_j
\Psi}_{L^1}+ \snorm{Q_j Q_m \Phi}_{L^1} \, \snorm{Q_k Q_l \delta_j \Psi}_{L^1} \Bigr ).$$
Now on to the last term,
$$\bigl ( R^{\diamond,2}_{\hb,6}(\Phi,\Psi) \bigr )(x) := \int_{\mathcal{X}} \dd y \, \Phi(y) \, \Psi(x-y) \,
R^{\kappa,2}_\hb(x,y).$$
Using the explicit form of $R^{\kappa,2}_\hb(x,y)$,
$$
R^{\kappa,2}_\hb(x,y) = \int_0^1 \dd \tau \, (1 - \tau) \, \frac{\dd^2}{\dd \epsilon^2} \bigl ( e^{- i
\epsilon \Lambda^B_{\epsilon}(x,y)} \bigr ) \bigg \vert_{\epsilon = \tau \hbar}$$
$$= \int_0^1 \dd \tau \, (1 - \tau) \, \Bigl [ - i 2 \tfrac{\dd}{\dd \epsilon} \Lambda^B_{\epsilon}(x,y) - i
\epsilon \tfrac{\dd^2}{\dd \epsilon^2} \Lambda^B_{\epsilon}(x,y) - \Bigl ( \Lambda^B_{\epsilon}(x,y) + \epsilon
\, \tfrac{\dd}{\dd \epsilon} \Lambda^B_{\epsilon}(x,y) \Bigr )^2 \Bigr ] \Big \vert_{\epsilon = \tau \hbar} \,
e^{- i \tau \hbar \Lambda^B_{\tau \hbar}(x,y)},$$
in conjunction with the estimates from Lemma~\ref{mag_twistedXprod:lem:flux_estimates} (which are uniform in $\tau$),
we get
$$\bnorm{R^{\kappa,2}_\hb(x,y)}_{\A} \leq \int_0^1 \dd \tau \, (1 - \tau) \,
\left [ 2 \bnorm{\tfrac{\dd}{\dd \epsilon} \Lambda^B_{\epsilon}(x,y)
\big \vert_{\epsilon = \tau \hbar}}_{\A} + \tau \bnorm{\tfrac{\dd^2}{\dd \epsilon^2}
\Lambda^B_{\epsilon}(x,y) \big \vert_{\epsilon = \tau \hbar}}_{\A}
+ \right . $$
$$ \left .
+ \Bigl ( \bnorm{\Lambda^B_{\hbar \tau}(x,y)}_{\A} + \tau \, \bnorm{\tfrac{\dd}{\dd \epsilon}
\Lambda^B_{\epsilon}(x,y) \big \vert_{\epsilon = \tau \hbar}} \Bigr )^2 \right ] \,
\bnorm{e^{- i \tau \hbar \Lambda^B_{\tau \hbar}(x,y)}}_{\A}$$
$$= \int_0^1 \dd \tau \, (1 - \tau) \, \left [
2 \bnorm{\tfrac{\dd}{\dd \epsilon} \Lambda^B_{\epsilon}(x,y) \big \vert_{\epsilon = \tau \hbar}}_{\A} +
\tau \bnorm{\tfrac{\dd^2}{\dd \epsilon^2} \Lambda^B_{\epsilon}(x,y) \big \vert_{\epsilon = \tau \hbar}}_{\A}
+ \right .$$
$$ \left .
+ \bnorm{\Lambda^B_{\hbar \tau}(x,y)}_{\A}^2 + 2 \tau \, \bnorm{\Lambda^B_{\hbar \tau}(x,y)}_{\A} \,
\bnorm{\tfrac{\dd}{\dd \epsilon} \Lambda^B_{\epsilon}(x,y) \big \vert_{\epsilon = \tau \hbar}}_{\A} +
\tau^2 \bnorm{\tfrac{\dd}{\dd \epsilon} \Lambda^B_{\epsilon}(x,y) \big \vert_{\epsilon = \tau \hbar}}_{\A}^2
\right ].
$$
Hence, we can bound the $\hbar$-dependent $C^*$-norm of $R^{\diamond,2}_{\hb,6}$ by
$$
\bnorm{R^{\diamond,2}_{\hb,6}}_{\hbar}^B \leq \bnorm{\delta_l B^{jk}}_{\A}  \Bigl ( \bnorm{Q_j \Phi}_{L^1}
\bnorm{Q_k Q_l \Psi}_{L^1} + \bnorm{Q_j Q_l \Phi}_{L^1} \, \bnorm{Q_k \Psi}_{L^1} \Bigr )+ $$
$$+ \tfrac{1}{6} \bnorm{\delta_l \delta_m B^{jk}}_{\A} \Bigl ( \bnorm{Q_j \Phi}_{L^1}
\bnorm{Q_k Q_l Q_m \Psi}_{L^1} \!+ \!\bnorm{Q_j Q_m \Phi}_{L^1} \bnorm{Q_k Q_l \Psi}_{L^1}\! +
\bnorm{Q_j Q_l Q_m \Phi}_{L^1} \bnorm{Q_k \Psi}_{L^1} \Bigr )+$$
$$+ \tfrac{1}{2} \bnorm{B^{jk}}_{\A}  \bnorm{B^{j'k'}}_{\A}  \bnorm{Q_j Q_{j'} \Phi}_{L^1}
\bnorm{Q_k Q_{k'} \Psi}_{L^1}+$$
$$+ \tfrac{1}{3} \bnorm{B^{jk}}_{\A} \, \bnorm{\delta_{l'} B^{j'k'}}_{\A}  \Bigl ( \bnorm{Q_j Q_{j'}
\Phi}_{L^1}  \bnorm{Q_k Q_{k'} Q_{l'} \Psi}_{L^1} + \bnorm{Q_j Q_{j'} Q_{l'} \Phi}_{L^1}  \bnorm{Q_k Q_{k'}
\Psi}_{L^1} \Bigr )+$$
$$+ \tfrac{1}{12} \bnorm{\delta_l B^{jk}}_{\A}  \bnorm{\delta_{l'} B^{j'k'}}_{\A}  \Bigl ( \bnorm{Q_j Q_{j'}
\Phi}_{L^1} \bnorm{Q_k Q_{k'} Q_l Q_{l'}\Psi}_{L^1}+$$
$$+ 2 \bnorm{Q_j Q_{j'} Q_l \Phi}_{L^1} \bnorm{Q_k Q_{k'} Q_{l'} \Psi}_{L^1}+ \bnorm{Q_j Q_{j'} Q_l Q_{l'}
\Phi}_{L^1} \, \bnorm{Q_k Q_{k'} \Psi}_{L^1} \Bigr ).$$

Putting all these individual estimates together yields a bound on $\bnorm{R^{\diamond,2}_\hb(\Phi,\Psi)}_{\hbar}^B$
which is uniform in $\hbar$ and the proof of the Theorem is finished.
\end{proof}
\begin{corollary}\label{corollary:asymptotic_expansion_product}
Assume the components of $B$ are in $BC^{\infty}(\Omega)$. Let $f , g \in \mathcal{S} \bigl ( \mathscr{X}^* ; C_0^\infty(\O)
\bigr )$ and $\hbar \in (0,1]$. Then the product $f \sharp^B_{\hbar} g$ can be expanded in powers of $\hbar$,
\begin{equation}
f \sharp^B_{\hbar} g = f \, g - \hbar \tfrac{i}{2} \{ f , g \}_B + \hb^2 R^{\sharp,2}_\hb(f,g),
\end{equation}
where $fg$ is the pointwise product and $\{ f , g \}_B$ is the magnetic Poisson bracket defined as in
equation (\ref{oisson}). All terms are in
$\mathcal{S} \bigl ( \mathscr{X}^* ; C_0^\infty(\O)\bigr )$ and the remainder satisfies
$\parallel R^{\sharp,2}_\hb(f,g)\parallel_{\mathfrak B^B_\hb}\,\le C$ uniformly in $\hbar$.
\end{corollary}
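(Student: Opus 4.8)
The plan is to obtain the Corollary from Theorem~\ref{theorem:asymptotic_expansion_product} simply by transporting it through the partial Fourier transform $1\otimes\mathcal F$, since every object appearing in the Corollary was \emph{defined} as the $1\otimes\mathcal F$-image of the corresponding object in the Theorem. Concretely, set $\Phi:=(1\otimes\mathcal F)^{-1}f$ and $\Psi:=(1\otimes\mathcal F)^{-1}g$. Because $1\otimes\mathcal F$ restricts to a bijection of $\S(\X;C_0^\infty(\O))$ onto $\S(\X^*;C_0^\infty(\O))$ (as noted around~(\ref{novela})), we have $\Phi,\Psi\in\S(\X;C_0^\infty(\O))$, so Theorem~\ref{theorem:asymptotic_expansion_product} applies and yields
\[
\Phi \tcB \Psi = \Phi \tcz \Psi - \hbar \tfrac{i}{2} \{ \Phi , \Psi \}^B + \hbar^2 R^{\diamond,2}_\hb(\Phi,\Psi),
\]
with all terms in $\S(\X;C_0^\infty(\O))$ and $\snorm{R^{\diamond,2}_\hb(\Phi,\Psi)}^B_\hbar \le C$ uniformly in $\hbar\in(0,1]$.

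Next I would apply $1\otimes\mathcal F$ to this identity. By~(\ref{novela}) the left-hand side becomes $f\sharp^B_\hbar g$; by~(\ref{bunica}) one has $(1\otimes\mathcal F)(\Phi\tcz\Psi)=fg$, the pointwise product; and by the very definition~(\ref{bunika}) one has $(1\otimes\mathcal F)\{\Phi,\Psi\}^B=\{f,g\}_B$, the magnetic Poisson bracket~(\ref{oisson}). Setting $R^{\sharp,2}_\hb(f,g):=(1\otimes\mathcal F)R^{\diamond,2}_\hb(\Phi,\Psi)$ therefore gives exactly the asserted expansion $f\sharp^B_\hbar g = fg - \hbar\tfrac{i}{2}\{f,g\}_B + \hbar^2 R^{\sharp,2}_\hb(f,g)$. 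That each of the three terms lies in $\S(\X^*;C_0^\infty(\O))$ is again immediate from the bijectivity of $1\otimes\mathcal F$ between the two Schwartz-type spaces together with the corresponding statement in Theorem~\ref{theorem:asymptotic_expansion_product}.

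For the uniform norm bound I would invoke the fact, recalled above, that $1\otimes\mathcal F$ extends from $L^1(\X;C_0(\O))$ to an (isometric) $^*$-isomorphism $\mathfrak C^B_\hbar\to\mathfrak B^B_\hbar$. Consequently $\snorm{R^{\sharp,2}_\hb(f,g)}_{\mathfrak B^B_\hbar}=\snorm{R^{\diamond,2}_\hb(\Phi,\Psi)}^B_\hbar\le C$, with the same constant $C$ as in Theorem~\ref{theorem:asymptotic_expansion_product}, uniformly in $\hbar$.

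There is essentially no obstacle: the entire content of the Corollary is already packaged in Theorem~\ref{theorem:asymptotic_expansion_product}, and the only thing to check — that Fourier transport carries $\tcz$, $\{\cdot,\cdot\}^B$ and the $\mathfrak C^B_\hbar$-norm to pointwise multiplication, $\{\cdot,\cdot\}_B$ and the $\mathfrak B^B_\hbar$-norm respectively — holds by definition of those objects. (Should an ``intrinsic'' derivation be desired, one could instead rerun the argument of Theorem~\ref{theorem:asymptotic_expansion_product} directly on the symbol formula~(\ref{chiueisan}), Taylor-expanding $\Theta_{(y,\eta)}$ and the flux phase in $\hbar$; but this only re-expresses the same computation on the Fourier side and is strictly more cumbersome, so I would not pursue it.)
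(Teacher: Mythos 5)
Your proposal is correct and follows essentially the same route as the paper: the authors likewise deduce the Corollary in one step from Theorem~\ref{theorem:asymptotic_expansion_product} via equations \eqref{novela}, \eqref{bunica}, \eqref{bunika}, noting that $1\otimes\mathcal F$ is a bijection between the two Schwartz-type spaces extending to an isomorphism of $\mathfrak C^B_\hb$ onto $\mathfrak B^B_\hb$ (which preserves the norm by the very construction of $\mathfrak B^B_\hb$). Your write-up merely makes explicit the term-by-term bookkeeping that the paper leaves implicit.
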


\begin{proof}
The proof follows from equations \eqref{novela}, \eqref{bunica},  \eqref{bunika} and
Theorem~\ref{theorem:asymptotic_expansion_product},
keeping in mind that the partial Fourier transforms are isomorphisms $\mathcal{S} \bigl ( \mathscr{X}^* ; C^\infty(\O)
\bigr ) \overset{\longrightarrow}{\longleftarrow} \mathcal{S} \bigl ( \mathscr{X} ; C^\infty(\O) \bigr )$ that extend
to automorphisms between the $C^*$-algebras $\mathfrak B^B_\hb$ and $\mathfrak C^B_\hb$.
\end{proof}

\section{Strict deformation quantization}\label{defoq}

To make this precise, we repeat an already standard concept.
For more details and motivation, the reader could see \cite{Rie1,Rie2,La} and references therein.

\begin{definition}
Let $(\mathcal S,\circ,\{\cdot , \cdot\})$ be a real Poisson algebra which is densely contained on the selfadjoint part
$\CCC_{0,\R}$ of an abelian $C^*$-algebra $\CCC_0$. {\rm A strict deformation quantization of the Poisson algebra}
$\mathcal S$ is a family of $\R$-linear injections $\bigl ( \mathfrak{Q}_\hb:\mathcal
S\rightarrow \CCC_{\hb,\R} \bigr )_{\hb\in I}$, where $I\subset \R$ contains $0$ as an accumulation point,
$\CCC_{\hb,\R}$ is the selfadjoint part of the $C^*$-algebra $\CCC_\hb$, with products and norms denoted by
$\diamond_\hb$ and $\snorm{\cdot}_\hb$, $\mathfrak{Q}_0$ is just the inclusion map and $\mathfrak{Q}_\hb(\mathcal S)$
is a subalgebra of $\CCC_{\hb,\R}$.

The following conditions are required for each $\Phi,\Psi\in\mathcal S$
\begin{enumerate}[(i)]
\item
{\rm Rieffel axiom}: the mapping $I \ni \hb \mapsto \bnorm{\mathfrak{Q}_\hb(\Phi)}_{\hb}$ is continuous.
\item
{\rm Von Neumann axiom}:
$$
\lim_{\hb \rightarrow 0} \bnorm{\tfrac{1}{2} \left[\mathfrak{Q}_\hb(\Phi)\diamond_{\hb}\mathfrak{Q}_\hb(\Psi) +
\mathfrak{Q}_\hb(\Psi)\diamond_{\hbar}\mathfrak{Q}_\hb(\Phi)\right]-\mathfrak{Q}_\hb(\Phi\circ\Psi)}_{\hb} = 0.
$$
\item
{\rm Dirac axiom}:
$$
\lim_{\hb \rightarrow 0} \Bnorm{\tfrac{i}{\hb}\left[\mathfrak{Q}_\hb(\Phi)\diamond_{\hb}\mathfrak{Q}_\hb(\Psi)
-\mathfrak{Q}_\hb(\Psi)\diamond_{\hbar}\mathcal{Q}_\hb(\Phi) \right] - \mathfrak{Q}_\hb(\{\Phi,\Psi\}) }_{\hb} = 0.
$$
\end{enumerate}
\end{definition}

Putting this into the present context, we have

\begin{theorem}\label{pastur}
Assume that $B^{jk} \in BC^\infty(\O)$ and $I=[0,1]$. Then the family of injections
$$
\left(\,\S\left(\X,C_0^\infty(\O)\right)_\R\hookrightarrow \CCC^{B}_{\hb,\R}\,\right)_{\hb\in I}
$$
defines a strict deformation quantization.
\end{theorem}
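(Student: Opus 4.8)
The plan is to verify the three axioms of strict deformation quantization directly, using the machinery assembled in the previous sections. The family of injections in question is $\mathfrak Q_\hb:\S\bigl(\X;C_0^\infty(\O)\bigr)_\R\hookrightarrow\CCC^B_{\hb,\R}$, where for $\hb>0$ this is the inclusion coming from Proposition~\ref{intrucul} (which guarantees $\S\bigl(\X;C_0^\infty(\O)\bigr)$ is a $^*$-subalgebra of $\CCC^B_\hb$) and $\mathfrak Q_0$ is the inclusion into $\CCC_0$ as a Poisson subalgebra, established in Proposition~\ref{pesti}. Thus the algebraic preliminaries --- that each $\mathfrak Q_\hb$ is a well-defined $\R$-linear injection with range a $^*$-subalgebra --- are already in place, and it remains only to check the Rieffel, von Neumann and Dirac axioms for fixed $\Phi,\Psi\in\S\bigl(\X;C_0^\infty(\O)\bigr)_\R$.

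\textbf{Von Neumann and Dirac axioms.} These follow almost immediately from Theorem~\ref{theorem:asymptotic_expansion_product}. Writing $\Phi\tcB\Psi=\Phi\tcz\Psi-\hb\tfrac{i}{2}\{\Phi,\Psi\}^B+\hb^2 R^{\diamond,2}_\hb(\Phi,\Psi)$ and the analogous expansion for $\Psi\tcB\Phi$, one symmetrizes and antisymmetrizes. For the von Neumann axiom, the zeroth-order terms $\Phi\tcz\Psi$ coincide (the commutative product is symmetric) and equal $\mathfrak Q_0(\Phi\tcz\Psi)$, the first-order Poisson-bracket terms cancel in the symmetric combination (the bracket is antisymmetric), so $\tfrac12(\Phi\tcB\Psi+\Psi\tcB\Phi)-\mathfrak Q_\hb(\Phi\tcz\Psi)=\tfrac{\hb^2}{2}\bigl(R^{\diamond,2}_\hb(\Phi,\Psi)+R^{\diamond,2}_\hb(\Psi,\Phi)\bigr)$, whose $\CCC^B_\hb$-norm is bounded by $C\hb^2\to0$ by the uniform remainder estimate. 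For the Dirac axiom, in the antisymmetric combination $\tfrac{i}{\hb}(\Phi\tcB\Psi-\Psi\tcB\Phi)$ the zeroth-order terms cancel, the first-order terms add up to produce $\{\Phi,\Psi\}^B=\mathfrak Q_\hb(\{\Phi,\Psi\}^B)$ exactly, and one is left with $i\hb\bigl(R^{\diamond,2}_\hb(\Phi,\Psi)-R^{\diamond,2}_\hb(\Psi,\Phi)\bigr)$, again $\order(\hb)$ in the relevant $C^*$-norm. Note one small point: the axioms as stated use $\{\Phi,\Psi\}$ for the Poisson bracket of $\S\bigl(\X;C_0^\infty(\O)\bigr)$, which is precisely $\{\Phi,\Psi\}^B$ of~(\ref{bunika}) transported from~(\ref{oisson}) by partial Fourier transform, so everything is internally consistent.

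\textbf{Rieffel axiom.} The continuity of $\hb\mapsto\snorm{\mathfrak Q_\hb(\Phi)}^B_\hb$ on $[0,1]$ is the delicate point and the expected main obstacle, since the $C^*$-norms $\snorm{\cdot}^B_\hb$ live on different algebras for different $\hb$. Continuity at interior points $\hb_0\in(0,1]$ can be obtained by exhibiting an explicit $^*$-isomorphism (or a continuous field structure) relating $\CCC^B_\hb$ for nearby $\hb$; concretely, the rescalings $\theta^\hb_x=\theta_{\hb x}$ and $\k^{B,\hb}$ depend continuously on $\hb$, and on the dense subalgebra $L^1(\X;\A)$ the products $\diamond^B_\hb$ and involutions vary norm-continuously in the $L^1$-norm (using the explicit parametrization~(\ref{lam}) of the scaled flux and dominated convergence), which propagates to continuity of the enveloping $C^*$-norms. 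The only genuinely new issue is lower semicontinuity at $\hb=0$: one must show $\limsup_{\hb\to0}\snorm{\mathfrak Q_\hb(\Phi)}^B_\hb\le\snorm{\Phi}_{\CCC_0}$ and the matching lower bound. The upper bound follows from $\snorm{\Phi}^B_\hb\le\snorm{\Phi}_{L^1}$ combined with the expansion (to see that the dominant behaviour as $\hb\to0$ is governed by $\Phi\tcz\Psi$), or more robustly by representing via $\mathfrak{REP}^\hb$ and a stationary-phase/direct-integral argument. For the lower bound one can use the faithful representations $\Op^\hb_\o$ of~(\ref{espertu}): these are magnetic Weyl quantizations, for which the standard semiclassical fact $\lim_{\hb\to0}\snorm{\Op^\hb_\o(f)}_{\mathbf B(L^2(\X))}=\snorm{f|_{\Q_\o\times\X^*}}_\infty$ holds (the magnetic term is $\order(\hb)$ and hence negligible in the limit), and taking the supremum over $\o$ recovers $\snorm{\Phi}_{\CCC_0}=\snorm{(1\otimes\F)\Phi}_\infty$. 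Assembling the interior continuity with the two-sided estimate at $0$ gives the Rieffel axiom, and with all three axioms verified the theorem follows. A cleaner alternative, if one prefers to avoid the semiclassical norm analysis by hand, is to invoke the general continuous-field results for twisted crossed products (cf.~\cite{PR1,PR2}) applied to the field $(\CCC^B_\hb)_{\hb\in[0,1]}$ built from the continuously varying data $(\theta^\hb,\k^{B,\hb})$, which yields the Rieffel axiom directly; the remaining axioms are then exactly as above.
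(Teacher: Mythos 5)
Your treatment of the von Neumann and Dirac axioms is correct and is exactly the paper's route: the paper merely states that these axioms ``are direct consequences of Theorem~\ref{theorem:asymptotic_expansion_product}'', and your symmetrization/antisymmetrization of the expansion $\Phi\tcB\Psi=\Phi\tcz\Psi-\hb\tfrac{i}{2}\{\Phi,\Psi\}^B+\hb^2R^{\diamond,2}_\hb(\Phi,\Psi)$, together with the uniform $C^*$-norm bound on the remainder, is precisely the computation being alluded to (you supply more detail than the paper does). The divergence is in the Rieffel axiom. The paper makes no direct argument there: it says the axiom ``can be checked exactly as in \cite{MP2}, which builds on results from \cite{Ni,Rie}'', i.e.\ it invokes the continuous-field / $C_0(X)$-algebra machinery for twisted crossed products with continuously varying action and cocycle --- which is exactly the ``cleaner alternative'' in your last sentence. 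Your primary, hands-on route has genuine gaps as written: (a) norm-continuity of $\hb\mapsto\Phi\diamond^B_\hb\Psi$ in $L^1(\X;\A)$ does not by itself ``propagate to continuity of the enveloping $C^*$-norms'' $\snorm{\cdot}^B_\hb$, since these are suprema over representations of \emph{different} algebras; establishing upper and lower semicontinuity of the field of norms is the entire content of the results in \cite{Ni,Rie} that you would be bypassing. (b) The representations $\Op^\hb_\o$ are explicitly \emph{not} faithful (the paper notes this right after (\ref{espertu})); only the direct integral over a $\th$-invariant measure is, so your lower bound needs a supremum over a family of $\o$'s meeting every quasi-orbit, and the asserted semiclassical limit $\lim_{\hb\to 0}\snorm{\Op^\hb_\o(f)}_{\mathbf B(L^2(\X))}=\snorm{f|_{\Q_\o\times\X^*}}_\infty$ is itself a nontrivial statement requiring proof. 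If you replace that sketch by the citation to \cite{MP2,Ni,Rie} (observing, as the paper does, that nothing in those arguments uses that $\A$ consists of functions on $\X$ itself), your proof is complete and coincides with the paper's.
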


\begin{proof}
By Proposition~\ref{pesti} and Proposition~\ref{intrucul}, $\mathcal{S} \bigl ( \X , C^{\infty}_0(\Omega) \bigr )_\R$
can be seen a Poisson algebra with respect to $\diamond_0$ and $\{ \cdot , \cdot \}^B$ as well as a subalgebra of the real part
of each of the twisted crossed product $\mathfrak{C}^B_{\hbar}$.

Von Neumann and Dirac axioms are direct consequences of Theorem~\ref{theorem:asymptotic_expansion_product}.

The Rieffel axiom can be checked exactly as in \cite{MP2}, which builds on results from \cite{Ni,Rie}. The fact that
the algebra $\A$ in \cite{MP2} consisted of continuous functions defined on the group $\X$ itself does not play any role here.
\end{proof}

A partial Fourier transform transfers these results directly to $\mathcal{S} \bigl ( \X^* , C_0^{\infty}(\Omega) \bigr )$
and $\BB^B_{\hbar}$, objects which are  natural in the context of Weyl calculus. In this way we extend the main result of
\cite{MP2} to magnetic twisted actions on {\it general} abelian $C^*$-algebras.

\begin{corollary}
Assume that $B^{jk} \in C^\infty(\O)$. Let $I=[0,1]$. Then the family of injections
$$
\Bigl ( \S \bigl ( \X^* , C_0^{\infty}(\O) \bigr )_\R \hookrightarrow \BB^{B}_{\hb,\R} \Bigr )_{\hbar \in I}
$$
defines a strict deformation quantization, where the Poisson  structure in  $\mathcal{S}
\bigl ( \X^*,C^{\infty}_0(\O) \bigr )_\R$ is given by point-wise multiplication and the Poisson bracket
$\{ \cdot , \cdot \}_B$.
\end{corollary}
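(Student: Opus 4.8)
The plan is to obtain the statement from Theorem~\ref{pastur} by transporting the whole structure along the partial Fourier transform $\mathcal F_0:=1\otimes\mathcal F$, which, by \eqref{novela}, \eqref{bunica} and \eqref{bunika}, is exactly the intertwiner between the ``convolution'' realization on $\X$ and the ``symbol'' realization on $\X^*$. First I would record the relevant properties of $\mathcal F_0$: it is a complex-linear bijection $\mathcal S\bigl(\X;C_0^\infty(\O)\bigr)\to\mathcal S\bigl(\X^*;C_0^\infty(\O)\bigr)$ that carries $\diamond_0$ into pointwise multiplication, $\diamond^B_\hb$ into $\sharp^B_\hb$, the Poisson bracket $\{\cdot,\cdot\}^B$ into $\{\cdot,\cdot\}_B$, and the $\diamond^B_\hb$-involution into ordinary complex conjugation; moreover it extends to the $C^*$-isomorphism $\mathcal F:\mathfrak C^B_\hb\to\BB^B_\hb$ of Subsection~\ref{tvity} and is therefore isometric, $\snorm{\mathcal F\Phi}_{\BB^B_\hb}=\snorm{\Phi}^B_\hb$ for every $\hb\in(0,1]$, and it maps $\mathfrak C_0$ onto the commutative $C^*$-algebra $\BB^B_0\cong C_0(\X^*;\A)$. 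Since $\mathcal F_0$ intertwines the two involutions, it restricts to a \emph{real}-linear bijection of the selfadjoint parts $\mathcal S\bigl(\X;C_0^\infty(\O)\bigr)_\R\to\mathcal S\bigl(\X^*;C_0^\infty(\O)\bigr)_\R$ and, after extension, a real-linear isometric bijection $\mathfrak C^B_{\hb,\R}\to\BB^B_{\hb,\R}$.

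It then suffices to verify the axioms for the family of inclusions $\mathfrak Q_\hb:\mathcal S\bigl(\X^*;C_0^\infty(\O)\bigr)_\R\hookrightarrow\BB^B_{\hb,\R}$, and each requirement reduces to the corresponding statement on the $\X$-side through the elementary identity $\mathcal F\circ\iota_\hb=\mathfrak Q_\hb\circ\mathcal F_0$, where $\iota_\hb:\mathcal S\bigl(\X;C_0^\infty(\O)\bigr)_\R\hookrightarrow\mathfrak C^B_{\hb,\R}$ is the quantization map of Theorem~\ref{pastur}. Concretely: $\mathcal S\bigl(\X^*;C_0^\infty(\O)\bigr)_\R$ is a real Poisson algebra by Proposition~\ref{pesti}; it is densely contained in $\BB^B_{0,\R}$ because $\mathcal S\bigl(\X;C_0^\infty(\O)\bigr)_\R$ is dense in $\mathfrak C_{0,\R}$ and $\mathcal F$ is surjective; its image under $\mathfrak Q_\hb$ is a $^*$-subalgebra of $\BB^B_{\hb,\R}$ by the Corollary in Subsection~\ref{tvity} (equivalently, by Proposition~\ref{intrucul} pushed forward along $\mathcal F$); each $\mathfrak Q_\hb$ is injective and $\mathfrak Q_0$ is the inclusion. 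The Rieffel axiom holds because $\snorm{\mathfrak Q_\hb(f)}_{\BB^B_\hb}=\snorm{\mathcal F_0^{-1}f}^B_\hb$ is continuous in $\hb$ by the Rieffel axiom in Theorem~\ref{pastur}. For the von Neumann and Dirac axioms, put $\Phi:=\mathcal F_0^{-1}f$, $\Psi:=\mathcal F_0^{-1}g$ and apply the isometry $\mathcal F^{-1}$; using that $\mathcal F_0$ sends pointwise multiplication to $\diamond_0$ and $\{\cdot,\cdot\}_B$ to $\{\cdot,\cdot\}^B$, the expressions to be controlled become
\[
\Bnorm{\tfrac{1}{2}\bigl(\Phi\diamond^B_\hb\Psi+\Psi\diamond^B_\hb\Phi\bigr)-\Phi\diamond_0\Psi}^B_\hb
\qquad\text{and}\qquad
\Bnorm{\tfrac{i}{\hb}\bigl(\Phi\diamond^B_\hb\Psi-\Psi\diamond^B_\hb\Phi\bigr)-\{\Phi,\Psi\}^B}^B_\hb ,
\]
and both tend to $0$ as $\hb\to 0$ by the von Neumann and Dirac axioms already proved in Theorem~\ref{pastur} (or, directly, by Corollary~\ref{corollary:asymptotic_expansion_product}).

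All of these verifications are one-line translations of facts established earlier in the paper, so I do not expect a genuine difficulty; the statement is essentially a change of variables. The one place that deserves a moment's care is the bookkeeping of the $^*$-structures: the partial Fourier transform is complex-linear and does \emph{not} preserve real-valuedness of symbols, so one must explicitly invoke the fact that it intertwines the $\diamond^B_\hb$-involution with complex conjugation in order to be sure that it maps selfadjoint parts onto selfadjoint parts, and hence descends to the real-linear maps that enter the definition of a strict deformation quantization.
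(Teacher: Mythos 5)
Your proposal is correct and follows essentially the same route as the paper: transport the whole structure from $\mathcal{S}\bigl(\X;C_0^\infty(\O)\bigr)$ and $\mathfrak C^B_\hb$ to $\mathcal{S}\bigl(\X^*;C_0^\infty(\O)\bigr)$ and $\BB^B_\hb$ along the partial Fourier transform, which is a Poisson-algebra isomorphism extending to a $C^*$-isomorphism, and reduce each axiom to Theorem~\ref{pastur} (or Corollary~\ref{corollary:asymptotic_expansion_product}). Your explicit remark that the partial Fourier transform intertwines the $\diamond^B_\hb$-involution with complex conjugation, so that selfadjoint parts map onto selfadjoint parts, is a point the paper leaves implicit and is worth making.
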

\begin{proof}
The proof is straightforward from the Corollary~\ref{corollary:asymptotic_expansion_product} and the above theorem,
after noticing that the partial Fourier transform is an isomorphism between the Poisson algebras
$\mathcal{S} \bigl ( \mathscr{X}^* ; C_0^\infty(\O) \bigr )$ and $\mathcal{S} \bigl ( \mathscr{X} ; C_0^\infty(\O)
\bigr )$, and it extends to an isomorphisms between the $C^*$-algebras $\mathfrak B^B_\hb$ and $\mathfrak C^B_\hb$.
\end{proof}

\section*{Appendix: Estimates on the magnetic flux}\label{defrroq}

In the next lemma we gather some useful estimates on the scaled magnetic flux and its exponential, that are used in the
proofs of Propositions \ref{intrucul} and \ref{theorem:asymptotic_expansion_product}.
\begin{lemma}\label{mag_twistedXprod:lem:flux_estimates}
Assume the components of $B$ are in $BC^{\infty}(\Omega)$ and $\hbar \in (0,1]$.
\begin{enumerate}[(i)]
\item For all $a , \alpha \in \N^n$ there exist constants $C^j > 0$, $C^{jk} > 0$, $j , k \in \{ 1 ,
\ldots , n \}$, depending on $B^{jk}$ and its $\delta$-derivatives up to $(\abs{a} + \abs{\alpha})$th order, such that
\begin{equation*}
\bnorm{\partial_x^a \delta^{\alpha} \Lambda^B_{\hbar}(x,y)}_{\A} \leq \sum_{j = 1}^n C_1^j \, \sabs{y_j} +
\sum_{j , k = 1}^n C_2^{jk} \, \sabs{y_j} \, \sabs{x_k - y_k}.
\end{equation*}
\item For all $a , \alpha \in \N^n$ there exists a polynomial $p_{a \alpha}$ in $2n$ variables, with coefficients
$K_{bc} \geq 0$, such that

$$\bnorm{\partial_x^a \delta^{\alpha} e^{- i \hbar \Lambda^B_{\hbar}(x,y)}}_{\A} \leq p_{a \alpha} \bigl ( \sabs{y_1} ,
\ldots , \sabs{y_n} , \sabs{x_1 - y_1} , \ldots , \sabs{x_n - y_n} \bigr )$$
$$= \sum_{\sabs{b} + \sabs{c} \leq 2 (\sabs{a} + \sabs{\alpha})} K_{bc} \, \sabs{y^b} \, \sabs{(x - y)^c}.$$

\item The following estimates which are uniform in $\hbar$ and $\tau$ hold :
$$\bnorm{\Lambda^B_{\hbar \tau}(x,y)}_{\A} \leq \sum_{jk}\,\bnorm{B^{jk}}_{\A} \, \sabs{y_j} \, \sabs{x_k - y_k},$$

$$\Bnorm{\tfrac{\dd}{\dd \epsilon} \Lambda^B_{\epsilon}(x,y) \big \vert_{\epsilon = \tau \hbar}}_{\A}
\leq \sum_{jkl}\,\bnorm{\delta_l B^{jk}}_{\A} \, \sabs{y_j} \, \sabs{x_k - y_k} \, \bigl ( \sabs{x_l - y_l} +
\sabs{y_l} \bigr ),$$

$$\Bnorm{\tfrac{\dd^2}{\dd \epsilon^2} \Lambda^B_{\epsilon}(x,y) \big \vert_{\epsilon = \tau \hbar}}_{\A}$$
$$\leq \sum_{jklm}\bnorm{\delta_l \delta_m B^{jk}}_{\A} \sabs{y_j}  \sabs{x_k - y_k}\bigl
( \sabs{x_l - y_l} \sabs{x_m - y_m} + \sabs{y_l} \sabs{x_m - y_m} + \sabs{y_l}  \sabs{y_m} \bigr ).$$

\end{enumerate}
\end{lemma}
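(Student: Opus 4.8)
The plan is to read everything off the explicit parametrisation~(\ref{lam}) of the scaled flux,
$$
\Lambda^B_{\hbar}(x,y)=\sum_{j,k=1}^n y_j\,(x_k-y_k)\int_0^1\de t\int_0^t\de s\,\theta_{\hbar(s-1/2)x+\hbar(t-s)y}[B^{jk}],
$$
and to lean, over and over, on three elementary facts: each $\theta_z$ is an isometric $*$-automorphism of $\A=C_0(\O)$, so $\snorm{\theta_z[\psi]}_\A=\snorm{\psi}_\A$; differentiation along the action commutes with translation, $\delta_l\bigl(\theta_z[\psi]\bigr)=\theta_z[\delta_l\psi]$ and $\partial_{z_l}\bigl(\theta_z[\psi]\bigr)=\theta_z[\delta_l\psi]$, both valid on $BC^\infty(\O)$; and the $(s,t)$-integral runs over the simplex of area $\tfrac12$. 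Because $\hbar\le1$ and $\sabs{s-1/2},\sabs{t-1/2}\le\tfrac12$, every $\hbar$- or $(s,t)$-dependent scalar produced by differentiating is uniformly bounded, and differentiation under the integral sign is legitimate since $B^{jk}\in BC^\infty(\O)$ makes the integrand smooth with all derivatives bounded on the compact simplex (using here the point-wise/uniform identification recorded in the Remark after Lemma~\ref{rifi}).

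For part~(i) I would apply $\partial_x^a\delta^{\alpha}$ under the integral and expand by Leibniz. The operator $\delta^{\alpha}$ sees only the $\O$-dependence, hence merely replaces $B^{jk}$ by $\delta^{\alpha}B^{jk}$; the $x$-derivatives in $a$ are distributed between the scalar prefactor $y_j(x_k-y_k)$ and the translated field $\theta_{\hbar(s-1/2)x+\hbar(t-s)y}[B^{jk}]$. A factor $\partial_{x_l}$ landing on the integrand pulls out the bounded scalar $\hbar(s-1/2)$ and turns $B^{jk}$ into $\delta_l B^{jk}$, while the only derivative acting non-trivially on the prefactor is $\partial_{x_k}$, which deletes the factor $(x_k-y_k)$. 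Taking $\snorm{\cdot}_\A$, using the isometry of $\theta_z$ and the finite simplex measure, every surviving term is a binomial coefficient times a sup-norm of some $\delta$-derivative of a $B^{jk}$ of order $\le\sabs a+\sabs\alpha$, times either $\sabs{y_j}\,\sabs{x_k-y_k}$ (prefactor intact) or $\sabs{y_j}$ (after $\partial_{x_k}$ consumed the second factor); summing yields the asserted bound. For part~(ii) I would set $h:=e^{-i\hbar\Lambda^B_{\hbar}(x,y)}$, observe $\snorm{h}_\A=1$ since $h$ is $\T$-valued point-wise, and invoke the Fa\`a~di~Bruno formula: $\partial_x^a\delta^{\alpha}h$ equals $h$ times a finite sum of products $\prod_i\partial_x^{a_i}\delta^{\alpha_i}\bigl(-i\hbar\Lambda^B_{\hbar}\bigr)$ over partitions of $a$ and $\alpha$ into non-zero multi-indices, so the number of factors is at most $\sabs a+\sabs\alpha$. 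By part~(i) (and $\hbar\le1$) each factor is dominated in $\A$-norm by $\sum_jC_1^j\sabs{y_j}+\sum_{jk}C_2^{jk}\sabs{y_j}\,\sabs{x_k-y_k}$, a polynomial of degree $\le2$ with non-negative coefficients in the $2n$ quantities $\sabs{y_l},\sabs{x_l-y_l}$; the product of at most $\sabs a+\sabs\alpha$ such polynomials has degree $\le2(\sabs a+\sabs\alpha)$, and collecting monomials produces $p_{a\alpha}$ with the stated range $\sabs b+\sabs c\le2(\sabs a+\sabs\alpha)$.

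For part~(iii) the prefactor $y_j(x_k-y_k)$ is independent of $\epsilon$, so only the integrand is differentiated in $\epsilon$. With $w=w(s,t):=(s-1/2)x+(t-s)y$ one has $\tfrac{\de}{\de\epsilon}\theta_{\epsilon w}[\psi]=w_l\,\theta_{\epsilon w}[\delta_l\psi]$ and $\tfrac{\de^2}{\de\epsilon^2}\theta_{\epsilon w}[\psi]=w_l w_m\,\theta_{\epsilon w}[\delta_l\delta_m\psi]$; the rewriting $w=(s-1/2)(x-y)+(t-1/2)y$ gives $\sabs{w_l}\le\tfrac12\bigl(\sabs{x_l-y_l}+\sabs{y_l}\bigr)$. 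Taking $\snorm{\cdot}_\A$, using that $\theta_{\epsilon w}$ is isometric and that the simplex has area $\tfrac12$, and summing over the indices, each of the three estimates drops out at once; the constants displayed are deliberately generous (the sharp ones carry extra factors of $\tfrac12$, and in the second-derivative bound two of the four cross terms coalesce after relabelling $l\leftrightarrow m$, using $\delta_l\delta_m B^{jk}=\delta_m\delta_l B^{jk}$).

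The only point calling for any care is the justification of differentiation under the integral sign together with the point-wise/uniform interpretation of the $\delta$-derivatives; both are already supplied by the standing hypothesis $B^{jk}\in BC^\infty(\O)$ in conjunction with Lemma~\ref{rifi} and the Remark following it. Apart from that, I expect no real obstacle: the lemma is essentially a bookkeeping exercise — which is why it lives in the appendix — and the main thing to keep straight is the combinatorics of the Leibniz and Fa\`a~di~Bruno expansions.
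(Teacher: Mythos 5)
Your proposal is correct and takes essentially the same route as the paper: everything is read off the explicit parametrization \eqref{lam}, with (iii) done by exactly the computation the paper performs (differentiating $\theta_{\epsilon w}$ in $\epsilon$, isometry of the automorphisms, $\sabs{x_l}\le\sabs{x_l-y_l}+\sabs{y_l}$, and the simplex integral), while your Leibniz and Fa\`a di Bruno bookkeeping for (i) and (ii) merely spells out what the paper dismisses as following ``directly from the explicit parametrization.'' No gaps.
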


\begin{proof}
    (i) and (ii) follow directly from the explicit parametrization of
    the magnetic flux.
\medskip

\noindent
(iii) Throughout the proof we are going to use Einstein's summation convention, \ie\ repeated indices in a product are
    summed over from $1$ to $\mathrm{dim}(\X)$. From the explicit parametrization \eqref{lam}

\begin{equation*}
\Lambda^B_{\epsilon}(x,y) = y_j \, (x_k - y_k) \, \int_0^1 \dd t \int_0^t \dd s \, \theta_{\epsilon (s -
\nicefrac{1}{2}) x + \epsilon (t - s) y}[B^{jk}],
\end{equation*}

we compute first and second derivative of $\Lambda^B_{\epsilon}(x,y)$ with respect to $\epsilon$, using dominated
convergence to interchange differentiation with respect to the parameter $\epsilon$ and integration with respect to
$t$ and $s$,
$$
\frac{\dd}{\dd \epsilon} \Lambda^B_{\epsilon}(x,y) = y_j \, (x_k - y_k) \, \int_0^1 \dd t \int_0^t \dd s \,
\bigl ( s (x_l - y_l) + t y_l - \tfrac{1}{2} x_l \bigr ) \, \theta_{\epsilon (s - \nicefrac{1}{2}) x + \epsilon
(t - s) y}[\delta_l B^{jk}],$$

$$\frac{\dd^2}{\dd \epsilon^2} \Lambda^B_{\epsilon}(x,y) = y_j \, (x_k - y_k) \, \int_0^1 \dd t \int_0^t \dd s \,
\bigl ( s (x_l - y_l) + t y_l - \tfrac{1}{2} x_l \bigr )  \bigl ( s (x_m - y_m) + t y_m - \tfrac{1}{2} x_m \bigr )
\cdot $$
$$ \cdot
\theta_{\epsilon (s - \nicefrac{1}{2}) x + \epsilon (t - s) y}[\delta_l \delta_m B^{jk}].$$

The estimate on the flux itself follows from the fact that all the automorphisms $\theta_z$ are isometric in $\A$:
$$
\bnorm{\Lambda^B_{\tau \hbar}(x,y)}_{\A} \leq \sabs{y_j} \, \sabs{x_k - y_k} \, \int_0^1 \dd t \int_0^t \dd s \,
\bnorm{\theta_{\epsilon (s - \nicefrac{1}{2}) x + \epsilon (t - s) y}[B^{jk}]}_{\A}
\leq \bnorm{B^{jk}}_{\A} \, \sabs{y_j} \, \sabs{x_k - y_k}.$$

Using the triangle inequality to estimate $\sabs{x_l}$ from above by $\sabs{x_l - y_l} + \sabs{y_l}$, we get

$$\norm{\frac{\dd}{\dd \epsilon} \Lambda^B_{\epsilon}(x,y) \bigg \vert_{\epsilon = \tau \hbar}}_{\A} $$
$$\leq \sabs{y_j}  \sabs{x_k - y_k} \, \int_0^1 \dd t \int_0^t \dd s \, \bigl ( s \sabs{x_l - y_l} + t \sabs{y_l} +
\tfrac{1}{2} \sabs{x_l} \bigr ) \, \bnorm{\theta_{\tau \hbar (s - \nicefrac{1}{2}) x + \tau \hbar (t - s)
y}[\delta_l B^{jk}]}_{\A}$$
$$= \bnorm{\delta_l B^{jk}}_{\A} \sabs{y_j} \, \sabs{x_k - y_k} \int_0^1 \dd t \int_0^t \dd s
\bigl ( s \sabs{x_l - y_l} + t \sabs{y_l} + \tfrac{1}{2} \sabs{x_l} \bigr )$$

$$\leq \bnorm{\delta_l B^{jk}}_{\A} \, \sabs{y_j} \, \sabs{x_k - y_k} \, \bigl ( \sabs{x_l - y_l} + \sabs{y_l}
\bigr ).$$

In a similar fashion, we obtain the estimate for the second-order derivative,

$$\norm{\frac{\dd^2}{\dd \epsilon^2} \Lambda^B_{\epsilon}(x,y) \bigg \vert_{\epsilon = \tau \hbar}}_{\A}$$
$$\leq \sabs{y_j} \, \sabs{x_k - y_k} \, \int_0^1 \dd t \int_0^t \dd s \, \Babs{\bigl ( s (x_l - y_l) + t y_l -
\tfrac{1}{2} x_l \bigr ) \, \bigl ( s (x_m - y_m) + t y_m - \tfrac{1}{2} x_m \bigr )}\cdot $$
$$ \cdot\bnorm{\theta_{\tau \hbar (s - \nicefrac{1}{2}) x + \tau \hbar (t - s) y}[\delta_l \delta_m B^{jk}]}_{\A}$$
$$\leq \bnorm{\delta_l \delta_m B^{jk}}_{\A} \, \sabs{y_j} \, \sabs{x_k - y_k} \, \int_0^1 \dd t \int_0^t \dd s
\, \Bigl ( s^2 \sabs{x_l - y_l} \, \sabs{x_m - y_m} + 2 st \sabs{y_l} \, \sabs{x_m - y_m}
+ \Bigr .$$
$$ \Bigl . + s \sabs{x_l - y_l} \, \sabs{x_m} + t \sabs{y_l} \, \sabs{x_m} + t^2 \sabs{y_l} \, \sabs{y_m} + \tfrac{1}{4}
\sabs{x_l} \, \sabs{x_m} \Bigr )$$
$$\leq \bnorm{\delta_l \delta_m B^{jk}}_{\A} \, \sabs{y_j} \, \sabs{x_k - y_k} \, \bigl (
\sabs{x_l - y_l} \, \sabs{x_m - y_m} + \sabs{y_l} \, \sabs{x_m - y_m} + \sabs{y_l} \, \sabs{y_m}
\bigr ).$$

This finishes the proof.
\end{proof}

{\bf Acknowledgements:}
F. Belmonte is supported by {\it N\'ucleo Cientifico ICM P07-027-F "Mathematical
Theory of Quantum and Classical Magnetic Systems"}.
M. Lein is supported by Chilean Science Foundation {\it Fondecyt} under the Grant 1090008.
M. M\u antoiu is supported by {\it N\'ucleo Cientifico ICM P07-027-F "Mathematical
Theory of Quantum and Classical Magnetic Systems"} and by Chilean Science Foundation {\it Fondecyt} under the Grant
1085162. He thanks Serge Richard and Rafael Tiedra de Aldecoa for their interest in this project.
Part of this article has been written while the three authors were participating to the program
{\it Spectral and Dynamical Properties of Quantum Hamiltonians}.
They are grateful to the Centre Interfacultaire Bernoulli for the excellent atmosphere and conditions.


\end{document}